\newtheorem{theorem}{Theorem}
\newtheorem{lemma}{Lemma}
\newtheorem*{remark}{Remark}
\newtheorem{definition}{Definition}
\newtheorem{proposition}{Proposition}
\newcommand{\utag}[2]{\mathop{#2}\limits^{\text{(#1)}}}
\newcommand{\uref}[1]{(#1)}
\newcommand{\utagmodify}[2]{\mathop{#2}\limits^{(#1)}}
\long\def\symbolfootnote[#1]#2{\begingroup
\def\thefootnote{\fnsymbol{footnote}}\footnote[#1]{#2}\endgroup}
\definecolor{SkyBlue}{RGB}{240,255,255}
\definecolor{LavenderBlush}{RGB}{218,112,214}
\definecolor{BlanchedAlmond}{RGB}{255,235,205}
\newcommand{\cA}{\mathcal{A}}
\newcommand{\cD}{\mathcal{D}}
\newcommand{\cK}{\mathcal{K}}
\newcommand{\cL}{\mathcal{L}}
\newcommand{\cM}{\mathcal{M}}
\newcommand{\cN}{\mathcal{N}}
\newcommand{\cP}{\mathcal{P}}
\newcommand{\cR}{\mathcal{R}}
\newcommand{\cS}{\mathcal{S}}
\newcommand{\cT}{\mathcal{T}}
\newcommand{\cW}{\mathcal{W}}
\newcommand{\cZ}{\mathcal{Z}}
\newcommand{\sF}{\mathscr{F}}
\newcommand{\sS}{\mathscr{S}}
\newcommand{\sW}{\mathscr{W}}
\newcommand{\gettikzxy}[3]{%
  \tikz@scan@one@point\pgfutil@firstofone#1\relax
  \edef#2{\the\pgf@x}%
  \edef#3{\the\pgf@y}%
}
\begin{document}
\title{On Secure Exact-repair Regenerating Codes with a Single Pareto Optimal Point}

\author{
\IEEEauthorblockN{Fangwei Ye, Shiqiu Liu, Kenneth W.~Shum, and Raymond W.~Yeung 
} \\
\IEEEauthorblockA{
Institute of Network Coding \& Department of Information Engineering \\
The Chinese University of Hong Kong \\
Shatin, N.T., Hong Kong \\
Email: \{fwye, sqliu, wkshum, whyeung\}@ie.cuhk.edu.hk}	
}

\maketitle

\begin{abstract}
The problem of exact-repair regenerating codes against eavesdropping attack is studied. The eavesdropping model we consider is that the eavesdropper has the capability to observe the data involved in the repair of a subset of $\ell$ nodes. An $(n,k,d,\ell)$ secure exact-repair regenerating code is an $(n,k,d)$ exact-repair regenerating code that is secure under this eavesdropping model. 
It has been shown that for some parameters $(n,k,d,\ell)$, the associated optimal storage-bandwidth tradeoff curve, which has one corner point, can be determined. The focus of this paper is on characterizing such  parameters. We establish a lower bound $\hat{\ell}$ on the number of wiretap nodes, and show that this bound is tight for the case $k = d = n-1$. 
\end{abstract}
{\bf Keywords:} Secure exact-repair regenerating codes, distributed storage systems, information-theoretic security.

\section{Introduction}
\label{Sec:intro}
Distributed storage systems (DSSs) have been widely researched because of the rapid growth in applications such as data center and cloud network. For data reliability, some redundancy must be added to the system. 
In the pioneering study \cite{Dimakis2010_DSS}, Dimakis \textit{et~al}.\ introduced a new class of codes called \emph{regenerating codes}, which substantially reduce the amount of data that need to be downloaded during the repair process.
In \cite{Dimakis2010_DSS}, a fundamental tradeoff between the amount of data stored in each node and the repair bandwidth was shown under the notion of functional repair, where the new replacement nodes only maintain the reconstruction property, that is, any $k$ out of $n$ nodes can reconstruct the file but do not maintain an exact copy of the failed node. 
On the other hand, under the notion of exact repair introduced in \cite{Rashmi2009}, the replacement node is required to recover exactly the same content that was stored in the failed node. However, a full characterization of the storage-bandwidth tradeoff curve of exact-repair regenerating codes appears to be more difficult and still remains open, and many attempts have been made along this line \cite{Tian2014_JSAC,Kumar2014_Outer,Duu2014,Tan2015_isit,Prakash2015,Elyasi2015,Tian2015_Layed,Rashmi2011_PM}.

In this paper, we consider the problem of exact-repair regenerating codes with an additional security requirement. Information-theoretically secure regenerating codes were first introduced by Pawar \textit{et~al.} \cite{Pawar2011}, in which they provided an upper bound on the maximum amount of information that can be securely stored in a system. Secure exact-repair regenerating codes at two extreme points, namely, the \emph{minimum bandwidth regenerating (MBR)} and \emph{minimum storage regenerating (MSR)} points, have been intensively studied in
\cite{Pawar2011,Shah2011_SecMBR,Goparaju2013,Rawat2014b}.
On the other hand, the optimal storage-bandwidth tradeoff curve under secure repair constraint has been studied in 
\cite{Tandon2014_allerton,Ye_isit,Tandon16,Ye,Shao}. 
In particular, the results in \cite{Tandon16,Ye} showed that the MBR point is the only corner point of the optimal storage-bandwidth tradeoff curve (or simply tradeoff curve) for some $(n,k,d,\ell)$, which contrasts sharply with the problem without the security constraint. 
Owing to a structural property of the tradeoff curve, if it has a single corner point, then it is completely characterized by that single point. Thus for the aforementioned cases investigated in  \cite{Tandon16,Ye}, the tradeoff curve is completely characterized by the MBR point.
Subsequently, 
Shao \textit{et~al.} \cite{Shao} found the first case where the optimal storage-bandwidth tradeoff curve has multiple corner points, and obtained a sufficient condition on the number of wiretap nodes where the rate region can be determined by a single corner point.
In this paper, we establish a lower bound $\hat{\ell}$ on the number of wiretap nodes, such that the optimal storage-bandwidth tradeoff curve has a single corner point if $\ell \geq \hat{\ell}$. In particular, the lower bound for the case $k=d=n-1$ is tight, which means that the optimal storage-bandwidth tradeoff curve has a single corner point if and only if $\ell \geq \hat{\ell}$. 

The remaining of this paper is organized as follows. In Section~\ref{Sec:formulation}, we describe the formulation of the problem. We give a threshold $\hat{\ell}$ for the number of wiretap nodes for the case $k = d$ in Section~\ref{Sec:k_equal_d}, and results for $k<d$ are stated in Section~\ref{Sec:k_less_d}. We conclude the paper in Section~\ref{Sec:conclusion}.

\section{Problem statement and notations}
\label{Sec:formulation}
Following the setting in \cite{Dimakis2010_DSS}, we assume that there is a secure distributed storage system consisting of $n$ active storage nodes $\cN:= \{1,2,\ldots,n\}$ for storing a file $\sF$ of $B_s$ message symbols, and each node can store $\alpha$ symbols. When a node fails, a new replacement node with the same storage capacity $\alpha$ connects to any $d \ (\geq k)$ nodes chosen from the remaining $n-1$ nodes arbitrarily and downloads $\beta$ symbols from each of them to regenerate the failed node.  Moreover, any legitimate data collector can reconstruct the original file by connecting to any $k$ of the $n$ active nodes. 
We assume that there exists an eavesdropper Eve who is able to observe the repair data for a subset of nodes with cardinality $\ell \ (< k)$. It not only can observe the information stored in node $i$ but also all the data transmitted from the other $d$ helper nodes to repair the node $i$ when it fails.  

Let $M$ be the uniformly distributed random variable representing the file to be stored in the system. The support set of $M$ is denoted by $\cM$, and $B_s$ is used to denote the entropy of the message variable, \textit{i.e.,} $B_s=H(M)$. Let $Z$ be a random variable independent on the message variable $M$, called the key, that takes value in an alphabet $\cZ$ according to the uniform distribution. 
\begin{figure}[htbp]
\centering
\includegraphics[width=8cm]{./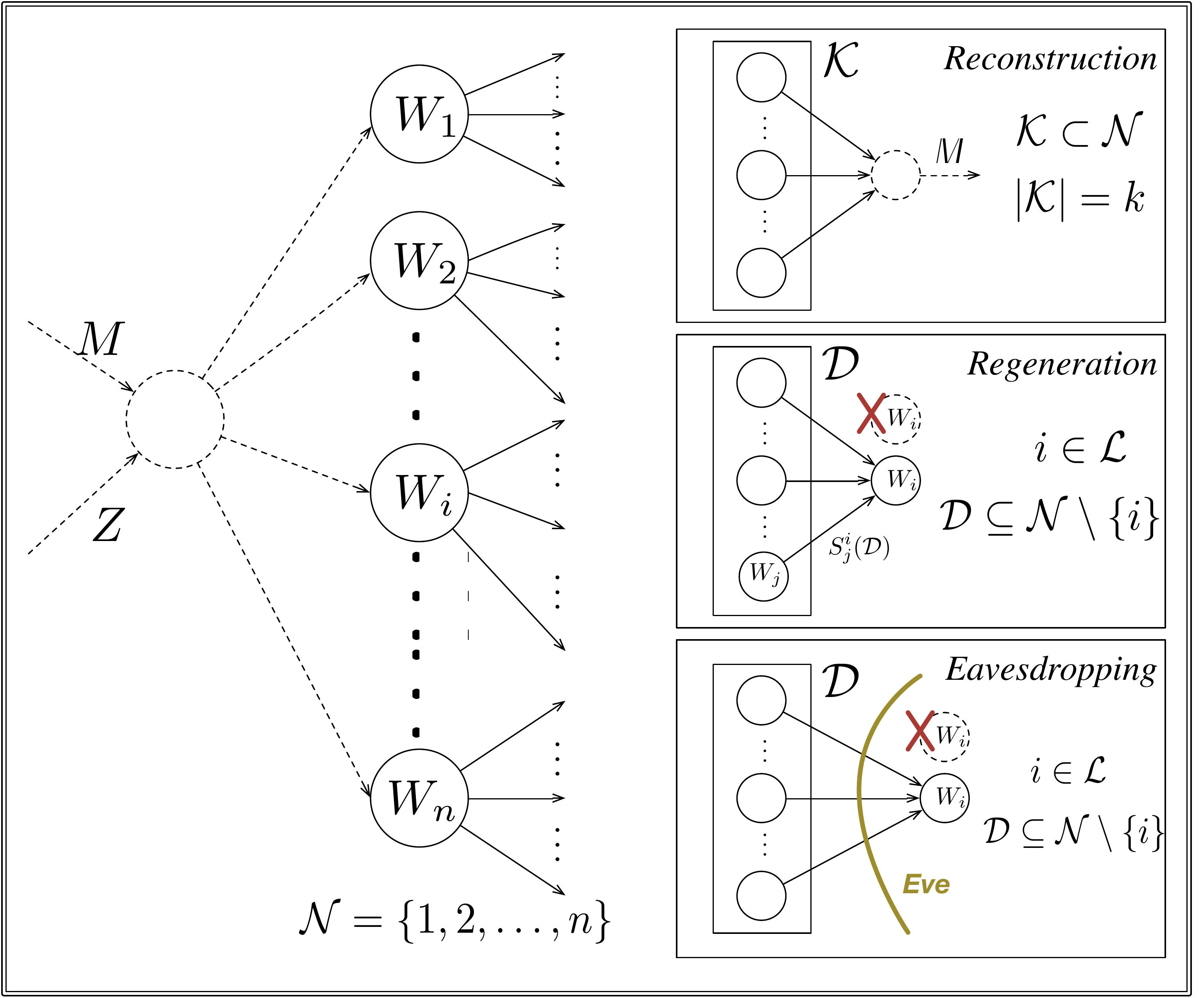}
\caption{System Model}
\label{pic1}
\end{figure}
As illustrated in Fig.~\ref{pic1}, we assume that the message and key are generated at an auxiliary source node and are directly available to all storage nodes in the system. 
For $i \in \cN$, let $W_i$ denote the data stored in the $i$-th node. $S_i^j(\cD)$ denotes the variable transmitted from node $i$ for repairing the node $j$ for a given set of helper nodes $\cD \subset \cN$, where $|\cD|= d$ and $i \in \cD$. 
$S_i^i(\cD)$ is defined as a constant for any possible $\cD$. Denote $\sW:=\{W_i:i \in \cN\}$ and $\sS:=\{S_i^j(\cD): j \in \cN, \cD \subseteq \cN \backslash \{j\}, |\cD|= d, i \in \cD\}$.
Each node has identical storage capacity $\alpha$ and limited transmission $\beta$ in repairing any single failure. Thus, assume without loss of generality that each $\cW_i$ takes value in a common alphabet $\cW$ and each $S_i^j(\cD)$ takes value in a common alphabet $\cS$, where $\alpha = \log |\cW|$ and $\beta = \log |\cS|$.   
For any set of wiretap nodes $\cL \subset \cN$ such that $|\cL|=\ell$, the information wiretapped by Eve is denoted by $Y_{\cL}$, where $Y_{\cL}$ is defined as $Y_{\cL}:=\{S_i^j(\cD): j \in \cL, \cD \subseteq \cN \backslash \{j\}, |\cD|= d, i \in \cD \}$. For any integer $i \leq j \leq n$, denote $[i]:=\{1,\ldots,i\}$, and $[i:j]:=\{i,\ldots,j\}$.

Next, we formally define a secure distributed storage system based on an exact-repair regenerating code formally. In the rest of the paper, when we refer to a secure distributed storage system, we always assume that it is based on an exact-repair regenerating code.
\begin{definition}
	An $(n,k,d,\ell)$ secure distributed storage system (SDSS) based on an exact-repair regenerating code consists of a set of encoding functions and decoding functions $(F,G,\Phi,\Psi)$, which can be described as follows.
\begin{itemize}
	\item Message encoding functions: $F =\{f_i:i \in \cN\}$ is a collection of message encoding functions, where  $f_i$ maps the message and key to the information  stored in the $i$-th node,
\[ f_i: \cM \times \cZ \rightarrow  \cW .\]
	\item Message decoding functions: $G =\{g_{\cK}: \cK \subset \cN, |\cK|= k \}$ consists of $\binom{n}{k}$ message decoding functions, where
\[g_{\cK}: \cW^{\cK}  \rightarrow \cM .\]
It maps the coded information stored in node $i$, $i \in \cK$.
	\item Repair encoding functions: $\Phi =\{\phi_{i,j,\cD}: j \in \cN, i \in \cD, \cD \subseteq \cN \backslash \{j\}, |\cD|= d \}$ consists of $n d \binom{n-1}{d} $ repair encoding functions, where
\[\phi_{i,j,\cD}: \cW \rightarrow \cS\]
maps the coded information in node $i$ to the information transmitted for repairing node $j$ for a given choice of helper nodes set $\cD$.
	\item Repair decoding functions: $\Psi = \{\psi_{j, \cD}: j \in \cN, \cD \subseteq \cN \backslash \{j\}, |\cD|= d\}$ consists of $n \binom{n-1}{d} $ repair decoding functions, where
\[\psi_{j, \cD}:   \cS^{\cD} \rightarrow \cW\]
maps the information from a set $\cD$ of help nodes to the information stored in the failed node.
\end{itemize}
\end{definition}

An $(n,k,d,\ell)$ secure distributed storage system is required to satisfy the following criteria:
\begin{itemize}
	\item (\emph{Reconstruction property}) the file can be retrieved from the contents stored in any $k$ out of $n$ storage nodes:
		\begin{equation}
		\label{eq:Reconstruction_condition}
			H(M|W_{\cK})=0, \forall \cK \subseteq \cN, |\cK|=k,
		\end{equation}
		where $W_{\cK}$ is defined as $W_{\cK}:=\{W_i : i \in \cK\}$.
	\item (\emph{Regeneration property}) any $d$ out of $n-1$ nodes can repair the failed $j$-th node:
		\begin{equation}
		\label{eq:temp_Regeneration_condition}
			H \left( W_j | S^j(\cD) \right) = 0, \forall \cD \subseteq \cN \backslash \{j\},  j \in \cN,
		\end{equation}
		where $S^j(\cD):=\{S_{i}^j(\cD): i \in \cD \}$. 
	\item (\emph{Security condition})	
		\begin{equation}
		\label{eq:temp_Secure_Condition}
			H(M|Y_{\cL})=H(M), \forall \cL \subseteq \cN.
		\end{equation} 
\end{itemize}

Any collection of encoding and decoding functions $(F,G,\Phi,\Psi)$ satisfying all these three criteria will naturally induce a \emph{secure exact-repair regenerating code} associated with the triple $(B_s, \alpha, \beta)$.  We can always assume that $B_s > 0$ because otherwise the code can not be used for storing any information.
Under this assumption, we can define the normalized pair $(\bar{\alpha},\bar{\beta})$ by
\[\bar{\alpha}:=\frac{\alpha}{B_s} ~~\text{and}~~  \bar{\beta}:=\frac{\beta}{B_s}.\]
A normalized pair $(\bar{\alpha},\bar{\beta})$ is also called an \emph{operating point}. We may use ``the pair'' or ``the point'' interchangeably in the following sections.  
With the normalized pair $(\bar{\alpha},\bar{\beta})$, we introduce the following definition.
\begin{definition}
 	A normalized pair $(\bar{\alpha},\bar{\beta})$ is achievable if there exists a secure exact-repair regenerating code that achieves $(\bar{\alpha},\bar{\beta})$. The collection of all achievable pairs $(\bar{\alpha},\bar{\beta})$ is referred to as the zero-error achievable region $ \cR_{n,k,d,\ell}$.
\end{definition}
It follows directly from the definition that if the pair $(\bar{\alpha},\bar{\beta})$ is achievable, then any pair $(\bar{\alpha}+\delta_1,\bar{\beta}+\delta_2)$ is also achievable, where $\delta_1, \delta_2 \geq 0$. Thus, the achievable region can be fully characterized if and only if the boundary is known. To be consistent with the terminology in the literature, we call the collection of points on the boundary the \emph{storage-bandwidth tradeoff curve}. 
For a given $(n,k,d,\ell)$ secure distributed storage system, its \emph{secrecy capacity} is defined as the maximum file size $C_s(\alpha,\beta)$ that can be stored in the system such that $(\alpha/B_s, \beta/B_s)$ is achievable, \textit{i.e.},
\begin{equation}
	C_s(\alpha,\beta):= \sup \left\{B_s: (\alpha/B_s,\beta/B_s) \in \cR_{n,k,d,\ell}\right\}.
\end{equation}
Clearly, determining the secrecy capacity for any given $\alpha$ and $\beta$ is equivalent to characterizing the storage-bandwidth tradeoff curve.

In \cite{Shah2011_SecMBR}, the following point is proved to be achievable for any $(n,k,d,\ell)$-SDSS:
\begin{equation}
 \left(   \frac{d}{\Gamma_{k,d,\ell} }, \frac{1}{\Gamma_{k,d,\ell}}   \right) \in  \cR_{n,k,d,\ell},
\end{equation}
where $\Gamma_{k,d,\ell}:=\sum_{i=\ell}^{k-1} (d-i)$.
For notational simplicity, denote
\begin{equation}
\label{eq:SRK}
 	(\hat{\alpha},\hat{\beta}): = \left(  \frac{d}{\Gamma_{k,d,\ell} }, \frac{1}{\Gamma_{k,d,\ell} }  \right).
\end{equation}

An interesting finding in \cite{Tandon16} and \cite{Ye} is that for some cases,
the storage-bandwidth tradeoff curve under the security condition is completely characterized by the single corner point specified in \eqref{eq:SRK}, \textit{i.e.}, the achievable rate region is given exactly by
\begin{equation}
\label{eq:rate_region}
	\cR_{n,k,d,\ell}= \left\{(\bar{\alpha},\bar{\beta}): \bar{\alpha} \geq \hat{\alpha}, \bar{\beta} \geq \hat{\beta} \right\} .
\end{equation}
\begin{remark}
We will prove in Appendix~\ref{app:optimality} that the point as defined in \eqref{eq:SRK} must be on the optimal tradeoff curve. Therefore, if the optimal tradeoff curve has only one corner point, then it must be $(\hat{\alpha},\hat{\beta})$.
\end{remark}
Subsequently in \cite{Shao}, the first case that the storage-bandwidth tradeoff curve has multiple corner points was found, and a sufficient condition for the number of wiretap nodes was given for the storage-bandwidth tradeoff curve of an SDSS to have a single corner point. In this paper, we will focus on finding parameters $(n,k,d,\ell)$ such that the tradeoff curve has this behavior.

In the remaining of this paper, we only consider the case that $d = n - 1$. Since any $(n' > d + 1,k,d,\ell)$ system has an $(n = d + 1,k, d, \ell)$ sub-system. If the sub-system satisfies that $\bar{\alpha} \geq \hat{\alpha}, \bar{\beta} \geq \hat{\beta}$, then the $(n'> d+1,k,d,\ell)$-SDSS must satisfy the same constraints. Moreover, $(\hat{\alpha},\hat{\beta})$ is also achievable for $(n'> d+1,k,d,\ell)$, and hence if the tradeoff curve for $(n = d + 1,k, d, \ell)$ has a single corner point, then the tradeoff curve for $(n' > d + 1,k, d, \ell)$ must also have this behavior. Therefore, all results obtained under the setting $d=n-1$ in this paper also hold for $n'>d+1$.
Under this setting, we can largely simplify our aforementioned notations. 
When repairing the failed node,
all the remaining nodes are helper nodes. Therefore we can drop $\cD$ in the notations $S_i^j(\cD)$ and $S^j(\cD)$. Specifically, we will write $S_i^j(\cD)$ as $S_i^j$ and write $S^j(\cD)$ as $S^j$ because $\cD=\cN \backslash \{j\}$ is implicit. Denote $S^{\cL}:=\{S^j:j \in \cL\}$, and obviously $S^{\cL}$ is identical to $Y_{\cL}$. 
Then, the regeneration property can be written as
\begin{equation}
\label{eq:Regeneration_condition}
	H \left( W_j | S^j \right) = 0, \forall   j \in \cN.
\end{equation}
Similarly, we can rewrite the security condition as	
\begin{equation}
\label{eq:Secure_Condition}
	H(M|S^{\cL})=H(M), \forall \cL \subseteq \cN.
\end{equation}

We follow the discussion for
symmetrical regenerating codes in \cite{Tian2014_JSAC}. 
A code is said to be a \emph{entropy-symmetrical regenerating code} (or simply symmetrical regenerating code) if for any $X_{\cA} \subseteq \sW \cup \sS$ and any permutation $\pi$ on $\cN$, we have $H(X_{\cA}) = H\left(\pi(X_{\cA})\right)$, where
\[\pi(X_{\cA}) := \{\pi(X_{i}): i \in \cA \},\]
and
\[
\pi(X_{i}) := 
\begin{cases} 
      W_{\pi(i)}, & \text{if} \ X_i=W_i, \\
      S_{\pi(i)}^{\pi(j)}, &  \text{if} \ X_i = S_i^j. 
 \end{cases} 
\]
It has been shown in \cite{Ye} that assuming that the secure exact-repair regenerating code is symmetrical
does not incur any loss of generality when we consider $\cR_{n,k,d,\ell}$. Therefore, we may invoke this symmetrical assumption in our argument without explicitly mentioning it. Under this setting, we can let $H(W_i) = \alpha$ and $H(S_i^j)=\beta$.  
For notational simplicity, let us define 
\begin{equation}
	\cP:=\left\{(k,d,\ell):\cR_{n=d+1,k,d,\ell}= \left\{(\bar{\alpha},\bar{\beta}): \bar{\alpha} \geq \hat{\alpha}, \bar{\beta} \geq \hat{\beta} \right\} \right\}.
\end{equation}
\begin{remark}
Since $(k, d, \ell = 0) \notin \cP$ for $k \geq 2$ and $(k=1, d, \ell = 0) \in \cP$ (which can be seen by considering the repetition code), we assume that $\ell \geq 1$ in this paper.
\end{remark}

Now, consider any subset $\cT$ of $\sW \cup \sS$ such that $H(W_{\cK} | \cT)=0$. Then by the reconstruction property \eqref{eq:Reconstruction_condition} and security constraint \eqref{eq:Secure_Condition}, we can obtain an upper bound on $B_s$ as follows:
\begin{align}
  B_s & =  H (M) \nonumber \\
      & =  H (M | S^{\cL} ) - H (M | \cT, S^{\cL} ) \nonumber \\
      & =  I (M ; \cT | S^{\cL}  ) \nonumber \\
      & \leq  H (\cT | S^{\cL} ).\label{eq:SC_Outer}
\end{align}
By letting $\cT=\left\{S_i^j: j < i \leq n, 1 \leq j \leq k \right\}$ and $\cL=\{1,\ldots,\ell\}$, we can obtain the upper bound in \cite{Pawar2011}:
\begin{equation}
\label{eq:betabound}
	B_s \leq  \sum_{i=\ell}^{k-1} (d-i) \beta,
\end{equation}
which can also be written as $\bar{\beta} \geq \hat{\beta}$.

Since $\bar{\beta} \geq \hat{\beta}$ and $(\hat{\alpha},\hat{\beta}) \in  \cR_{n,k,d,\ell}$ for any $(n,k,d,\ell)$-SDSS, the triple $(k,d,\ell) \in \cP$ if and only if $\bar{\alpha} \geq \hat{\alpha}$, or equivalently
\begin{equation}
\label{eq:alphabound}
	B_s \leq \frac{\Gamma_{k,d,\ell}}{d} \alpha.
\end{equation}
Therefore, we only need to prove that $B_s \leq \frac{\Gamma_{k,d,\ell}}{d} \alpha$ to conclude that
 $(k,d,\ell) \in \cP$.

\section{Threshold for $k = d$}
\label{Sec:k_equal_d}
We will establish in the next theorem a threshold $\hat{\ell}$ for the number of wiretap nodes for those systems whose optimal tradeoff curve has a single corner.
\begin{theorem}
\label{Thm:k_equal_d}
	 For any fixed $d$, the triple $(k=d, d, \ell) \in \cP$ if and only if $\ell \geq \hat{\ell} := \lceil\frac{1}{4}(d-1)\rceil$.
\end{theorem}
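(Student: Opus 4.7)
The theorem is an ``if and only if'' about the threshold $\hat\ell$, so two directions are needed.

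\textbf{Sufficiency (the ``if'' direction, $\ell\geq\hat\ell\Rightarrow (d,d,\ell)\in\cP$).} In view of \eqref{eq:alphabound} it is enough to prove
\[
B_s \;\leq\; \frac{\Gamma_{d,d,\ell}}{d}\,\alpha \;=\;\frac{(d-\ell)(d-\ell+1)}{2d}\,\alpha \qquad \text{whenever }\ell\geq\lceil (d-1)/4\rceil.
\]
My plan is to start from the generic converse $B_s\leq H(\cT|S^{\cL})$ derived in the excerpt, instantiated with $\cT=W_{[d]}$ and $\cL=[\ell]\subseteq [d]$. Since the regeneration property forces $H(W_j|S^j)=0$ for every $j\in\cL$, the $W_{[\ell]}$ summands drop out and one obtains
\[
B_s \;\leq\; H\bigl(W_{[\ell+1:d]}\,\big|\,S^{[\ell]}\bigr).
\]
Write $r:=d-\ell$. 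The game is to bound the right-hand side by $\frac{r(r+1)}{2d}\alpha$. I would expand this conditional entropy by the chain rule into $r$ symmetric incremental terms $t_m=H(W_j\,|\,W_\cA,S^{[\ell]})$ with $|\cA|=m$ and $\cA\cap\cL=\emptyset$, $j\notin\cA\cup\cL$, and then interlace three families of inequalities: (i) the Shannon monotonicity and submodularity relations on $\{t_m\}$, combined with the entropy-symmetry assumption so that the $t_m$ only depend on $m$; (ii) the repair relations, which allow a $t_m$ term to be traded for sums of $\beta$-type entropies conditioned on larger sets; and (iii) the Pawar-type $\beta$-bound $B_s\leq \Gamma_{d,d,\ell}\beta$ already in the excerpt, used to cancel out the $\beta$ variable at the end. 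The target coefficient $\frac{r(r+1)}{2d}$ appears naturally because the linear combination collapses to a quadratic in $r$; its feasibility reduces to the sign condition $4\ell\geq d-1$, which is exactly $\ell\geq\hat\ell$. This explains the appearance of $\lceil (d-1)/4\rceil$ as the sharp threshold.

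\textbf{Necessity (the ``only if'' direction, $\ell<\hat\ell\Rightarrow (d,d,\ell)\notin\cP$).} For the reverse direction I would exhibit an explicit $(d+1,d,d,\ell)$ secure exact-repair regenerating code whose operating point $(\bar\alpha^{*},\bar\beta^{*})$ has $\bar\alpha^{*}<\hat\alpha$, thereby certifying a second corner of the tradeoff curve. A natural candidate is a product-matrix MSR construction in the style of \cite{Rashmi2011_PM}, combined with $\ell$ layers of Shamir secret-sharing as in \cite{Shah2011_SecMBR,Goparaju2013,Rawat2014b}. Computing $\bar\alpha^{*}$ explicitly and comparing with $\hat\alpha=2d/(r(r+1))$, the strict inequality $\bar\alpha^{*}<\hat\alpha$ reduces to an arithmetic condition in $d$ and $\ell$ that is valid precisely in the complementary regime $\ell\leq\hat\ell-1$, closing the argument.

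\textbf{Main obstacle.} The sufficiency direction is the technical heart. The trivial chain-rule bound $H(W_{[\ell+1:d]}|S^{[\ell]})\leq r\alpha$ is weaker than the target by the multiplicative factor $(r+1)/(2d)$, so the proof must squeeze out this gap by exploiting the tight interplay between the repair constraints, the security constraint, and the symmetry/subadditivity of entropy. Pinning down the threshold to exactly $\lceil (d-1)/4\rceil$ rather than some looser constant---which is what separates our result from the sufficient condition in \cite{Shao}---requires the linear combination of Shannon inequalities to be tight at $r=3\ell+1$. Identifying this precise combination is the main difficulty; the achievability direction, once a suitable MSR-wrapped construction is fixed, is largely a calculation.
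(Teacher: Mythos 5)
Your sufficiency sketch and the paper's proof share only the uncontroversial opening step, namely instantiating the generic converse \eqref{eq:SC_Outer} with $\cT=W_{[d]}$, $\cL=[\ell]$ to get $B_s\leq H(W_{[\ell+1:d]}|S^{[\ell]})$. From there you wave at ``monotonicity and submodularity of entropy'', ``trade $t_m$ for $\beta$-type entropies'', and ``cancel out $\beta$ with the Pawar bound'', but you do not exhibit the crucial inequality that makes the factor $(d-\ell)(d-\ell+1)/(2d)$ appear. The paper's Lemma~\ref{lem:k_eq_d} is the heart of the argument and it is not a routine submodularity chain: it relies on the exchange bound $I(W_i;W_{[t+1:i-1]}|S^{[t]})\geq I(S_i^{[t+1:i-1]};S_{[t+1:i-1]}^i|S^{[t]})$, on Han's inequality applied to the repair symbols, and, most distinctively, on the identity $H(W_{[t+1:k]}|S^{[t]})=H(S^{[t+1:k]}|S^{[t]})=H(S_{[t+1:n]}|S^{[t]})$ (Proposition~\ref{pro:k_eq_d_1}, valid precisely because $k=d$), which lets one write $3H(W_{[t+1:k]}|S^{[t]})$ on the left and produce the $\tfrac{d+1-t}{3}$ coefficients. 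None of this appears in your sketch, nor does the carefully designed convex combination $\mu_t$ of Lemma~\ref{lem:k_eq_d} indexed by $t$, nor the induction on $\ell$ with the auxiliary bound of Lemma~\ref{lem:k_eq_d_revised}. As written, the proposal does not constitute a proof or even a checkable outline of one.

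The necessity direction has a more concrete flaw. You propose to certify $\bar\alpha<\hat\alpha$ via a product-matrix MSR code wrapped in Shamir-style secret sharing. But for $k=d$ the MSR point has $\beta=\alpha/(d-k+1)=\alpha$, so the repair of a single node downloads $d\beta=d\alpha$ symbols out of a total storage of $(d+1)\alpha$: under the repair-eavesdropping model studied here, observing one repair exposes essentially the entire system, and the MSR secrecy capacity degenerates (so $\bar\alpha=\alpha/B_s$ blows up rather than undercuts $\hat\alpha$). Your heuristic that ``MSR minimizes storage so $\bar\alpha$ should be small'' conflates unnormalized and normalized storage. The paper instead uses the layered duplicated-combination-block design of Tian \cite{Tian2015_Layed} with block size $r=3$, secured via the standard coset-coding transformation of \cite{CY2011_Sec}, which gives $\alpha=\binom{n-1}{2}$, $\beta=n-2$, $B_s=2\binom{n-\ell}{3}$, and the explicit identity $\bar\alpha-\hat\alpha=\frac{(4\ell+2-n)(n-1)}{2(n-\ell)(n-\ell-1)(n-\ell-2)}$, which is negative exactly when $4\ell<d-1$. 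You would need to abandon the MSR route entirely and use an interior-point code of this kind.
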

\begin{remark}
	It was shown in \cite{Shao} that if $\ell \geq \ell^{\star}:=\lceil (\sqrt{d}-1)^2\rceil$, then $(k=d,d,\ell) \in \cP$. When $d$ is large, $\hat{\ell} \approx \ell^{\star}/4$. Thus our bound not only is a significant improvement over the previous bound but also tight.
\end{remark}
In the remaining of this section, we will prove Theorem~\ref{Thm:k_equal_d}. We will invoke the setting $k = d = n - 1$ from time to time without explicitly mentioning it. Before presenting the details, we outline the proof here. 

In Subsection~\ref{subsec:k_eq_d_smaller}, we will show that if $\ell < \hat{\ell}$, then there exists one achievable point $(\bar{\alpha},\bar{\beta})$ such that $\bar{\alpha}<\hat{\alpha}$, which implies that $(k,d,\ell) \notin \cP$. 
The proof of the achievability of this point is largely borrowed from a code construction in \cite{Tian2015_Layed}. 

To prove that $(k,d,\ell) \in \cP$ for $\ell \geq \hat{\ell}$, we only need to show \eqref{eq:alphabound} for $\ell \geq \hat{\ell}$. By letting $\cT= W_{[k]}$ and $\cL=\{1,\ldots,\ell\}$ in \eqref{eq:SC_Outer}, we see that the secrecy capacity $B_s$ is upper bounded by  
\begin{align}
	B_s & \leq  H\left(W_{[k]}|S^{[\ell]} \right) = H\left(W_{[\ell+1:k]}|S^{[\ell]} \right). \label{eq:k_eq_d_SC} 
\end{align}
Thus, it is sufficient for us to prove that
\[H\left(W_{[\ell+1:k]}|S^{[\ell]} \right) \leq \frac{\Gamma_{k,d,\ell}}{d} \alpha,\] for $\ell \geq \hat{\ell}$. This will be proved by induction on $\ell$ in Subsection~\ref{subsec:k_eq_d_equal}.

\subsection{$\ell < \hat{\ell}$ implies that $(k,d,\ell) \notin \cP$}
\label{subsec:k_eq_d_smaller}
We first roughly review the code construction for $(n,k,d,\ell=0)$ exact-repair regenerating codes with $k=d=n-1$ in \cite{Tian2015_Layed}, where the code construction is based on duplicated combination block design. 
Considering a block design over the domain (node index) $\cN=\{1,\ldots,n\}$, the design there can be viewed as an exhaustive list of all $r$-combinations ($n \geq r$) of $\cN$. Each block forms a $(r,r-1)$ erasure code, and symbols in different blocks are independent. 

In particular, we consider block size $r=3$ in this subsection. We have a design $C(r,n)=\{B_1,\ldots,B_m\}$, where each block $B_i$ is a unique $3$-subset of $\cN$ and $m=\binom{n}{3}$.
For each $3$-subset $B_i=\{b_{i_1}, b_{i_2}, b_{i_3}\}$, let $X_i$ and $Y_i$ be independent random variables uniformly on a sufficient large field $\mathbb{F}$, and we consider a corresponding vector for each $B_i$ such that $\bm{b_i} = (b_{i_1}, b_{i_2}, b_{i_3})$ where $1 \leq  b_{i_1}< b_{i_2}< b_{i_3} \leq n$. Then, the encoding is as the following:
\begin{itemize}
	\item $X_{i}$ is stored in node $b_{i_1}$;
	\item $Y_{i}$ is stored in node $b_{i_2}$;
	\item $X_{i}+Y_{i}$ is stored in node $b_{i_3}$.
\end{itemize}

Let $X_i$ and $X_j$ ($Y_i$ and $Y_j$) be independent random variables for $i \neq j$. We can see that in this construction, 
\[\alpha = \binom{n-1}{2}, ~~ \beta = n-2, ~~ B_s = 2\binom{n}{3}, \]
and hence
\[\left( \bar{\alpha}, \bar{\beta} \right)= \left(\frac{\binom{n-1}{2}}{2\binom{n}{3}}, \frac{n-2}{2\binom{n}{3}} \right)  \in \cR_{d+1,d,d,0}.\]
See more details in \cite{Tian2015_Layed}.

Therefore, following the same argument in \cite{CY2011_Sec}, we know  
that there exists an $(n,k=n-1,d=n-1,\ell)$ secure exact-repair regenerating code with
$\alpha = \binom{n-1}{2}$, $\beta =n-2$ and $B_s = 2\binom{n-\ell}{3}$ if the field size is large enough, and so
\[\left( \bar{\alpha}, \bar{\beta} \right)= \left(\frac{\binom{n-1}{2}}{2\binom{n-\ell}{3}}, \frac{n-2}{2\binom{n-\ell}{3}} \right) \in \cR_{d+1,d,d,\ell}.\]

If an integer $\ell$ satisfying that $\ell < \hat{\ell} = \left\lceil  \frac{1}{4}(d-1) \right\rceil$, we have $\ell< \frac{1}{4}(d-1) = \frac{1}{4}(n-2)$. As such, we have
\[\bar{\alpha} - \hat{\alpha}  =  \frac{\binom{n-1}{2}}{2\binom{n-\ell}{3}} -  \frac{n-1}{\binom{n-\ell}{2}} 
= \frac{ (4\ell+2-n)(n-1)  }{2(n-\ell)(n-\ell-1)(n-\ell-2)} < 0.\]
Therefore, we know that if $\ell < \hat{\ell}$, there exists one achievable point $\left( \bar{\alpha}, \bar{\beta} \right)$ such that $\bar{\alpha} - \hat{\alpha} < 0$, which substantiates that if $\ell < \hat{\ell}$ then $(k,d, \ell) \notin \cP$.

\subsection{$\ell \geq \hat{\ell}$ implies that $(k,d,\ell) \in \cP$}
\label{subsec:k_eq_d_equal}
In this subsection, we will show that 
\[H \left(W_{[\ell+1:k]}|S^{[\ell]}\right) \leq \frac{\Gamma_{k,d,\ell}}{d}\alpha,\]
for $\ell \geq \hat{\ell}$ by induction.
For any subset $\cA \subseteq \cN$, denote $S_i^{\cA}:=\{S_i^j: j \in \cA\}$, $S^i_{\cA}:=\{S_j^i: j \in \cA\}$ and $S_{\cA}:=\{S_i^j:i,j \in \cA, i>j\}$.

\begin{proposition}
\label{pro:k_eq_d_1}
	For $k=d$, if $\cT \subseteq \sW \cup \sS$ satisfies $H\left(W_{[k]}|\cT\right) = 0$, then 
	\begin{equation}
		H\left(\cT \right) = H\left(W_{[k]} \right).
	\end{equation}
\end{proposition}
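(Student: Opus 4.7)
The proposition asserts two inequalities bundled into an equality: $H(\cT)\le H(W_{[k]})$ (immediate from the hypothesis $H(W_{[k]}|\cT)=0$) and $H(\cT)\ge H(W_{[k]})$. Equivalently, since the hypothesis already says $\cT$ determines $W_{[k]}$, it suffices to prove the reverse: $H(\cT\mid W_{[k]})=0$. My plan is therefore to show the stronger structural fact that, under the regime $k=d=n-1$, every symbol in $\sW\cup\sS$ is a deterministic function of $W_{[k]}$.

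First I would note that by the definition of the repair encoding function $\phi_{i,j,\cD}:\cW\to\cS$, every $S_i^j\in\sS$ is a function of the single storage variable $W_i$. Combined with the trivial fact that each $W_i\in\sW$ is a function of $W_{[n]}=\sW$, this gives $H(\cT\mid\sW)=0$ for any $\cT\subseteq\sW\cup\sS$.

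Next I would reduce $\sW$ to $W_{[k]}$. Because $k=d=n-1$, the only index outside $[k]$ is $n$. By the regeneration property \eqref{eq:Regeneration_condition}, $W_n$ is determined by $S^n=\{S_i^n:i\in[n-1]\}$; but as observed in the previous step, each $S_i^n$ is a function of $W_i$, hence of $W_{[n-1]}=W_{[k]}$. Therefore $H(W_n\mid W_{[k]})=0$, and so $H(\sW\mid W_{[k]})=0$.

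Chaining the two reductions yields $H(\cT\mid W_{[k]})\le H(\cT,\sW\mid W_{[k]})=H(\sW\mid W_{[k]})+H(\cT\mid\sW,W_{[k]})=0$. Combined with the hypothesis $H(W_{[k]}\mid\cT)=0$, the chain rule applied in two ways to $H(\cT,W_{[k]})$ gives $H(\cT)=H(W_{[k]})$, as required. There is no real obstacle here; the only subtlety worth emphasising is that the argument relies crucially on $k=d=n-1$, since otherwise $W_n$ (and any other node outside $[k]$) need not be recoverable from $W_{[k]}$ alone, and the reduction from $\sW$ to $W_{[k]}$ would fail.
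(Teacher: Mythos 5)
Your proof is correct and follows essentially the same route as the paper's: both halves reduce to showing $H(\cT\mid W_{[k]})=0$, i.e.\ that $W_{[k]}$ determines all of $\sW\cup\sS$ when $k=d=n-1$, which you then combine with the hypothesis $H(W_{[k]}\mid\cT)=0$ via the chain rule. The only difference is that the paper simply asserts "since $k=d$, $W_{[k]}$ can determine any subsets of $\sW\cup\sS$," while you make explicit the underlying argument (each $S_i^j$ is a deterministic function of $W_i$, and $W_n$ is recovered from $S^n$, which depends only on $W_{[n-1]}=W_{[k]}$).
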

\begin{proof}
	Since $k=d$, $W_{[k]}$ can determine any subsets of $\sW \cup \sS$, and so $H\left(W_{[k]} \right) \geq H\left(\cT \right)$. From $H\left(W_{[k]}|\cT\right) = 0$, we have $H\left(W_{[k]} \right) \leq H\left(\cT \right)$, and hence $H\left(\cT \right) = H\left(W_{[k]} \right)$.
\end{proof}

The following lemma gives a class of upper bounds on $H \left(W_{[\ell+1:k]}|S^{[\ell]}\right)$.

\begin{lemma}
\label{lem:k_eq_d}
For any $(n=d+1,k = d, d,\ell)$ secure exact-repair regenerating codes, we have
\begin{equation}
\label{eq:lem_k_eq_d}
		H \left(W_{[\ell+1:k]}|S^{[\ell]}\right) \leq \frac{d+1 - t }{3} \alpha - \frac{d+1 - t }{3} H\left(S_n^{[t]}\right) + \frac{d+1 - t}{6} H\left(S^{t+1}|S^{[t]}\right)  - \sum_{i=t+1}^{\ell} H\left(S^i|S^{[i-1]}\right),
\end{equation}
for any $t = 0,\ldots,\ell-1$.
\end{lemma}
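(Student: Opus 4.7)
My plan is first to reduce the lemma to a cleaner, $\ell$-independent inequality, and then to prove that inequality by a symmetric Shearer-type covering argument on the set of repair symbols.

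For the reduction, I would use that $W_{[\ell]}$ is determined by $S^{[\ell]}$ (by the regeneration property \eqref{eq:Regeneration_condition}), so $H(W_{[\ell+1:k]}|S^{[\ell]}) = H(W_{[k]}|S^{[\ell]}) \leq H(W_{[n]}|S^{[\ell]}) = H(W_{[n]}) - H(S^{[\ell]})$, where the last equality uses that $S^{[\ell]}$ is a function of $W_{[n]}$. Expanding $H(S^{[\ell]}) = H(S^{[t]}) + \sum_{i=t+1}^{\ell} H(S^i|S^{[i-1]})$ and rearranging, the lemma would follow once we establish, for every $t \in \{0,1,\ldots,\ell-1\}$,
\begin{equation*}
H(W_{[n]}|S^{[t]}) \;\leq\; \frac{d+1-t}{3}\bigl(\alpha - H(S_n^{[t]})\bigr) + \frac{d+1-t}{6}\,H(S^{t+1}|S^{[t]}).
\end{equation*}

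Writing $m := d+1-t$ and using that $W_{[t]}$ is determined by $S^{[t]}$, the LHS equals $H(W_{[t+1:n]}|S^{[t]})$. Under the symmetry assumption, $H(W_j|S_j^{[t]}) = \alpha - H(S_n^{[t]})$ and $H(S^j|S^{[t]}) = H(S^{t+1}|S^{[t]})$ for each $j \in [t+1:n]$, so the displayed inequality is equivalent to the per-node averaged form
\begin{equation*}
6\,H(W_{[t+1:n]}|S^{[t]}) \;\leq\; 2\sum_{j=t+1}^{n} H(W_j|S_j^{[t]}) + \sum_{j=t+1}^{n} H(S^j|S^{[t]}).
\end{equation*}
To prove this, I would work with the repair-symbol set $\Xi := \{S_i^j : i, j \in [t+1:n],\ i \neq j\}$ and exploit two observations: (i) $H(\Xi|S^{[t]}) = H(W_{[t+1:n]}|S^{[t]})$, since $\Xi$ and $W_{[t+1:n]}$ are mutually determined given $S^{[t]}$; and (ii) every symbol in $\Xi$ lies in exactly one ``row'' $\{S_i^{j'} : j' \in [t+1:n]\setminus\{i\}\}$ (a function of $W_i$) and exactly one ``column'' $\{S_{i'}^j : i' \in [t+1:n]\setminus\{j\}\} \subseteq S^j$. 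A carefully weighted combination of the row-cover and column-cover Shearer bounds, together with the chain-rule split $H(S^j|S^{[t]}) = H(W_j|S^{[t]}) + H(S^j|W_j, S^{[t]})$, should produce the required coefficients on the two sums.

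The main obstacle will be getting the exact coefficient balance: a naive single-type cover (rows only or columns only) yields just a factor of $3$ on the LHS rather than the needed $6$, so the desired factors $2$ and $1$ in front of the two sums will require weighting the row and column covers in the right ratio. I expect this to depend crucially on the assumption $n = d+1$, under which every remaining node participates in every single repair, tying the ``row'' and ``column'' structures together symmetrically. This combinatorial balancing is where I anticipate the real work to lie.
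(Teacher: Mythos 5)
Your reduction is sound and in fact coincides with the paper's first move: using Proposition~\ref{pro:k_eq_d_1} (for $k=d$, any subset of $\sW\cup\sS$ that determines $W_{[k]}$ has entropy $H(W_{[k]})$), one gets, for any $t<\ell$,
$H\left(W_{[\ell+1:k]}|S^{[\ell]}\right) = H\left(W_{[t+1:k]}|S^{[t]}\right) - \sum_{i=t+1}^{\ell} H\left(S^i|S^{[i-1]}\right)$,
and since $k=n-1$, $W_{[t+1:k]}$, $W_{[t+1:n]}$ and your $\Xi$ all have the same conditional entropy given $S^{[t]}$. The per-node target $6\,H(\Xi|S^{[t]}) \leq 2\sum_{j=t+1}^{n} H\left(W_j|S_j^{[t]}\right) + \sum_{j=t+1}^{n} H\left(S^j|S^{[t]}\right)$ is also correct; it is precisely the symmetric rewriting of the paper's intermediate inequality \eqref{eq:k_eq_d_Jan2_temple1}.

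The gap is in the final step. A fractional Shearer cover built from the rows $R_i$, columns $C_j$, and crosses $R_j\cup C_j$ (the only natural covers available) cannot hit the needed coefficients. Given $S^{[t]}$, $C_j$ determines $W_j$ (since $S^j=C_j\cup S_{[t]}^j$ and $S_{[t]}^j$ is determined by $S^{[t]}$) and hence $R_j$, so $H(C_j|S^{[t]})=H(S^j|S^{[t]})=H(R_j\cup C_j|S^{[t]})$, while $H(R_i|S^{[t]})\leq H(W_i|S_i^{[t]})$. A nonnegative weighting $\lambda,\mu,\nu$ of rows, columns, crosses therefore yields $(\lambda+\mu+2\nu)\,H(\Xi|S^{[t]}) \leq \lambda\sum_i H(W_i|S_i^{[t]}) + (\mu+\nu)\sum_j H(S^j|S^{[t]})$, and matching the triple $(6,2,1)$ forces $\lambda=2$, $\nu=3$, $\mu=-2$, which is infeasible. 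The chain-rule split $H(S^j|S^{[t]}) = H(W_j|S^{[t]}) + H(S^j|W_j,S^{[t]})$ only reshuffles the right-hand side and does not close this gap.

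What is missing is not a cleverer cover but a different mechanism. The paper's proof (Appendix~\ref{App:lem_k_eq_d}) expands $H(W_{[t+1:k]}|S^{[t]})=\sum_i H(W_i|W_{[t+1:i-1]},S^{[t]})$, then \emph{lower}-bounds each $I(W_i;W_{[t+1:i-1]}|S^{[t]})$ by $I(S_i^{[t+1:i-1]};S_{[t+1:i-1]}^i|S^{[t]})$ via data processing, applies a Han-type subset inequality to one piece of that mutual information, and -- decisively -- invokes Proposition~\ref{pro:k_eq_d_1} two more times to record the identities $H(W_{[t+1:k]}|S^{[t]}) = H(S^{[t+1:k]}|S^{[t]}) = H(S_{[t+1:n]}|S^{[t]})$. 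Adding these three equal quantities is what produces the $\tfrac{d-t+1}{3}$ and $\tfrac{d-t+1}{6}$ coefficients. That triple identity is an $(k=d)$-specific structural fact, not a covering fact, and your framework as described does not capture it; you would have to rediscover this argument rather than finish by a coefficient search over row/column weights.
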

\begin{proof}
	See Appendix \ref{App:lem_k_eq_d}.
\end{proof}
Since $\ell \geq 1$, there always exists an upper bound on $H \left(W_{[\ell+1:k]}|S^{[\ell]}\right)$ for $t = 0$. When $t = 0$, $S_n^{[t]}$ is regarded as a constant.
For notational simplicity, denote the right-hand side of \eqref{eq:lem_k_eq_d} by $f(d,\ell,t)$, 
where $t = 0,\ldots,\ell-1$.
Then the following proposition is immediate.
\begin{proposition}
\label{Prop:k_eq_d}
For any $(n=d+1,k=d,d,\ell)$ secure exact-repair regenerating codes,
	\begin{equation}
	\label{eq:k_eq_d_convex}
		H \left(W_{[\ell+1:k]}|S^{[\ell]}\right) \leq  \sum_{t=0}^{\ell-1} \mu_t \ f(d,\ell,t), 
	\end{equation} 
for any $\bm{\mu} = (\mu_0,\ldots,\mu_{\ell-1})$ such that
\[\sum_{t=0}^{\ell-1} \mu_t=1,\] 
and
\[\mu_t \geq 0, ~ t=0,\ldots,\ell-1.\]
\end{proposition}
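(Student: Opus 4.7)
The plan is to derive this proposition as an immediate corollary of Lemma~\ref{lem:k_eq_d}. The key observation is that the left-hand side $H(W_{[\ell+1:k]}|S^{[\ell]})$ does not depend on $t$, while Lemma~\ref{lem:k_eq_d} supplies a separate upper bound $f(d,\ell,t)$ for each admissible value $t \in \{0, 1, \ldots, \ell-1\}$. Since the same nonnegative quantity is simultaneously upper bounded by every one of these $\ell$ expressions, it is automatically upper bounded by any convex combination of them.

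Concretely, first I would invoke Lemma~\ref{lem:k_eq_d} $\ell$ times to obtain the system of inequalities $H(W_{[\ell+1:k]}|S^{[\ell]}) \leq f(d,\ell,t)$ for $t = 0, 1, \ldots, \ell-1$. Then, for any weight vector $\bm{\mu} = (\mu_0, \ldots, \mu_{\ell-1})$ with $\mu_t \geq 0$, multiplying the $t$-th inequality through by $\mu_t$ preserves its direction. Summing the resulting inequalities over $t$ yields
\[
\left(\sum_{t=0}^{\ell-1} \mu_t\right) H\left(W_{[\ell+1:k]}|S^{[\ell]}\right) \;\leq\; \sum_{t=0}^{\ell-1} \mu_t \, f(d,\ell,t).
\]
Imposing the normalization $\sum_{t=0}^{\ell-1} \mu_t = 1$ collapses the prefactor on the left to $1$, which delivers exactly the inequality stated in \eqref{eq:k_eq_d_convex}.

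There is essentially no obstacle here: the entire content is already packaged inside Lemma~\ref{lem:k_eq_d}, and the argument reduces to the elementary fact that a convex combination of upper bounds is an upper bound. I anticipate that the reason for isolating this statement as a separate proposition is methodological, namely to grant the subsequent argument the freedom to tune the weights $\bm{\mu}$ against the unknown entropy terms $\alpha$, $H(S_n^{[t]})$, $H(S^{t+1}|S^{[t]})$, and $H(S^i|S^{[i-1]})$ appearing on the right-hand side of \eqref{eq:lem_k_eq_d}. In particular, a judicious choice of $\bm{\mu}$ should let the negative contributions involving $H(S^i|S^{[i-1]})$ and $H(S_n^{[t]})$ telescope or cancel against the positive contributions, which is the mechanism that will eventually produce the target bound $\tfrac{\Gamma_{k,d,\ell}}{d}\alpha$ in the induction step of Subsection~\ref{subsec:k_eq_d_equal}.
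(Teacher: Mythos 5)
Your argument is correct and matches the paper's intent exactly: the paper states that this proposition is ``immediate'' from Lemma~\ref{lem:k_eq_d}, and the convex-combination-of-upper-bounds reasoning you spell out is precisely the elementary step being taken for granted. Your closing remark about why the proposition is isolated (to permit tuning $\bm{\mu}$ in the subsequent cancellation argument) also correctly anticipates the role it plays in Subsection~\ref{subsec:k_eq_d_equal}.
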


With these preparations, we start to prove that 
\begin{equation}
\label{eq:k_eq_d_induction}
	H \left(W_{[\ell+1:k]}|S^{[\ell]}\right) \leq \frac{\Gamma_{k,d,\ell}}{d}\alpha
\end{equation}
for $\ell \geq \hat{\ell}$ by induction on $\ell$.

First, for the base case $\ell = \hat{\ell}$, \eqref{eq:k_eq_d_induction} becomes
\begin{equation}
\label{eq:k_eq_d_base_case}
	H\left(W_{[\hat{\ell}+1:k]}|S^{[\hat{\ell}]}\right) \leq \frac{\Gamma_{k,d,\hat{\ell}}}{d}\alpha.
\end{equation}
From Proposition \ref{Prop:k_eq_d}, we know that 
\[H\left(W_{[\hat{\ell}+1:k]}|S^{[\hat{\ell}]}\right) \leq \sum_{t=0}^{\hat{\ell}-1} \mu_t \ f(d,\hat{\ell},t),\] for any $\bm{\mu}$ satisfying
\begin{equation}
\label{eq:k_eq_d_base_case_condition_1}
	\sum_{t=0}^{\hat{\ell}-1} \mu_t=1,
\end{equation}
and
\begin{equation}
\label{eq:k_eq_d_base_case_condition_2}
	\mu_t \geq 0,~ t=0,\ldots,\hat{\ell}-1.
\end{equation}
In particular, we can let
\begin{equation}
\label{eq:k_eq_d_base_coeff_1}
	\mu_t =
	\begin{cases}
	 	 \frac{1}{2}  \binom{n-\hat{\ell}}{2}   \frac{n-2\hat{\ell}-1+t}{\binom{n-t}{4}},  & 1\leq t \leq \hat{\ell}-3, \\
	 	 \frac{6(n-\hat{\ell}-3)}{(n-\hat{\ell}+1)(n-\hat{\ell}+2)}, &  t = \hat{\ell}-2, ~ \hat{\ell} \geq 3, \\
	 	 \frac{6}{n-\hat{\ell}+1}, &   t=\hat{\ell}-1, ~ \hat{\ell} \geq 2, 
	\end{cases}
\end{equation}
and 
\begin{equation}
\label{eq:k_eq_d_base_coeff_0}
	\mu_0 = 1 - \sum_{j=1}^{\hat{\ell}-1} \mu_{j}.
\end{equation}
For this choice of $\bm{\mu}$, \eqref{eq:k_eq_d_base_case_condition_1} is obvious satisfied, and we only need to verify that \eqref{eq:k_eq_d_base_case_condition_2} is also satisfied.
\begin{proposition}
\label{Prop:convex}
$\bm{\mu}=(\mu_0,\ldots,\mu_{\hat{\ell}-1})$ as defined in \eqref{eq:k_eq_d_base_coeff_1} and \eqref{eq:k_eq_d_base_coeff_0} satisfies
\[\mu_t \geq 0, ~ t=0,\ldots,\hat{\ell}-1.\]
\end{proposition}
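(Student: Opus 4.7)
The plan is to verify nonnegativity separately in each regime of the piecewise definition of $\mu_t$, leaving $\mu_0$ as the substantive case.

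\textbf{Easy cases ($t\geq 1$).} For $t=\hat{\ell}-1$, the value $\mu_{\hat{\ell}-1}=6/(n-\hat{\ell}+1)$ is clearly positive since $n=d+1>\hat{\ell}$. For $t=\hat{\ell}-2$ (which requires $\hat{\ell}\geq 3$, i.e.\ $d\geq 10$), the only factor whose sign is not transparent is $n-\hat{\ell}-3$; using $\hat{\ell}\leq (d+3)/4$ gives $n-\hat{\ell}-3\geq (3d+1)/4-3>0$. For $1\leq t\leq \hat{\ell}-3$ (which requires $\hat{\ell}\geq 4$, i.e.\ $d\geq 14$), the binomial factors $\binom{n-\hat{\ell}}{2}$ and $\binom{n-t}{4}$ are manifestly positive, and $n-2\hat{\ell}-1+t\geq d+1-(d+3)/2-1+1=(d-1)/2>0$.

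\textbf{The nontrivial case $\mu_0\geq 0$.} By the definition $\mu_0=1-\sum_{t=1}^{\hat{\ell}-1}\mu_t$, this reduces to showing $\sum_{t=1}^{\hat{\ell}-1}\mu_t\leq 1$. Setting $m:=n-\hat{\ell}$ and reindexing by $s=\hat{\ell}-t$, the sum becomes
\[
\frac{6}{m+1}+\frac{6(m-3)}{(m+1)(m+2)}+\sum_{s=3}^{\hat{\ell}-1}\frac{6\,m(m-1)(m-s-1)}{(m+s)(m+s-1)(m+s-2)(m+s-3)}.
\]
The two boundary terms consolidate to $6(2m-1)/[(m+1)(m+2)]$. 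To evaluate the remaining sum, I would look for an antidifference $A(s,m)$ whose forward difference matches the summand; a natural ansatz is $A(s,m)=p(s,m)/[(m+s-1)(m+s-2)(m+s-3)]$ with $p$ polynomial of low degree in $s$, whose coefficients can be pinned down by matching a handful of polynomial identities. The sum then telescopes to $A(3,m)-A(\hat{\ell},m)$, and combining with the boundary terms should yield a single rational function of $m$ and $\hat{\ell}$.

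\textbf{Main obstacle.} The substantive difficulty is that, even after telescoping, the inequality $\sum\mu_t\leq 1$ couples $m$ and $\hat{\ell}$ nonlinearly through $m=d+1-\lceil(d-1)/4\rceil$. The sanity check at $\hat{\ell}=3$ reduces the inequality to $(m-1)(m-8)\geq 0$, tight precisely at $d=10$, which strongly suggests that equality is attained at the extremal case $d\equiv 1\pmod 4$ with $\hat{\ell}=(d-1)/4$. I expect the final step to collapse the inequality to a polynomial inequality in $d$ that can be verified case by case on the four residue classes modulo $4$; alternatively, an induction on $\hat{\ell}$ exploiting the equality pattern at the boundary should also succeed.
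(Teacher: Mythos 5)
Your handling of the easy cases $t\ge 1$ is correct and matches the paper: direct sign inspection with the bound $\hat\ell\le\tfrac14(d+3)$ suffices. The reindexing $s=\hat\ell-t$, $m=n-\hat\ell$ is also carried out correctly (the consolidated boundary terms and the form $\tfrac{6m(m-1)(m-s-1)}{(m+s)(m+s-1)(m+s-2)(m+s-3)}$ check out), so the setup for the $\mu_0$ case is sound.

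However, the substantive step is missing. You reduce $\mu_0\ge 0$ to $\sum_{t=1}^{\hat\ell-1}\mu_t\le 1$ and correctly identify that a telescoping antidifference is the right tool, but you never produce the antidifference: the argument ends with ``I would look for,'' ``should yield,'' ``I expect.'' This is exactly the core computation the paper performs; it derives the partial sum in closed form,
\[\sum_{i=1}^{j}\mu_i=\frac{(n-\hat\ell)(n-\hat\ell-1)(n-4\hat\ell+1+3j)}{(n-j-1)(n-j-2)(n-j-3)}+\frac{(n-\hat\ell)(n-\hat\ell-1)(4\hat\ell-n-1)}{(n-1)(n-2)(n-3)},\]
specializes to $j=\hat\ell-3$, adds the two boundary terms, and arrives at $\mu_0=\frac{(n-\hat\ell)(n-\hat\ell-1)(n+1-4\hat\ell)}{(n-1)(n-2)(n-3)}$ (separately verifying $\hat\ell=2,3$ directly). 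Without such a closed form your plan is an outline, not a proof.

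A second, smaller issue: your proposed finish --- reducing to a polynomial inequality in $d$ checked over the four residue classes mod $4$ --- is unnecessary and obscures what actually closes the argument. Once the closed form is in hand, nonnegativity of $\mu_0$ is immediate from the single observation that $n+1-4\hat\ell=n+1-4\lceil\tfrac{n-2}{4}\rceil>n+1-(n-2)-4=-1$, hence $n+1-4\hat\ell\ge 0$ since it is an integer. Your sanity check at $\hat\ell=3$ ($(m-1)(m-8)\ge 0$, tight at $d=10$) is a good consistency check and matches the paper's closed form, but a spot check is not a proof for general $\hat\ell$. To complete the argument you need to actually exhibit and verify the antidifference you ansatz, then apply the ceiling bound above.
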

\begin{proof}
 See Appendix \ref{App:convex}.
\end{proof}
It remains to show that 
\[\sum_{t=0}^{\hat{\ell}-1} \mu_t  f(d,\hat{\ell},t) \leq \frac{\Gamma_{k,d,\hat{\ell}}}{d}\alpha.\]
Towards this end, consider
\begin{align*}
	\sum_{t=0}^{\hat{\ell}-1} \mu_t  f(d,\hat{\ell},t) 
	& = \sum_{t=0}^{\hat{\ell}-1} \mu_t  \left( \frac{d+1 - t }{3} \alpha - \frac{d+1 - t }{3} H\left(S_n^{[t]}\right) + \frac{d+1 - t}{6} H\left(S^{t+1}|S^{[t]}\right)  - \sum_{i=t+1}^{\hat{\ell}} H\left(S^i|S^{[i-1]}\right)    \right) \\
	& = \left( \sum_{t=0}^{\hat{\ell}-1}  \frac{d+1-t}{3}\mu_t \right) \alpha  - \sum_{t=0}^{\hat{\ell}-1}  \frac{d+1-t}{3}\mu_t H\left(S_n^{[t]}\right) + \sum_{t=0}^{\hat{\ell}-1} \frac{d+1 - t}{6} \mu_t  H\left(S^{t+1}|S^{[t]}\right) \\
	& ~~ - \sum_{t=0}^{\hat{\ell}-1}  \sum_{i=t+1}^{\hat{\ell}}   \mu_t  H\left(S^i|S^{[i-1]}\right)     \\
	& = \left( \sum_{t=0}^{\hat{\ell}-1}  \frac{d+1-t}{3}\mu_t \right) \alpha  - \sum_{t=0}^{\hat{\ell}-1}  \frac{d+1-t}{3}\mu_t H\left(S_n^{[t]}\right) + \sum_{t=0}^{\hat{\ell}-1} \frac{d+1 - t}{6} \mu_t  H\left(S^{t+1}|S^{[t]}\right) \\
	& ~~ - \sum_{i=1}^{\hat{\ell}} \left( \sum_{t=0}^{i-1} \mu_t \right) H\left(S^i|S^{[i-1]}\right) \\
	& = \left( \sum_{t=0}^{\hat{\ell}-1}  \frac{d+1-t}{3}\mu_t \right) \alpha  - \sum_{t=0}^{\hat{\ell}-1}  \frac{d+1-t}{3}\mu_t H\left(S_n^{[t]}\right) + \sum_{t=0}^{\hat{\ell}-1} \frac{d+1 - t}{6} \mu_t  H\left(S^{t+1}|S^{[t]}\right) \\
	& ~~ - \sum_{t=0}^{\hat{\ell}-1} \left( \sum_{j=0}^{t} \mu_j \right) H\left(S^{t+1}|S^{[t]}\right),
\end{align*}
where in the last step we replace $i$ by $t+1$ and $t$ by $j$.

By letting 
\[b_t = \frac{n-t}{3} \mu_t,\] 
and
\[c_t = \frac{n-t}{6} \mu_{t} - \sum_{j=0}^{t} \mu_j,\]
we obtain
\begin{equation}
\label{eq:k_eq_d_base_convex}
	\sum_{t=0}^{\hat{\ell}-1} \mu_t  f(d,\hat{\ell},t) \leq \left( \sum_{t=0}^{\hat{\ell}-1}  b_t \right) \alpha  - \sum_{t=0}^{\hat{\ell}-1}  b_t H\left(S_n^{[t]}\right) + \sum_{t=0}^{\hat{\ell}-1} c_t  H\left(S^{t+1}|S^{[t]}\right).
\end{equation}

We separately discuss the case $\hat{\ell}=1$ here.
When $\hat{\ell}=1$,  clearly we have $\mu_0=1$, and then  \eqref{eq:k_eq_d_base_convex} becomes
\begin{equation*}
	 f(d,\hat{\ell},t=0) \leq b_0 \alpha  +  c_0  H\left(S^{1}\right)  = \frac{n}{3} \alpha + \left( \frac{n-6}{6}\right)  H\left(S^{1}\right).
\end{equation*}
Since $\hat{\ell} = \left \lceil \frac{1}{4}(d-1) \right \rceil = \left \lceil \frac{1}{4}(n-2) \right \rceil =1 $, we know that $n \leq 6$, and then we have
\[f(d,\hat{\ell},t=0) \utag{a}{\leq} \frac{n}{3} \alpha + \left(\frac{n-6}{6}\right) \alpha = \frac{1}{2}(n-2) \alpha = \frac{\Gamma_{d,d,1}}{d} \alpha, \]
where \uref{a} follows because $H(S^1) \geq H(W_1)=\alpha$. We have completed the proof for $\hat{\ell}=1$. 

For $\hat{\ell} \geq 2$, \eqref{eq:k_eq_d_base_convex} can be written as
\begin{align}
	\sum_{t=0}^{\hat{\ell}-1} \mu_t  f(d,\hat{\ell},t) 
	& \leq \left( \sum_{t=0}^{\hat{\ell}-1}  b_t \right) \alpha  - \sum_{t=0}^{\hat{\ell}-1}  b_t H\left(S_n^{[t]}\right) + \sum_{t=0}^{\hat{\ell}-1} c_t  H\left(S^{t+1}|S^{[t]}\right) \nonumber \\
	& = \left(\sum_{t=0}^{\hat{\ell}-1}  b_t \right) \alpha  - b_1 \beta - \sum_{t=2}^{\hat{\ell}-1}  b_t H\left(S_n^{[t]}\right) +c_0 H\left(S^{1}\right) + \sum_{t=1}^{\hat{\ell}-1} c_t  H\left(S^{t+1}|S^{[t]}\right). \label{eq:K_eq_d_last_temple1}
\end{align}
\begin{proposition}
\label{Prop:geq0}
For $\hat{\ell} \geq 2$, $c_t \geq 0$ for $t= 0, \ldots, \hat{\ell}-1$, and $c_{\hat{\ell}-1} = 0$.
\end{proposition}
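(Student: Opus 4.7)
\medskip
\noindent\textbf{Proof plan.} The equality $c_{\hat{\ell}-1}=0$ is immediate. Substituting the closed form $\mu_{\hat{\ell}-1}=6/(n-\hat{\ell}+1)$ into the definition
\[
c_{\hat{\ell}-1} \;=\; \frac{n-\hat{\ell}+1}{6}\,\mu_{\hat{\ell}-1}\;-\;\sum_{j=0}^{\hat{\ell}-1}\mu_j
\]
collapses the first term to $1$, while the sum equals $1$ by the normalization \eqref{eq:k_eq_d_base_coeff_0}. So this half is essentially a one-line check.

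For the inequality $c_t\ge 0$ at $0\le t\le \hat{\ell}-2$, the plan is to rewrite $\sum_{j=0}^{t}\mu_j = 1 - \sum_{j=t+1}^{\hat{\ell}-1}\mu_j$ and reduce the claim to
\[
\frac{n-t}{6}\,\mu_t \;+\; \sum_{j=t+1}^{\hat{\ell}-1}\mu_j \;\ge\; 1,
\]
then treat $t$ by cases that mirror the piecewise definition \eqref{eq:k_eq_d_base_coeff_1}. The boundary case $t=\hat{\ell}-2$ is a direct calculation: plugging in the given forms of $\mu_{\hat{\ell}-2}$ and $\mu_{\hat{\ell}-1}$ turns the inequality into
\[
\frac{n-\hat{\ell}-3}{n-\hat{\ell}+1}+\frac{6}{n-\hat{\ell}+1}=\frac{n-\hat{\ell}+3}{n-\hat{\ell}+1}\;\ge\; 1,
\]
which is obvious.

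The main obstacle is the generic range $1\le t\le \hat{\ell}-3$, where $\mu_t$ involves the binomial ratio $(n-2\hat{\ell}-1+t)/\binom{n-t}{4}$ and the tail sum $\sum_{j=t+1}^{\hat{\ell}-1}\mu_j$ mixes terms of two functional forms (the generic expression for $j\le \hat{\ell}-3$ and the exceptional expressions at $j=\hat{\ell}-2,\hat{\ell}-1$). My plan is to first compute the partial tail $\sum_{j=t+1}^{\hat{\ell}-3}\mu_j$ by spotting a telescoping identity: writing $\mu_j$ as a rational function in $j$ with denominator $\binom{n-j}{4}$, the summand can be expressed as a first difference of $\binom{n-j}{3}^{-1}$-type terms, so the sum telescopes to an explicit closed form. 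Adding the two boundary contributions from $j=\hat{\ell}-2$ and $j=\hat{\ell}-1$ and combining over a common denominator reduces the required inequality, after clearing positive denominators, to a polynomial inequality in the three integer parameters $n,t,\hat{\ell}$ (with the implicit relation $\hat{\ell}=\lceil(n-2)/4\rceil$). This polynomial inequality should be verifiable by elementary factoring, and the remaining endpoint case $t=0$ then drops out either as the $t\to 0$ limit of the general argument or from $\mu_0=1-\sum_{j\ge 1}\mu_j$ together with the tail estimate already produced. The chief technical risk lies in finding a clean telescoping identity rather than grinding through the binomial algebra directly; if telescoping fails, a fallback is to bound $\sum_{j=t+1}^{\hat{\ell}-1}\mu_j$ below by retaining only the $j=\hat{\ell}-1$ term and absorb the deficit into $\frac{n-t}{6}\mu_t$, which is comparatively large for small $t$.
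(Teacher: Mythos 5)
Your plan matches the paper's proof in essentially every structural respect: you handle $t=\hat{\ell}-1$ and $t=\hat{\ell}-2$ by direct substitution (the paper does likewise, obtaining $c_{\hat{\ell}-1}=0$ and $c_{\hat{\ell}-2}=2/(n-\hat{\ell}+1)$), you reduce $t=0$ to the nonnegativity of $\mu_0$ (which is exactly what the paper uses, via Proposition~\ref{Prop:convex}), and you correctly identify that the generic range $1\le t\le\hat{\ell}-3$ hinges on a telescoping closed form for the running sum of $\mu_j$. The paper indeed establishes precisely such a closed form, namely
\[
\sum_{i=1}^{j}\mu_i \;=\; \frac{(n-\hat{\ell})(n-\hat{\ell}-1)(n-4\hat{\ell}+1+3j)}{(n-j-1)(n-j-2)(n-j-3)} \;+\; \frac{(n-\hat{\ell})(n-\hat{\ell}-1)(4\hat{\ell}-n-1)}{(n-1)(n-2)(n-3)},
\]
from which $c_t = 2(n-\hat{\ell})(n-\hat{\ell}-1)(\hat{\ell}-1-t)/\bigl((n-t-1)(n-t-2)(n-t-3)\bigr)\ge 0$ follows by inspection. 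So your anticipated telescoping does exist and is the crux of the argument; the paper verifies the identity by symbolic computation rather than deriving it from a hand-spotted anti-difference, but the mathematics is the same.

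One caveat: the fallback you float (keep only the $j=\hat{\ell}-1$ term in the tail) does not actually work throughout the generic range. For $t=\hat{\ell}-3$ it reduces to $\frac{m^2+m+16}{m^2+3m+2}\ge 1$ with $m=n-\hat{\ell}$, which fails for $m>7$, and since $\hat{\ell}\approx n/4$ one has $m\approx 3n/4$, so the bound is false for all but the smallest parameters. Keeping the last two tail terms ($j=\hat{\ell}-2,\hat{\ell}-1$) does repair the $t=\hat{\ell}-3$ case, but the number of retained terms must grow as $t$ decreases, so there is no uniform short-cut: the full telescoping sum is genuinely needed. Since you proposed this only as a contingency, this does not affect the correctness of your main route.
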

\begin{proof}
	See Appendix \ref{App:c_geq_0}.
\end{proof}
Since
\begin{align}
	H\left(S^{t+1}|S^{[t]}\right) & = H\left(S_{[t]\cup[t+2:n]}^{t+1}|S^{[t]}\right) \nonumber \\
	& = H\left(S_{[t+2:n]}^{t+1}|S^{[t]}\right) \nonumber \\
	& = \sum_{j=t+2}^{n} H\left(S_j^{t+1}|S^{[t]}, S_{[t+2:j-1]}^{t+1}  \right) \nonumber \\
								  & \leq \sum_{j=t+2}^{n} H\left(S_j^{t+1}|S_j^{[t]} \right) \nonumber \\
								  & \utag{a}{=} (d-t) H\left(S_n^{t+1}|S_n^{[t]} \right), \label{eq:k_eq_d_bound_S} 
\end{align}
where \uref{a} follows from the symmetry,
we can further bound \eqref{eq:K_eq_d_last_temple1} as follows:
\begin{align}
	\sum_{t=0}^{\hat{\ell}-1} \mu_t  f(d,\hat{\ell},t) 
	& \leq \left(\sum_{t=0}^{\hat{\ell}-1}  b_t \right) \alpha  - b_1 \beta - \sum_{t=2}^{\hat{\ell}-1}  b_t H\left(S_n^{[t]}\right) +c_0 H\left(S^{1}\right) + \sum_{t=1}^{\hat{\ell}-1} c_t  H\left(S^{t+1}|S^{[t]}\right) \nonumber \\
	& \leq \left(\sum_{t=0}^{\hat{\ell}-1}  b_t \right) \alpha  - b_1 \beta - \sum_{t=2}^{\hat{\ell}-1}  b_t H\left(S_n^{[t]}\right) +c_0 H\left(S^{1}\right) + \sum_{t=1}^{\hat{\ell}-1} c_t (d-t) H\left(S_n^{t+1}|S_n^{[t]}\right) \nonumber \\
	& = \left(\alpha \sum_{t=0}^{\hat{\ell}-1}  b_t  - b_1 \beta +c_0 H\left(S^{1}\right) \right)  - \sum_{t=2}^{\hat{\ell}-1}  b_t H\left(S_n^{[t]}\right)  + \sum_{t=1}^{\hat{\ell}-1} c_{t} (d-t) \left(H\left(S_n^{[t+1]}\right) - H\left(S_n^{[t]}\right) \right) \nonumber \\
	& = \left(\alpha \sum_{t=0}^{\hat{\ell}-1}  b_t  - b_1 \beta +c_0 H\left(S^{1}\right) \right)  - \sum_{t=2}^{\hat{\ell}-1}  b_t H\left(S_n^{[t]}\right)  + \sum_{t=2}^{\hat{\ell}} c_{t-1} (d-t+1) H\left(S_n^{[t]}\right) \nonumber \\
	& ~~ - \sum_{t=1}^{\hat{\ell}-1} c_t (d-t) H\left(S_n^{[t]}\right) \nonumber \\
	& \utag{a}{=} \left(\alpha \sum_{t=0}^{\hat{\ell}-1}  b_t  - b_1 \beta +c_0 H\left(S^{1}\right)- c_1 (d-1) \beta \right)  + \sum_{t=2}^{\hat{\ell}-1} \left(c_{t-1} (d-t+1) -  c_t (d-t) - b_t \right)  H\left(S_n^{[t]}\right)  \nonumber \\
	& \utag{b}{\leq} \left(\alpha \sum_{t=0}^{\hat{\ell}-1}  b_t  - b_1 \beta +c_0 d \beta- c_1 (d-1) \beta \right)  + \sum_{t=2}^{\hat{\ell}-1} \left(c_{t-1} (d-t+1) -  c_t (d-t) - b_t \right)  H\left(S_n^{[t]}\right)  \nonumber,
\end{align}
where \uref{a} follows from $c_{\hat{\ell}-1}=0$, and \uref{b} follows because $c_0 \geq 0$ and $H\left(S^1\right) \leq d \beta$.
By letting 
\[T_1 = \sum_{t=0}^{\hat{\ell}-1} b_t,\]
\[T_2 = b_1  - c_0 d  +  c_1 (d-1)  ,\]
and 
\[\lambda_t = c_{t-1}  (d+1-t) - c_t  (d-t) - b_t, t=2,\ldots,\hat{\ell}-1,\]
we have
\[\sum_{t=0}^{\hat{\ell}-1} \mu_t  f(d,\hat{\ell},t)  \leq T_1 \alpha - T_2 \beta - \sum_{t=2}^{\hat{\ell}-1} \lambda_t H(S_n^{[t]}).\]
\begin{proposition}
\label{Prop:namda}
	 For $\hat{\ell} \geq 3$, $\lambda_t = 0$ for $t= 2, \ldots, \hat{\ell}-1$. 
\end{proposition}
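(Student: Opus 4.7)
The plan is to reduce each identity $\lambda_t=0$ to a single ``balance equation'' on the $\mu$'s, and then propagate the whole family via a three-term recurrence that the coefficients in \eqref{eq:k_eq_d_base_coeff_1} satisfy by construction. First, I expand $\lambda_t=c_{t-1}(n-t)-c_t(n-1-t)-b_t$ using the definitions of $b_t$ and $c_t$; after collecting like terms (the coefficient of $\mu_t$ simplifies via the elementary identity $-\tfrac{(n-t-1)(n-t)}{6}+(n-t-1)-\tfrac{n-t}{3}=-\tfrac{(n-t-2)(n-t-3)}{6}$, and the two partial sums collapse), this yields the closed form
\[\lambda_t=\tfrac{(n-t)(n-t+1)}{6}\mu_{t-1}-\tfrac{(n-t-2)(n-t-3)}{6}\mu_t-\sum_{j=0}^{t-1}\mu_j,\]
so the claim $\lambda_t=0$ becomes the balance equation
\[(\star_t):\quad \tfrac{(n-t)(n-t+1)}{6}\mu_{t-1}-\tfrac{(n-t-2)(n-t-3)}{6}\mu_t=\sum_{j=0}^{t-1}\mu_j.\]

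Next, taking the difference $(\star_{t+1})-(\star_t)$ and using $\sum_{j=0}^{t}\mu_j-\sum_{j=0}^{t-1}\mu_j=\mu_t$, the difference collapses to the three-term recurrence
\[2(n-t)(n-t-3)\mu_t=(n-t-3)(n-t-4)\mu_{t+1}+(n-t)(n-t+1)\mu_{t-1}.\]
Hence, once this recurrence is verified for every $t\in\{2,\ldots,\hat{\ell}-2\}$, the equations $(\star_2),\ldots,(\star_{\hat{\ell}-1})$ become pairwise equivalent, and I only need a single base case. I would check the recurrence by direct substitution of the piecewise formula \eqref{eq:k_eq_d_base_coeff_1} case by case: in the interior range $2\leq t\leq\hat{\ell}-4$, all three of $\mu_{t-1},\mu_t,\mu_{t+1}$ take the main form, and after writing $q=n-\hat{\ell}$ both sides reduce to the same short rational expression; for $t=\hat{\ell}-3$ and $t=\hat{\ell}-2$, one or two of the three $\mu$'s use the special formulas, and the same substitution should again close the identity.

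For the base case I would take $t=\hat{\ell}-1$: since $\sum_{j=0}^{\hat{\ell}-1}\mu_j=1$ by \eqref{eq:k_eq_d_base_coeff_0}, the right-hand side of $(\star_{\hat{\ell}-1})$ equals $1-\mu_{\hat{\ell}-1}$, so $(\star_{\hat{\ell}-1})$ involves only the two special coefficients $\mu_{\hat{\ell}-2}$ and $\mu_{\hat{\ell}-1}$ and is settled by a routine substitution. The remaining equations $(\star_{\hat{\ell}-2}),\ldots,(\star_2)$ then follow by downward induction driven by the recurrence. The main obstacle I foresee is at the transition indices $t=\hat{\ell}-3$ and $t=\hat{\ell}-2$, where the three $\mu$'s in the recurrence straddle the boundary between the main formula and the two special formulas: one must verify that the special coefficients in \eqref{eq:k_eq_d_base_coeff_1} are tuned precisely so the recurrence survives across these joints. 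The interior range and the base case each live in a single regime and are routine algebra; the crux of the argument is the self-consistency of the piecewise definition of $\mu_t$ at these two joints.
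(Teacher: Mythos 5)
Your proposal is correct and takes a genuinely different route from the paper. The paper's proof (Appendix \ref{APP:namba}) has already computed the closed-form expressions for $c_t$ in Appendix \ref{App:c_geq_0} (equation \eqref{eq:app_c_t}), and simply substitutes them, together with $b_t$, into $\lambda_t$ and verifies each case directly (with the messy interior case $2\leq t\leq\hat\ell-3$ delegated to symbolic computation). You instead bypass $c_t$ entirely: your closed form $\lambda_t=\tfrac{(n-t)(n-t+1)}{6}\mu_{t-1}-\tfrac{(n-t-2)(n-t-3)}{6}\mu_t-\sum_{j<t}\mu_j$ is correct (I verified the coefficient identity and the telescoping of the partial sums), and your differencing $(\star_{t+1})-(\star_t)$ collapses the family to the base case $(\star_{\hat\ell-1})$ plus the three-term recurrence $2(n-t)(n-t-3)\mu_t=(n-t-3)(n-t-4)\mu_{t+1}+(n-t)(n-t+1)\mu_{t-1}$, which in the interior range reduces pleasantly to the linear identity $2v_t=(v_t+1)+(v_t-1)$ after factoring out the shared denominator (where $v_t=n-2\hat\ell-1+t$). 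What your approach buys is a uniform structural explanation of \emph{why} $\lambda_t$ vanishes---the $\mu_t$ are chosen exactly to satisfy a second-order linear recurrence---rather than a case-by-case opaque computation, and it works directly with the $\mu$'s without ever needing the $c_t$ closed form. What it costs is the same thing the paper's proof pays: you still have to check the recurrence across the two piecewise joints $t=\hat\ell-3$ and $t=\hat\ell-2$ (and, for small $\hat\ell$, confirm which of these joints actually occur in the given range), plus the base case, each a finite symbolic substitution. Both proofs are therefore the same sort of calculation in the end, but yours is organized more transparently.
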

\begin{proof}
	See Appendix \ref{APP:namba}.
\end{proof}

From Proposition \ref{Prop:namda}, we obtain 
\[\sum_{t=0}^{\hat{\ell}-1} \mu_t  f(d,\hat{\ell},t)  \leq T_1 \alpha - T_2 \beta.\]
\begin{proposition}
\label{prop:k_eq_d_final}
	$T_2 \geq 0$, $T_1  - \frac{T_2}{d} = \frac{\Gamma_{k,d,\hat{\ell}}}{d}$. 
\end{proposition}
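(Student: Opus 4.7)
The proposition is a direct computational identity, so my plan is to unwind all the definitions and verify the two claims by algebra. Writing $d = n-1$, the quantities $b_0, b_1, c_0, c_1$ depend only on $\mu_0$ and $\mu_1$: explicitly $c_0 = \tfrac{n-6}{6}\mu_0$, $c_1 = \tfrac{n-7}{6}\mu_1 - \mu_0$, $b_0 = \tfrac{n}{3}\mu_0$, and $b_1 = \tfrac{n-1}{3}\mu_1$. Thus $T_2 = b_1 - c_0 d + c_1(d-1)$ is a linear combination of $\mu_0, \mu_1$ with coefficients that are explicit polynomials in $n$.

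For the first claim $T_2 \geq 0$, I would collect the coefficients of $\mu_0$ and $\mu_1$ to write $T_2 = A(n)\,\mu_1 + B(n)\,\mu_0$, then substitute the explicit formulas for $\mu_0$ and $\mu_1$ from \eqref{eq:k_eq_d_base_coeff_1}--\eqref{eq:k_eq_d_base_coeff_0}. Because $\mu_t$ is defined piecewise, the cases $\hat{\ell}=2$, $\hat{\ell}=3$, and $\hat{\ell}\geq 4$ must be handled separately: for $\hat{\ell}=2$ only $\mu_1 = \tfrac{6}{n-1}$ appears; for $\hat{\ell}=3$ the $\mu_{\hat{\ell}-2}$ formula applies at $t=1$; for $\hat{\ell}\geq 4$ the general formula from the first branch of \eqref{eq:k_eq_d_base_coeff_1} applies. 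In each case the inequality reduces to a polynomial inequality in $n$ and $\hat{\ell}$, which can be verified using the bound $\hat{\ell} \leq \tfrac{d-1}{4}+1$ and the fact that all $\mu_t$ are nonnegative by Proposition~\ref{Prop:convex}.

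For the second claim, I would compute $T_1 - T_2/d$ by writing $T_1 = \sum_{t=0}^{\hat{\ell}-1}\tfrac{n-t}{3}\mu_t$ and using $\mu_0 = 1 - \sum_{j=1}^{\hat{\ell}-1}\mu_j$ to isolate the constant $\tfrac{n}{3}$. After subtracting $T_2/d$, the pieces proportional to $\mu_0$ should cancel, leaving a sum over $t \geq 1$ that I expect to telescope: the piecewise formula \eqref{eq:k_eq_d_base_coeff_1} was clearly engineered so that $\sum_{t=1}^{\hat{\ell}-3}\tfrac{n-t}{3}\mu_t$ collapses against consecutive contributions, and together with the explicit contributions from $t=\hat{\ell}-2$ and $t=\hat{\ell}-1$ produces exactly $\Gamma_{d,d,\hat{\ell}}/d = \tfrac{(d-\hat{\ell})(d-\hat{\ell}+1)}{2d}$.

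The main obstacle is purely the bookkeeping for the telescoping sum together with the boundary cases $\hat{\ell}\in\{2,3\}$, where some of the piecewise branches of \eqref{eq:k_eq_d_base_coeff_1} degenerate and must be treated by hand. I anticipate no conceptual difficulty: the coefficients $\mu_t$ were designed precisely so that Proposition~\ref{Prop:namda} and the present proposition hold simultaneously, and once Proposition~\ref{Prop:namda} has eliminated every $H(S_n^{[t]})$ term with $2 \leq t \leq \hat{\ell}-1$, the remaining two identities for $T_1$ and $T_2$ are essentially forced by the design. The calculations are tedious but routine, with the only real risks being off-by-one errors in the telescoping indices and sign errors when substituting the three branches of $\mu_t$.
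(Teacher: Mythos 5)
Your preliminary bookkeeping is correct: $c_0 = \tfrac{n-6}{6}\mu_0$, $c_1 = \tfrac{n-7}{6}\mu_1 - \mu_0$, $b_0 = \tfrac{n}{3}\mu_0$, $b_1 = \tfrac{n-1}{3}\mu_1$, so $T_2$ depends only on $\mu_0,\mu_1$, and your plan to collect coefficients, substitute explicit values of $\mu_0$, $\mu_1$, and handle the $\hat{\ell}\in\{2,3\}$ and $\hat{\ell}\geq 4$ cases separately is in fact essentially what the paper does. Two cautions on the $T_2\geq 0$ part. First, the bound you cite is the wrong direction: the sign of $T_2$ reduces to the sign of $4\hat{\ell}+2-n$, so what is really used is the \emph{lower} bound $\hat{\ell}=\lceil (d-1)/4\rceil \geq (d-1)/4$, not an upper bound on $\hat{\ell}$. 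Second, Proposition~\ref{Prop:convex} (nonnegativity of the $\mu_t$) cannot help here on its own, because once you collect terms you get $T_2 = \tfrac{n-3}{6}\bigl((n-4)\mu_1 - (n+2)\mu_0\bigr)$, which has a genuine negative coefficient on $\mu_0$; the inequality is decided entirely by the precise algebraic relation between $\mu_0$ and $\mu_1$, not by their signs.

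For the identity $T_1 - T_2/d = \Gamma_{k,d,\hat{\ell}}/d$, the paper does not telescope at all — it simply substitutes the explicit piecewise formulas into $T_1$ and verifies the resulting rational identity with symbolic computation (the steps marked with an asterisk, attributed to SageMath). Your intuition that the sum "should" telescope is correct but you have not located the telescope correctly. Writing $(n-t)\mu_t$ as a difference of consecutive terms of a single sequence is not obviously possible from the formula \eqref{eq:k_eq_d_base_coeff_1} alone. The telescope that actually exists comes from Proposition~\ref{Prop:namda}: the equation $\lambda_t = 0$ is precisely $b_t = (d+1-t)c_{t-1} - (d-t)c_t$, and setting $a_t := (d-t)c_t$ gives $b_t = a_{t-1}-a_t$ for $2\leq t\leq \hat{\ell}-1$, so that $\sum_{t=2}^{\hat{\ell}-1} b_t = a_1 - a_{\hat{\ell}-1} = (d-1)c_1$ (using $c_{\hat{\ell}-1}=0$). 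This yields $T_1 = b_0 + b_1 + (d-1)c_1$, after which $T_1 - T_2/d$ becomes a short expression in $\mu_0$ and $\mu_1$ alone, and the final identity reduces to $-\mu_0 + \tfrac{n-4}{3}\mu_1 = \tfrac{(n-\hat{\ell})(n-\hat{\ell}-1)}{(n-2)(n-3)}$, which is verifiable by hand. This is a genuinely cleaner route than the paper's. But you do not get there by hoping that $\sum_{t=1}^{\hat{\ell}-3}\tfrac{n-t}{3}\mu_t$ collapses "against consecutive contributions"; you get there by reusing the relation $\lambda_t=0$ that you already proved. As written, your plan leaves that step unjustified, and without either (i) discovering this telescope or (ii) falling back on the paper's symbolic brute force, the computation would likely stall.
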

\begin{proof}
	See Appendix \ref{App:k_eq_d_final}.
\end{proof}

Finally, we can substantiate that 
\[\sum_{t=0}^{\hat{\ell}-1} \mu_t  f(d,\hat{\ell},t)  \leq  T_1 \alpha - T_2 \beta \utag{a}{\leq} T_1 \alpha - \frac{T_2}{d} \alpha \leq \frac{\Gamma_{k,d,\hat{\ell}}}{d} \alpha,\]
where \uref{a} follows from $d \beta \geq \alpha$.
Therefore, the base case holds, that is,
\[H\left(W_{[\hat{\ell}+1:k]}|S^{[\hat{\ell}]} \right)  \leq \frac{\Gamma_{k,d,\hat{\ell}}}{d} \alpha.\]

Now, we start the inductive step to show that for any $\ell \geq \hat{\ell}+1$, if $H \left(W_{[\ell:k]}|S^{[\ell-1]} \right) \leq \frac{\Gamma_{k,d,\ell-1}}{d} \alpha$, 
then $H \left(W_{[\ell+1:k]}|S^{[\ell]} \right) \leq \frac{\Gamma_{k,d,\ell}}{d} \alpha$.

First, assume that 
\begin{equation}
\label{eq:k_eq_d_larger_assump}
	H \left(W_{[\ell:k]}|S^{[\ell-1]} \right) \leq \frac{\Gamma_{k,d,\ell-1}}{d} \alpha,
\end{equation}
for some $\ell \geq \hat{\ell}+1$.

Then, consider 
\begin{align}
	H \left(W_{[\ell+1:k]}|S^{[\ell]} \right) 
							& = H \left(W_{[\ell+1:k]},S^{[\ell]} \right) - H \left(S^{[\ell]} \right) \nonumber \\
							& \utag{a}{=} H \left(W_{[\ell:k]},S^{[\ell-1]} \right) - H \left(S^{[\ell-1]} \right) - H \left(S^{\ell}|S^{[\ell-1]} \right) \nonumber \\
							& = H \left(W_{[\ell:k]}|S^{[\ell-1]} \right) - H \left(S^{\ell}| S^{[\ell-1]}\right) \nonumber \\
							& \utag{b}{\leq} \frac{\Gamma_{k,d,\ell-1}}{d} \alpha - H \left(S^{\ell}| S^{[\ell-1]}\right), \label{eq:k_eq_d_larger:temple1}
\end{align}
where \uref{a} follows from Proposition \ref{pro:k_eq_d_1}, and \uref{b} follows from \eqref{eq:k_eq_d_larger_assump}.
Also, we have
\begin{align}
	 H \left(W_{[\ell+1:k]}|S^{[\ell]}\right) 
	& \leq H \left(S^{[\ell+1:k]}|S^{[\ell]}\right) \nonumber \\
	& = \sum_{i=\ell+1}^k H(S^i|S^{[i-1]}) \nonumber \\
	& \utag{a}{\leq} \sum_{i=\ell+1}^k (n-i) H(S_n^i|S_n^{[i-1]}) \nonumber \\
	& = (n-k) H(S_n^{[k]}) + \left( \sum_{i=\ell+1}^{k-1}  H(S_n^{[i]}) \right) - (n-\ell-1) H(S_n^{[\ell]})  \nonumber \\
	& =  \left( \sum_{i=\ell+1}^k  H(S_n^{[i]}) \right) - (n-\ell-1) H(S_n^{[\ell]}), \label{eq:k_eq_d_larger:temple3}
\end{align}
where \uref{a} follows from \eqref{eq:k_eq_d_bound_S}.
Moreover, the following lemma gives another upper bound on $H \left(W_{[\ell+1:k]}|S^{[\ell]}\right)$.
\begin{lemma}
\label{lem:k_eq_d_revised}
For any $(n=d+1,k = d, d,\ell)$ secure exact-repair regenerating codes, we have
	\begin{equation}
	\label{eq:k_eq_d_larger:temple2}
			H(W_{[\ell+1:k]}|S^{[\ell]}) \leq \frac{1}{2}(k-\ell+1) \alpha - \frac{1}{2} \sum_{i=\ell-1}^{k-1} H(S_n^{[i]}) +  \frac{1}{4}(k-\ell-2)  H(S^{\ell}|S^{[\ell-1]}).
	\end{equation}
\end{lemma}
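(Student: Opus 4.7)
The plan is to follow the same blueprint as Lemma~\ref{lem:k_eq_d} but replace its triple-based grouping (which produced factors $\tfrac{1}{3}$ and $\tfrac{1}{6}$) with a pair-based grouping, which should naturally yield the factors $\tfrac{1}{2}$ and $\tfrac{1}{4}$ appearing in the statement. The first step is to expand the target quantity via the chain rule,
\[
H(W_{[\ell+1:k]}\mid S^{[\ell]}) \;=\; \sum_{i=\ell+1}^{k} H(W_i \mid W_{[\ell+1:i-1]},\, S^{[\ell]}),
\]
and for each summand invoke the regeneration property $H(W_i\mid S^i)=0$ to rewrite each $H(W_i\mid\cdot)$ as a conditional entropy of the symbols $\{S_j^i: j\neq i\}$ over the current conditioning.

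Next, I would apply the entropy-symmetry assumption: for any two-element subset of $[\ell+1:k]$, the associated joint conditional entropies all coincide. Averaging uniformly over such pairs (rather than over triples, as in the proof of Lemma~\ref{lem:k_eq_d}) and combining with submodularity produces a $\tfrac{1}{2}$ coefficient attached to both $\alpha$ and to the symmetric repair entropies $H(S_n^{[i]})$. Converting each $H(W_j\mid\cdot)$ arising from a paired term into the standardized form $H(S_n^{[i]})$ via $H(W_j|S^j)=0$ is the mechanism that produces the $-\tfrac{1}{2}\sum_{i=\ell-1}^{k-1} H(S_n^{[i]})$ subtraction and the $\tfrac{1}{2}(k-\ell+1)\alpha$ main term.

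The coefficient $\tfrac{1}{4}$ on $H(S^\ell\mid S^{[\ell-1]})$ would arise from pairs whose two indices straddle the wiretap boundary: for such a pair, a second, internal symmetrization over its ordering halves the $\tfrac{1}{2}$ once more, giving $\tfrac{1}{4}$, while the integer $(k-\ell-2)$ counts the number of such boundary contributions that remain after the pure interior-pair terms have been absorbed.

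The main obstacle is the precise combinatorial bookkeeping, especially the appearance of the ``extra'' index $i=\ell-1$ in the sum $\sum_{i=\ell-1}^{k-1} H(S_n^{[i]})$. A naive chain-rule-plus-symmetry argument would naturally yield a sum starting at $i=\ell$, so recovering $i=\ell-1$ most likely requires a separate step that invokes the security condition $I(M;S^{[\ell]})=0$ to shift one contribution across the wiretap boundary. Verifying that the sign and magnitude of the resulting $H(S^\ell\mid S^{[\ell-1]})$ correction collapse to exactly $\tfrac{k-\ell-2}{4}$, with no spurious additive terms, is where I expect the argument to require the most care.
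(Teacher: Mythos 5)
Your plan has the right intuition — the factors $\tfrac{1}{2},\tfrac{1}{4}$ arise because Lemma~\ref{lem:k_eq_d_revised} combines the target with itself once rather than twice, and the appearance of $i=\ell-1$ signals that one needs to work at an index one step below the wiretap boundary — but the mechanisms you propose for realizing that intuition do not match how the proof actually goes, and I don't think they can be made to work as stated.

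The actual engine is Proposition~\ref{pro:k_eq_d_1}, which holds precisely because $k=d$: since $W_{[k]}$ determines everything, any $\cT$ with $H(W_{[k]}\mid\cT)=0$ satisfies $H(\cT)=H(W_{[k]})$, and in particular
\[
H\bigl(W_{[t+1:k]}\mid S^{[t]}\bigr)\;=\;H\bigl(S^{[t+1:k]}\mid S^{[t]}\bigr)\;=\;H\bigl(S_{[t+1:n]}\mid S^{[t]}\bigr).
\]
The proof of Lemma~\ref{lem:k_eq_d} adds all three of these equal quantities (giving a factor $3$, hence $\tfrac13,\tfrac16$); the proof of Lemma~\ref{lem:k_eq_d_revised} starts from the identical intermediate bound \eqref{eq:app_temple_revised}, absorbs the negative terms $-\sum_i H(S_i^{[t+1:i-1]}\mid S^{[t]})$ into the conditioning of the $H(W_i\mid\cdot)$ terms to get $\sum_i H(W_i\mid S_i^{[i-1]}) = (k-t)\alpha - \sum_{i=t}^{k-1}H(S_n^{[i]})$, and then adds only \emph{two} of the equal quantities, $H(W_{[t+1:k]}\mid S^{[t]}) + H(S^{[t+1:k]}\mid S^{[t]}) = 2H(W_{[t+1:k]}\mid S^{[t]})$. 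This is an exact identity — not an averaging over pairs of node indices, not a submodularity argument, and not a symmetrization over orderings. Your ``pair-based grouping'' language conflates a counting of node pairs with a counting of equivalent conditional entropies, and as described it would not reproduce the coefficients; in particular the ``internal symmetrization over its ordering'' step you invoke to halve $\tfrac12$ into $\tfrac14$ has no analogue in a valid derivation — the $\tfrac14$ is literally $\tfrac12\cdot\tfrac{d-t+1}{2}$ with $t=\ell-1$.

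The $i=\ell-1$ term is also explained incorrectly. It does \emph{not} come from the security condition $I(M;S^{[\ell]})=0$ — that condition plays no role anywhere in the proof of this lemma. It arises because one instantiates the generic bound \eqref{eq:Appendix-D:temple1} at $t=\ell-1$ to get a bound on $H(W_{[\ell:k]}\mid S^{[\ell-1]})$, whose sum runs from $i=\ell-1$; one then shifts down by one index using Proposition~\ref{pro:k_eq_d_1} again, via $H(W_{[\ell+1:k]}\mid S^{[\ell]}) = H(W_{[\ell:k]}\mid S^{[\ell-1]}) - H(S^\ell\mid S^{[\ell-1]})$, and it is this final subtraction (together with $k=d$) that turns the coefficient $\tfrac14(d-\ell+2)$ into $\tfrac14(k-\ell-2)$. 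A proof based on the security condition instead would need to justify a step you have not written and, as far as I can see, cannot be sound here, since the bound is derived purely from the reconstruction and regeneration properties plus symmetry.
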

\begin{proof}
	The lemma can be proved by modifying the proof of Lemma~\ref{lem:k_eq_d}. See details in Appendix \ref{App:k_eq_d_revised}. 
\end{proof}

We now have three upper bounds on $H(W_{[\ell+1:k]}|S^{[\ell]})$. Similar to what we did in the previous subsection, we will take a particular convex combination of \eqref{eq:k_eq_d_larger:temple1}, \eqref{eq:k_eq_d_larger:temple3} and \eqref{eq:k_eq_d_larger:temple2} to obtain the desired upper bound on $H \left(W_{[\ell+1:k]}|S^{[\ell]}\right)$. Denote the coefficients associated with \eqref{eq:k_eq_d_larger:temple1}, \eqref{eq:k_eq_d_larger:temple3} and \eqref{eq:k_eq_d_larger:temple2} by $v_1$, $v_2$ and $v_3$.


If $\ell=k-1$, from \eqref{eq:k_eq_d_larger:temple3}, we obtain
\[ H \left(W_{k}|S^{[k-1]}\right) \leq H(S_n^{k}|S_n^{[k-1]}) \leq \frac{1}{k} H(S_n^{[k]}) \leq \frac{1}{k} \alpha = \frac{\Gamma_{d,d,d-1}}{d} \alpha.\]
Hence, by letting $v_2=1$ and $v_1=v_3=0$, we obtain that 
\[H \left(W_{[\ell+1:k]}|S^{[\ell]}\right) \leq  \frac{\Gamma_{k,d,\ell}}{d} \alpha,\]
for $\ell = k-1$.

For $\ell \leq k-2$, let
\[v_1  =  \frac{(k-\ell-2)(n-\ell-1)}{4(n-1) + (n-\ell+1)(k-\ell-2)},\]
\[v_2  =  \frac{2(k+\ell-2)}{4(n-1) + (n-\ell+1)(k-\ell-2)  },\]
and 
\[v_3 =  \frac{4(n-\ell-1)}{4(n-1) + (n-\ell+1)(k-\ell-2)}.\]
Clearly, $v_1,v_2,v_3 \geq 0$ for $\ell \leq k-2$. Also, we have
\begin{align*}
	v_1 + v_2 +v_3  = \frac{(k-\ell-2)(n-\ell-1) + 2(k+\ell-2) + 4(n-\ell-1)}{4(n-1) + (n-\ell+1)(k-\ell-2)} 
				    = 1.
\end{align*}

Therefore, $H \left(W_{[\ell+1:k]}|S^{[\ell]}\right)$ is upper-bounded by $v_1 \eqref{eq:k_eq_d_larger:temple1}+  v_2\eqref{eq:k_eq_d_larger:temple3} + v_3 \eqref{eq:k_eq_d_larger:temple2}$ as follows:
\begin{align*}
	H \left(W_{[\ell+1:k]}|S^{[\ell]}\right)  
			&  \leq v_1 \left(\frac{\Gamma_{k,d,\ell-1}}{d} \alpha - H \left(S^{\ell}| S^{[\ell-1]}\right) \right) + v_2 \left( \left( \sum_{i=\ell+1}^k  H(S_n^{[i]}) \right) - (n-\ell-1) H(S_n^{[\ell]}) \right) \\
			& ~~ + v_3 \left( \frac{1}{2}(k-\ell+1) \alpha - \frac{1}{2} \sum_{i=\ell-1}^{k-1} H(S_n^{[i]}) +  \frac{1}{4}(k-\ell-2)  H(S^{\ell}|S^{[\ell-1]}) \right) \\
 			& = \left( v_1 \frac{\Gamma_{k,d,\ell-1}}{d} +   \frac{v_3}{2}(k-\ell+1 ) \right) \alpha +  \left(  \frac{v_3}{4}(k-2 - \ell) - v_1  \right) H \left(S^{\ell}| S^{[\ell-1]}\right)  \\
			& ~~  + v_2 \left( \left( \sum_{i=\ell+1}^k  H(S_n^{[i]}) \right) - (n-\ell-1) H(S_n^{[\ell]}) \right)  - \frac{v_3}{2}\sum_{i=\ell}^k H(S_n^{[i-1]}) \\
			& \utag{a}{=} \left( v_1 \frac{\Gamma_{k,d,\ell-1}}{d} +   \frac{v_3}{2}(k-\ell+1 ) \right) \alpha \\
			& ~~  + v_2 \left( \left( \sum_{i=\ell+1}^k  H(S_n^{[i]}) \right) - (n-\ell-1) H(S_n^{[\ell]}) \right)  - \frac{v_3}{2}\sum_{i=\ell}^k H(S_n^{[i-1]}) \\
			& \utag{b}{=}  \frac{\Gamma_{k,d,\ell}}{d}  \alpha+ v_2 \left( \left( \sum_{i=\ell+1}^k  H(S_n^{[i]}) \right) - (n-\ell-1) H(S_n^{[\ell]}) \right)  - \frac{v_3}{2}\sum_{i=\ell}^k H(S_n^{[i-1]}),
\end{align*}
where \uref{a} follows because  $v_1 = \frac{1}{4}(k-\ell-2)v_3$, and \uref{b} can be justified as follows:
\begin{align*}
	v_1 \frac{\Gamma_{k,d,\ell-1}}{d} +   \frac{v_3}{2}(k-\ell+1 ) 
	& =  \left( \frac{k-\ell-2}{4} \frac{\Gamma_{k,d,\ell-1}}{d} +   \frac{k-\ell+1}{2} \right)   v_3 \\
	& = \frac{(n-\ell+1)(n-\ell)(k-\ell-2)+4d(k-\ell+1)}{8d}  v_3 \\	
		& = \frac{(n-\ell) (n-\ell-1)}{2(n-1)} \\
		& = \frac{\Gamma_{k,d,\ell}}{d}.
\end{align*}

Finally, we claim that 
\begin{equation}
\label{eq:k_eq_d_larger_temple10}
	v_2 \left( \left( \sum_{i=\ell+1}^k  H(S_n^{[i]}) \right) - (n-\ell-1) H(S_n^{[\ell]}) \right)  - \frac{v_3}{2}\sum_{i=\ell}^k H(S_n^{[i-1]}) \leq 0.
\end{equation}
Towards this end, by re-arranging the left-hand side of \eqref{eq:k_eq_d_larger_temple10}, we have 
\begin{align*}
	& v_2 \left( \left( \sum_{i=\ell+1}^k  H(S_n^{[i]}) \right) - (n-\ell-1) H(S_n^{[\ell]}) \right)  - \frac{v_3}{2}\sum_{i=\ell}^k H(S_n^{[i-1]})  \\
	& = v_2  \left( \sum_{i=\ell+1}^k  H(S_n^{[i]}) \right) - v_2 (n-\ell-1) H(S_n^{[\ell]})  - \frac{v_3}{2}\sum_{i=\ell-1}^{k-1} H(S_n^{[i]}) \\
	& = v_2    H(S_n^{[k]}) +  \left(v_2 - \frac{v_3}{2} \right) \left( \sum_{i=\ell+1}^{k-1}  H(S_n^{[i]}) \right) - \left(v_2 (n-\ell-1) + \frac{v_3}{2} \right) H(S_n^{[\ell]})- \frac{v_3}{2} H(S_n^{[\ell-1]}).
\end{align*}

Since $v_2 - \frac{v_3}{2} \geq 0$ for $\ell \geq 1$, we have 
\begin{align*}
	& v_2 \left( \left( \sum_{i=\ell+1}^k  H(S_n^{[i]}) \right) - (n-\ell-1) H(S_n^{[\ell]}) \right)  - \frac{v_3}{2}\sum_{i=\ell}^k H(S_n^{[i-1]})  \\
	& = v_2    H(S_n^{[k]}) +  \left(v_2 - \frac{v_3}{2} \right) \left( \sum_{i=\ell+1}^{k-1}  H(S_n^{[i]}) \right) - \left(v_2 (n-\ell-1) + \frac{v_3}{2} \right) H(S_n^{[\ell]})- \frac{v_3}{2} H(S_n^{[\ell-1]}) \\
	& \utag{a}{\leq}  \left( \frac{k}{\ell} v_2 + \left(v_2 - \frac{v_3}{2} \right) \sum_{i=\ell+1}^{k-1} \frac{i}{\ell} - \left(v_2 (n-\ell-1) + \frac{v_3}{2} \right) - \frac{v_3}{2} \frac{\ell-1}{\ell}
	\right) H(S_n^{[\ell]}) \\
	& =\frac{1}{2\ell} \left( 2kv_2 + \left(2v_2 - v_3 \right) \left(\sum_{i=\ell+1}^{k-1} i\right) - \left( 2 v_2 (n-\ell-1) - v_3  \right)\ell - v_3(\ell-1) \right) H(S_n^{[\ell]}) \\
	& \utag{b}{=} 0,
\end{align*}
where \uref{a} follows because $\frac{1}{i} H(S_n^{[i]}) \leq \frac{1}{j} H(S_n^{[j]})$ for $n> i \geq j$, which is the consequence of Han's inequality and the symmetry of the problem, and 
\uref{b} can be justified by substituting $v_2$ and $v_3$.

\section{Sufficient condition of wiretap nodes for $k < d$}
\label{Sec:k_less_d}
In this section, we consider the general setting that $k < d$. We will provide a lower bound $\hat{\ell}$ on the number of wiretap nodes such that if $\ell \geq \hat{\ell}$, then $(k,d,\ell) \in \cP$.
Shao \textit{et~al.} \cite{Shao} showed that $(k,d,\ell) \in \cP$ for $\ell \geq \ell^{\star}$. 
It will be shown that $\hat{\ell} \leq \ell^{\star}$.

\subsection{Our approach}
\label{Subsec:approach}
By letting $\cK=[k]$ and $\cL=[\ell]$ in \eqref{eq:SC_Outer}, we obtain that for any given $d$, $k$ and $\ell$, the secrecy capacity $B_s$  is upper bounded by  
\[B_s \leq H\left(\cT|S^{[\ell]} \right),\]
for any $\cT$ such that $H\left(W_{[k]}|\cT \right) = 0$.

Similar to what we did in the last section, we will select $\cT$ in different ways to obtain a number of upper bounds on $B_s$, and then take a convex combination of them to derive an upper bound that depends only on $\alpha$. 
Consider any set of variables $\cT = \{S^{[\ell]}\}\cup\{X_y: y=\ell+1,\ldots,k\}$, where $X_y$ can either be $W_y$ or $S^y$. 
Then 
\begin{equation}
\label{eq:k_less_d_X}
 	B_s \leq  \sum_{y=\ell+1}^k H\left (X_y|S^{[\ell]}, X_{[\ell+1:y-1]} \right).
\end{equation} 
We can use a $(k-\ell)$-length binary vector $\mathbf{q} := (q_{\ell+1},\ldots,q_{k})$ to represent the choices of $X_y$, $\ell+1 \leq y \leq k$, where 
\begin{equation}
\label{eq:k_less_d_b_j}
	q_y = 
	\begin{cases} 
	      0, & \text{if} \ X_{y} = W_{y}, \\
	      1, & \text{if} \ X_{y} = S^{y}.
	\end{cases}
\end{equation}
Clearly, each possible $\mathbf{q}$ induces an upper bound on $B_s$.

By symmetry we know that $H\left (X_y|S^{[\ell]}, X_{[\ell+1:y-1]} \right)$ depends on $\{q_{\ell+1},\ldots,q_{y}\}$ only through $q_{y}$ and $\sum_{i=\ell+1}^{y-1} q_i$. 
Hence, we have
\begin{equation}
\label{eq:k_less_d_symmetry}
	H\left (X_y|S^{[\ell]}, X_{[\ell+1:y-1]} \right) = H\left (X_y| S^{[t_y]}, W_{[t_y+1:y-1]}  \right),
\end{equation}
where
\begin{equation}
\label{eq:k_less_d_t}
	t_y=\ell+ \sum_{i=\ell+1}^{y-1} q_i.
\end{equation}
The following lemma gives upper bounds on $H\left (W_y| S^{[t_y]}, W_{[t_y+1:y-1]} \right)$ and $H\left (S^y| S^{[t_y]}, W_{[t_y+1:y-1]} \right)$.
\begin{lemma}
\label{lem:k_less_d_tech}
For any $y=\ell+1,\ldots,k$,
\begin{equation}
\label{eq:k_less_d_S_bound}
	H\left (S^y| S^{[t_y]}, W_{[t_y+1:y-1]}\right) \leq \frac{d+1-y}{d-t_y} H\left(S^{t_y+1}|S^{[t_y]} \right), ~ t_y=\ell,\ldots,y-1,
\end{equation}
and
\begin{equation}
\label{eq:k_less_d_W_bound}
	H\left (W_y| S^{[t_y]}, W_{[t_y+1:y-1]}\right) \leq 
	\begin{cases}
		\alpha -  H \left (S^{[y-2]}_n \right)   +\frac{d+1-y}{d+1-t_y} H \left (S^{t_y}| S^{[t_y-1]} \right)-  H \left (S^{y-1}|S^{[y-2]} \right),  & \ell \leq t_y \leq y-2, \\
		\alpha -  H \left (S^{[y-1]}_n \right),  & t_y=y-1.
	\end{cases}
\end{equation}
\end{lemma}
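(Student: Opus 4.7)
My plan rests on two structural facts already exploited in the paper: (i) the functional relation $S_i^j = \phi_{i,j,\cD}(W_i)$, so that every ``column'' $S_i^{\cdot}$ is a function of $W_i$ and every ``row'' $S^j$ determines $W_j$; and (ii) the entropy-symmetry assumption on the code, so that any conditional entropy depending only on the relative pattern of indices is permutation-invariant. Each inequality will reduce to a Han-style averaging argument over a suitable symmetric family.

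\emph{Proof of \eqref{eq:k_less_d_S_bound}.} First I would trim deterministic terms: since $W_i$ determines $S_i^y$, and $W_i$ is recoverable from $S^i$ for $i\in[t_y]$ and supplied directly for $i\in[t_y+1:y-1]$, the block $\{S_i^y:i\in[y-1]\}$ is determined by the conditioning, so
$H(S^y\mid S^{[t_y]},W_{[t_y+1:y-1]}) = H(\{S_j^y : j\in[y+1:n]\}\mid S^{[t_y]},W_{[t_y+1:y-1]})$.
Rewriting this as $H(S^y\mid S^{[t_y]}) - I(S^y; W_{[t_y+1:y-1]}\mid S^{[t_y]})$ and applying symmetry ($H(S^y\mid S^{[t_y]}) = H(S^{t_y+1}\mid S^{[t_y]})$), it suffices to show
$I(S^y; W_{[t_y+1:y-1]}\mid S^{[t_y]}) \geq \tfrac{y-1-t_y}{d-t_y}\,H(S^{t_y+1}\mid S^{[t_y]})$.
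Data processing (since $W_i$ determines $S_i^y$, the conditioning determines $S^y_{[t_y+1:y-1]}$) gives $I(S^y; W_{[t_y+1:y-1]}\mid S^{[t_y]}) \geq H(S^y_{[t_y+1:y-1]}\mid S^{[t_y]})$, and the symmetric Han inequality on the $d-t_y$ interchangeable helpers $\{S_j^y : j\in[t_y+1:n]\setminus\{y\}\}$ conditioned on $S^{[t_y]}$ yields
$\tfrac{H(S^y_{[t_y+1:y-1]}\mid S^{[t_y]})}{y-1-t_y} \geq \tfrac{H(S^y\mid S^{[t_y]})}{d-t_y}$.
Using $1 - \tfrac{y-1-t_y}{d-t_y} = \tfrac{d+1-y}{d-t_y}$ finishes the argument.

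\emph{Proof of \eqref{eq:k_less_d_W_bound}.} For the boundary case $t_y=y-1$, I would write $H(W_y\mid S^{[y-1]}) = \alpha - I(W_y;S^{[y-1]})$, observe that $W_y$ determines $S_y^{[y-1]}\subseteq S^{[y-1]}$, so $I(W_y;S^{[y-1]}) \geq I(W_y;S_y^{[y-1]}) = H(S_y^{[y-1]}) = H(S_n^{[y-1]})$ by the $y\leftrightarrow n$ permutation symmetry, which is the desired bound. For $t_y\leq y-2$, I would first establish the chain-rule identity
$H(W_y\mid S^{[t_y]},W_{[t_y+1:y-1]}) = H(W_y\mid S^{[y-1]}) + I(W_y; S^{[t_y+1:y-1]}\mid S^{[t_y]},W_{[t_y+1:y-1]})$
(by adding $S^{[t_y+1:y-1]}$ to the conditioning and invoking the regeneration property $S^i\Rightarrow W_i$), upper-bound the first term by the $t_y=y-1$ case, and then control the residual mutual information by a Han-style averaging parallel to Step 1—this time on the $d+1-t_y$ symmetric symbols $\{S^{t_y}_j : j \in [t_y+1:n]\}$, which is exactly where the coefficient $\tfrac{d+1-y}{d+1-t_y}$ is born, while the corrective $-H(S^{y-1}\mid S^{[y-2]})$ accounts for the layer-$(y-1)$ entropy that is double-counted when splitting $H(S_n^{[y-1]})$ into $H(S_n^{[y-2]})$ plus the top layer.

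\emph{Main obstacle.} Inequality \eqref{eq:k_less_d_S_bound} is clean once one spots the correct mutual-information decomposition and aligns it with Han's inequality over the right symmetric family. Inequality \eqref{eq:k_less_d_W_bound} is more delicate: the $W$-conditioning no longer commutes as transparently with the chain rule, and two distinct ``layers'' (the bottom layer $t_y$ and the top layer $y-1$) both leak information about $W_y$. Matching the coefficient $\tfrac{d+1-y}{d+1-t_y}$ and the negative contribution $-H(S^{y-1}\mid S^{[y-2]})$ against the correct Han-style averages over the appropriate symmetric symbol sets is the principal bookkeeping challenge, and the one where I would most expect to have to iterate the argument.
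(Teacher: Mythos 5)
Your proof of~\eqref{eq:k_less_d_S_bound} is correct and, after unwinding the mutual-information rewriting, it is the same argument as the paper's: trim the determined block $S^y_{[y-1]}$, replace $W_{[t_y+1:y-1]}$ by $S^y_{[t_y+1:y-1]}$, and invoke Han's inequality on the $d-t_y$ exchangeable components of $S^y$ given $S^{[t_y]}$. The boundary case $t_y=y-1$ of~\eqref{eq:k_less_d_W_bound} is likewise correct and identical in substance to the paper's.

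The genuine gap is in the case $\ell\le t_y\le y-2$ of~\eqref{eq:k_less_d_W_bound}. Your decomposition
\[
H\bigl(W_y \mid S^{[t_y]},W_{[t_y+1:y-1]}\bigr)=H\bigl(W_y\mid S^{[y-1]}\bigr)+I\bigl(W_y;S^{[t_y+1:y-1]}\mid S^{[t_y]},W_{[t_y+1:y-1]}\bigr)
\]
is a valid identity, but it leaves you needing to prove
\[
I\bigl(W_y;S^{[t_y+1:y-1]}\mid S^{[t_y]},W_{[t_y+1:y-1]}\bigr)\le H\bigl(S_n^{y-1}\mid S_n^{[y-2]}\bigr)+\tfrac{d+1-y}{d+1-t_y}H\bigl(S^{t_y}\mid S^{[t_y-1]}\bigr)-H\bigl(S^{y-1}\mid S^{[y-2]}\bigr),
\]
and there is no Han-style averaging that directly delivers a right-hand side with competing $\pm H(S^{y-1}\mid\cdot)$ terms. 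Moreover the provenance you propose for the factor $\frac{d+1-y}{d+1-t_y}$ is wrong: it is not a Han average over the components $\{S_j^{t_y}\}$ of the repair data to node $t_y$; in the paper it arises from Han's inequality applied to the components of $S^y$ over $d+1-t_y$ exchangeable helper positions when the conditioning is $S^{[t_y-1]},W_{[t_y:y-1]}$, i.e.\ it is~\eqref{eq:k_less_d_S_bound} with $t_y$ replaced by $t_y-1$ (cf.~\eqref{eq:Appendix-E:temple2}).

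The missing idea is a symmetry maneuver that sidesteps this mutual information entirely. The paper first trades $S^{t_y}$ for $W_{t_y}$ in the conditioning and writes
\[
H\bigl(W_y\mid S^{[t_y]},W_{[t_y+1:y-1]}\bigr)=H\bigl(S^y\mid S^{[t_y-1]},W_{[t_y:y-1]}\bigr)-H\bigl(S^y\mid W_y,S^{[t_y-1]},W_{[t_y:y-1]}\bigr)-I\bigl(W_y;S^{t_y}\mid S^{[t_y-1]},W_{[t_y:y-1]}\bigr),
\]
then applies the transposition $y\leftrightarrow t_y$ to the middle term (turning it into $H(S^{t_y}\mid W_{t_y},S^{[t_y-1]},W_y,W_{[t_y+1:y-1]})$), so that it telescopes against the mutual-information term to leave
\[
H\bigl(S^y\mid S^{[t_y-1]},W_{[t_y:y-1]}\bigr)-H\bigl(S^{t_y}\mid S^{[t_y-1]},W_{[t_y:y-1]}\bigr),
\]
and finally applies the cyclic shift on $[t_y:y-1]$ to rewrite the subtrahend as $H(S^{y-1}\mid S^{[t_y-1]},W_{[t_y:y-1]})$. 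Now both terms share the same conditioning and are bounded separately: the first by~\eqref{eq:Appendix-E:temple2}, and the second from \emph{below} by $H(S^{y-1}\mid S^{[y-2]})-\alpha+H(S_n^{[y-2]})$, giving exactly the asserted inequality. Without this double symmetry exchange you are left trying to bound a residual mutual information for which no clean single-shot estimate is apparent; this is why your own note flags this step as the place you would have to iterate.
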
  
\begin{proof}
See Appendix \ref{App:k_less_d_tech}.
\end{proof}
By combining \eqref{eq:k_less_d_X}, \eqref{eq:k_less_d_symmetry}, \eqref{eq:k_less_d_S_bound} and \eqref{eq:k_less_d_W_bound}, we can obtain an upper bound on $B_s$ for any given $\mathbf{q}$.
By examining \eqref{eq:k_less_d_S_bound} and \eqref{eq:k_less_d_W_bound}, we see that the right-hand sides of them may contain the terms $\alpha$, $H\left (S^{j}| S^{[j-1]} \right)$ for $j=\ell,\ldots,k$, and $H\left(S^{[j]}_n \right)$ for $j=\ell,\ldots,k-1$. 
Hence, let us specify the mapping $f$ from any $(k-\ell)$-length binary vector to the corresponding upper bound, which can be written as
\begin{equation}
\label{eq:k_less_d_temple_fQ}
	f\left(\mathbf{q}\right) =  \left( \sum_{j=\ell}^{k-1} \nu_j \right) \alpha -  \sum_{j=\ell}^{k-1} \nu_j H\left(S_n^{[j]}\right) +  \sum_{j=\ell}^{k} \mu_j H\left(S^j|S^{[j-1]}\right),
\end{equation}
where $\nu_j$ and $\mu_j$ can be determined by the given $\mathbf{q}$. Note that from \eqref{eq:k_less_d_W_bound} we know that the coefficient of $\alpha$ can be determined by the sum of the coefficients of $H\left(S_n^{[j]}\right)$ for $j=\ell,\ldots,k-1$.

Furthermore, we consider an $m\times(k-\ell)$ binary matrix 
\begin{equation}
\label{eq:k_less_d_Q}
	Q = 
\begin{bmatrix}
    \mathbf{q}_1 \\
    \mathbf{q}_2 \\
    \vdots \\
    \mathbf{q}_m    
\end{bmatrix}	
= 
\begin{bmatrix}
    q_{1,\ell+1} & q_{1,\ell+2} & \cdots & q_{1,k}\\
    q_{2,\ell+1} & q_{2,\ell+2} & \cdots & q_{2,k}\\
    \vdots       & \vdots       & \ddots & \vdots\\
    q_{m,\ell+1} & q_{2,\ell+2} & \cdots & q_{m,k}   
\end{bmatrix},
\end{equation}
where each $\mathbf{q}_x, 1 \leq x \leq m$  is some binary row vector defined in \eqref{eq:k_less_d_b_j}, and the first column of $Q$ is labeled by the index $\ell+1$ for consistency.  The parameter $t_y$ (cf. \eqref{eq:k_less_d_t}) in Lemma~\ref{lem:k_less_d_tech} corresponding to the row $\mathbf{q}_x$, where $1 \leq x \leq m$, is given by
 \[t_{x,y}=\ell + \sum_{i=\ell+1}^{y-1} q_{x,i}.\]
For each $\mathbf{q}_x$, we can obtain from \eqref{eq:k_less_d_temple_fQ} the upper bound 
\begin{equation}
	f\left(\mathbf{q}_x\right) =  \alpha \sum_{j=\ell}^{k-1} \nu_{x,j}  -  \sum_{j=\ell}^{k-1} \nu_{x,j} H\left(S_n^{[j]}\right) +  \sum_{j=\ell}^{k} \mu_{x,j} H\left(S^j|S^{[j-1]}\right).
\end{equation}
With a slight abuse of notations, we write
\begin{align*}
	f(Q) 
	& = \sum_{x=1}^m f\left(\mathbf{q}_x\right) \\
	& = \sum_{x=1}^m \left( \alpha \sum_{j=\ell}^{k-1} \nu_{x,j}  -  \sum_{j=\ell}^{k-1} \nu_{x,j} H\left(S_n^{[j]}\right) +  \sum_{j=\ell}^{k} \mu_{x,j} H\left(S^j|S^{[j-1]}\right)  \right) \\
	& = \alpha \sum_{j=\ell}^{k-1} \sum_{x=1}^m \nu_{x,j}  -  \sum_{j=\ell}^{k-1} \sum_{x=1}^m \nu_{x,j} H\left(S_n^{[j]}\right) +  \sum_{j=\ell}^{k} \sum_{x=1}^m \mu_{x,j} H\left(S^j|S^{[j-1]}\right).  
\end{align*}
By denoting $\bar{\nu}_{j} = \frac{1}{m} \sum_{x=1}^m \nu_{x,j}$ and $\bar{\mu}_{j} = \frac{1}{m} \sum_{x=1}^m \bar{\mu}_{x,j}$, we have
\begin{equation}
\label{eq:k_less_d_fQ}
	f(Q)= m \left(\alpha \sum_{j=\ell}^{k-1} \bar{\nu}_j  -  \sum_{j=\ell}^{k-1} \bar{\nu}_j H\left(S_n^{[j]}\right) +  \sum_{j=\ell}^{k} \bar{\mu}_j H\left(S^j|S^{[j-1]}\right)\right).
\end{equation}
It is clear that $f(Q)$ is an upper bound on $m B_s$. By dividing $m$ on both sides of \eqref{eq:k_less_d_fQ}, we have
\begin{equation}
\label{eq:k_less_d_Qeff}
	B_s \leq \frac{1}{m} f(Q) =  \left( \sum_{j=\ell}^{k-1} \bar{\nu}_j \right)\alpha -  \sum_{j=\ell}^{k-1} \bar{\nu}_j H\left(S_n^{[j]}\right) + \sum_{j=\ell}^{k} \bar{\mu}_j H\left(S^j|S^{[j-1]}\right).
\end{equation}
Clearly, for any $(k,d,\ell)$, if there exists a $m \times(k-\ell)$ matrix $Q$ satisfying $\frac{1}{m} f(Q) \leq \frac{\Gamma_{k,d,\ell} }{d} \alpha$, then $(k,d,\ell) \in \cP$.

Now, we claim that if the conditions 
\begin{align}
	\sum_{j=\ell}^{k-1} \bar{\nu}_j & = \frac{\Gamma_{k,d,\ell} }{d}, \label{eq:k_less_d_Condition1} \\
	\bar{\mu}_j & \geq 0, j=\ell,\ldots,k, \label{eq:k_less_d_Condition2} \\
	\bar{\delta_j} & \geq 0, j=\ell+1,\ldots,k   \label{eq:k_less_d_Condition3}
\end{align}
are satisfied, where 
\begin{equation*}
\bar{\delta_j} = (d+1-j)\bar{\mu}_j - \sum_{i= j}^{k-1} \bar{\nu}_i,  j=\ell+1,\ldots,k,
\end{equation*}
then right hand side of \eqref{eq:k_less_d_Qeff} is upper bounded by $\frac{\Gamma_{k,d,\ell}}{d} \alpha$.

To see this, focus on the right hand side of \eqref{eq:k_less_d_Qeff}. By recalling from \eqref{eq:k_eq_d_bound_S} that 
\[H\left(S^j|S^{[j-1]}\right) \leq (d+1-j) H\left(S_n^j|S_n^{[j-1]}\right),\]
we have
\begin{align}
	\frac{1}{m} f(Q) & = \left( \sum_{j=\ell}^{k-1} \bar{\nu}_j \right)\alpha -  \sum_{j=\ell}^{k-1} \bar{\nu}_j H\left(S_n^{[j]}\right) + \sum_{j=\ell}^{k} \bar{\mu}_j H\left(S^j|S^{[j-1]}\right) \nonumber \\
	& \utag{a}{=} \frac{\Gamma_{k,d,\ell} }{d} \alpha -  \sum_{j=\ell}^{k-1} \bar{\nu}_j H\left(S_n^{[j]}\right) + \sum_{j=\ell}^{k} \bar{\mu}_j H\left(S^j|S^{[j-1]}\right) \nonumber \\
	& \utag{b}{\leq} \frac{\Gamma_{k,d,\ell}}{d} \alpha - \sum_{j=\ell}^{k-1} \bar{\nu}_j H\left(S_n^{[j]}\right) + \sum_{j=\ell}^{k} (d+1-j)\bar{\mu}_j H\left(S_n^j|S_n^{[j-1]}\right), \label{eq:k_less_d_new_temple2}
\end{align}
where \uref{a} follows from \eqref{eq:k_less_d_Condition1} and \uref{b} follows from \eqref{eq:k_less_d_Condition2}.

Since
\begin{align*}
	\sum_{j = \ell}^{k-1} \bar{\nu}_j H (S_n^{[j]}) 
	& =  \sum_{j = \ell}^{k-1} \bar{\nu}_j \left( H (S_n^{[\ell]}) + \sum_{i = \ell+1 }^{j} H (S_n^i | S_n^{[i - 1]}) \right)\\
  & =  \left( \sum_{j = \ell}^{k-1} \bar{\nu}_j \right) H (S_n^{[\ell]}) + \sum_{j = \ell}^{k-1} \sum_{i = \ell+1}^{j} \bar{\nu}_j H (S_n^i|S_n^{[i - 1]})   \\
  & =  \left( \sum_{j = \ell}^{k-1} \bar{\nu}_j \right) H (S_n^{[\ell]}) + \sum_{i = \ell+1}^{k-1} \sum_{j = i}^{k-1} \bar{\nu}_j H (S_n^i|S_n^{[i - 1]})   \\
  & \utag{c}{=}  \left( \sum_{j = \ell}^{k-1} \bar{\nu}_j \right) H (S_n^{[\ell]}) + \sum_{j = \ell+1}^{k-1} \left(\sum_{i = j}^{k-1} \bar{\nu}_i\right) H (S_n^j|S_n^{[j - 1]}),
\end{align*}
where in \uref{c}, the indices $i$ and $j$ in the double summation are renamed as $j$ and $i$, respectively,
we obtain 
\begin{align*}
	\frac{1}{m} f(Q)
	& \leq \frac{\Gamma_{k,d,\ell} }{d} \alpha - \sum_{j = \ell}^{k-1} \bar{\nu}_j H (S_n^{[j]}) + \sum_{j=\ell}^{k} (d+1-j)\bar{\mu}_j H\left(S_n^j|S_n^{[j-1]}\right) \\
	& = \frac{\Gamma_{k,d,\ell} }{d} \alpha - \left( \sum_{j = \ell}^{k-1} \bar{\nu}_j \right) H (S_n^{[\ell]}) - \sum_{j = \ell+1}^{k-1} \left(\sum_{i = j}^{k-1} \bar{\nu}_i\right) H (S_n^j|S_n^{[j - 1]}) + \sum_{j=\ell}^{k} (d+1-j)\bar{\mu}_j H\left(S_n^j|S_n^{[j-1]}\right) \\
	& = \frac{\Gamma_{k,d,\ell} }{d} \alpha -\frac{\Gamma_{k,d,\ell} }{d} H (S_n^{[\ell]}) - \sum_{j = \ell+1}^{k} \left(\sum_{i = j}^{k-1} \bar{\nu}_i\right) H (S_n^j|S_n^{[j - 1]}) + \sum_{j=\ell}^{k} (d+1-j)\bar{\mu}_j H\left(S_n^j|S_n^{[j-1]}\right) \\
	& \utag{d}{\leq} \frac{\Gamma_{k,d,\ell} }{d} \alpha -\frac{\Gamma_{k,d,\ell} }{d} H (S_n^{[\ell]}) - \sum_{j = \ell+1}^{k-1} \left(\sum_{i = j}^{k-1} \bar{\nu}_i\right) H (S_n^j|S_n^{[j - 1]}) + \sum_{j=\ell+1}^{k} (d+1-j)\bar{\mu}_j H\left(S_n^j|S_n^{[j-1]}\right)  \\
	& ~~ +  \frac{d+1-\ell}{\ell} \bar{\mu}_\ell  H\left(S_n^{[\ell]}\right)   \\
	& = \frac{\Gamma_{k,d,\ell} }{d} \alpha - \left( \frac{\Gamma_{k,d,\ell} }{d}- \frac{d+1-\ell}{\ell} \bar{\mu}_\ell \right) H (S_n^{[\ell]})  + \sum_{j=\ell+1}^{k} \left( (d+1-j)\bar{\mu}_j -\left( \sum_{i= j}^{k-1} \bar{\nu}_i  \right) \right) H\left(S_n^j|S_n^{[j-1]}\right)  \\
	& = \frac{\Gamma_{k,d,\ell} }{d} \alpha -  \left(\frac{\Gamma_{k,d,\ell} }{d}- \frac{d+1-\ell}{\ell} \bar{\mu}_\ell \right) H (S_n^{[\ell]})  + \sum_{j=\ell+1}^{k} \bar{\delta_j} H\left(S_n^j|S_n^{[j-1]}\right),
\end{align*}
where \uref{d} follows from Han's inequality.

Since $H\left(S_n^i|S_n^{[i-1]}\right) \leq H\left(S_n^j|S_n^{[j-1]}\right)$ for $i \geq j$,
we have
\begin{align}
	\frac{1}{m} f(Q)
	& \utag{e}{\leq}  \frac{\Gamma_{k,d,\ell} }{d} \alpha -  \left(\frac{\Gamma_{k,d,\ell} }{d}- \frac{d+1-\ell}{\ell} \bar{\mu}_\ell \right) H (S_n^{[\ell]})  + \sum_{j=\ell+1}^{k} \bar{\delta_j}  H\left(S_n^{\ell+1}|S_n^{[\ell]}\right)  \nonumber \\
	& \utag{f}{\leq}  \frac{\Gamma_{k,d,\ell} }{d} \alpha -  \left(\frac{\Gamma_{k,d,\ell} }{d}- \frac{d+1-\ell}{\ell} \bar{\mu}_\ell \right) H (S_n^{[\ell]})    +\frac{1}{\ell}\left( \sum_{j=\ell+1}^{k} \bar{\delta_j} \right) H\left(S_n^{[\ell]}\right) \nonumber \\ 
	& = \frac{\Gamma_{k,d,\ell} }{d} \alpha -  \left(\frac{\Gamma_{k,d,\ell} }{d}- \frac{d+1-\ell}{\ell} \bar{\mu}_\ell- \frac{1}{\ell}\left( \sum_{j=\ell+1}^{k} \bar{\delta_j} \right) \right) H (S_n^{[\ell]}), \label{eq:k_less_d_new_temple3} 
\end{align}
where \uref{e} follows from \eqref{eq:k_less_d_Condition3} and \uref{f} follows from Han's inequality. 

\begin{proposition}
\label{proposition:k_less_d:approach:final}
	$\frac{\Gamma_{k,d,\ell} }{d}- \frac{d+1-\ell}{\ell} \bar{\mu}_\ell- \frac{1}{\ell}\left( \sum_{j=\ell+1}^{k} \bar{\delta_j} \right) = 0$.
\end{proposition}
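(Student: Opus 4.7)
The plan is to reduce Proposition~\ref{proposition:k_less_d:approach:final} to a single algebraic identity between the averaged coefficients and then verify that identity by a row-by-row tally of the contributions produced by Lemma~\ref{lem:k_less_d_tech}. First I would substitute $\bar{\delta}_j = (d+1-j)\bar{\mu}_j - \sum_{i=j}^{k-1}\bar{\nu}_i$ into the claim, swap the order of summation via $\sum_{j=\ell+1}^{k}\sum_{i=j}^{k-1}\bar{\nu}_i = \sum_{i=\ell+1}^{k-1}(i-\ell)\bar{\nu}_i$, and apply condition~\eqref{eq:k_less_d_Condition1} to replace $\Gamma_{k,d,\ell}/d$ by $\sum_{i=\ell}^{k-1}\bar{\nu}_i$. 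After routine cancellation, the proposition reduces to the equivalent identity
\[
\sum_{j=\ell}^{k}(d+1-j)\,\bar{\mu}_j \;=\; \sum_{i=\ell}^{k-1} i\,\bar{\nu}_i.
\]

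Next I would establish this reduced identity one row of $Q$ at a time. Fix a row $\mathbf{q}_x$ with associated parameters $t_{x,y}$ defined in~\eqref{eq:k_less_d_t}, and classify each $y \in [\ell+1,k]$ by the three cases of Lemma~\ref{lem:k_less_d_tech}: (A) $q_{x,y}=1$, which adds $\frac{d+1-y}{d-t_{x,y}}$ to $\mu_{x,t_{x,y}+1}$ and thereby contributes $d+1-y$ to $\sum_j(d+1-j)\mu_{x,j}$ and $0$ to $\sum_i i\,\nu_{x,i}$; (B) $q_{x,y}=0$ with $t_{x,y}=y-1$, which adds $+1$ to $\nu_{x,y-1}$ and contributes $0$ and $y-1$, respectively; (C) $q_{x,y}=0$ with $t_{x,y}<y-1$, where the weight $\frac{d+1-y}{d+1-t_{x,y}}$ on $\mu_{x,t_{x,y}}$ yields $d+1-y$ and the weight $-1$ on $\mu_{x,y-1}$ yields $-(d+2-y)$, telescoping to $-1$ in $\sum_j(d+1-j)\mu_{x,j}$, while $\nu_{x,y-2}$ receives $+1$ and contributes $y-2$ to $\sum_i i\,\nu_{x,i}$.

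Combining the per-index contributions, the difference at column $y$ simplifies to $(d+1-y)q_{x,y} + (1-y)(1-q_{x,y}) = d\,q_{x,y} + (1-y)$. Summing over $y=\ell+1,\ldots,k$ and using $\sum_{y=\ell+1}^{k}(d+1-y) = \Gamma_{k,d,\ell}$, the per-row difference collapses to
\[
\sum_{j}(d+1-j)\,\mu_{x,j} - \sum_{i} i\,\nu_{x,i} \;=\; \Gamma_{k,d,\ell} - d\,n_{0,x},
\]
where $n_{0,x}$ is the number of zeros in $\mathbf{q}_x$. Since each zero contributes exactly $+1$ to $\sum_j\nu_{x,j}$ via Case B or C and each one contributes $0$, one has $\sum_j\bar{\nu}_j = \frac{1}{m}\sum_x n_{0,x}$; condition~\eqref{eq:k_less_d_Condition1} then forces $\frac{1}{m}\sum_x n_{0,x} = \Gamma_{k,d,\ell}/d$. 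Averaging the per-row identity therefore yields $\Gamma_{k,d,\ell} - d\cdot(\Gamma_{k,d,\ell}/d) = 0$, establishing the reduced identity and hence the proposition.

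The main obstacle will be the Case C bookkeeping, where three separate coefficient updates must be shown to telescope exactly to $-1$; a secondary subtlety is that the reduced identity $\sum_j(d+1-j)\mu_j = \sum_i i\,\nu_i$ fails row by row (as small examples confirm) and emerges only upon averaging against the row-weight constraint imposed by condition~\eqref{eq:k_less_d_Condition1}. Conditions~\eqref{eq:k_less_d_Condition2} and~\eqref{eq:k_less_d_Condition3} play no role in this proposition.
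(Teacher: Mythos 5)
Your proof is correct and takes essentially the same route as the paper: both substitute the definition of $\bar{\delta}_j$, reorganize the double sum, and then count per-$(x,y)$ contributions from Lemma~\ref{lem:k_less_d_tech}, closing the argument with condition~\eqref{eq:k_less_d_Condition1} via the observation that $\sum_{j}\bar{\nu}_j$ equals the average number of zeros per row of $Q$. Your reduction to the identity $\sum_{j=\ell}^{k}(d+1-j)\bar{\mu}_j=\sum_{i=\ell}^{k-1}i\,\bar{\nu}_i$ followed by the per-row telescoping $\Gamma_{k,d,\ell}-d\,n_{0,x}$ is a somewhat cleaner packaging of the same bookkeeping the paper carries out via the indicator sets $\Lambda_1,\Lambda_2,\Lambda_3,\Delta_1,\Delta_2,\Delta,\Delta'$.
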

\begin{proof}
See Appendix~\ref{Appendix:k_less_d:approach:final}.
\end{proof}
We can see easily  that $\frac{1}{m}f(Q)$ is upper bounded by $\frac{\Gamma_{k,d,\ell} }{d} \alpha$ from \eqref{eq:k_less_d_new_temple3} and Proposition~\ref{proposition:k_less_d:approach:final}. Therefore, we have shown that for any $(k,d,\ell)$, if there exists a matrix $Q$ such that $f(Q)$ satisfies \eqref{eq:k_less_d_Condition1}, \eqref{eq:k_less_d_Condition2} and \eqref{eq:k_less_d_Condition3}, then $(k,d,\ell) \in \cP$.

\subsection{Main results}
\label{Sec:MainResults}
The following theorem gives the main result of this section.
\begin{theorem}
\label{thm:k_less_d}
	 The triple $(k, d, \ell) \in \cP$ if $\ell = k-1$ or 
	 \begin{equation}
	  \label{eq:k_less_d_threshold}
	  \begin{cases}
	  	d(d-\ell-1) - \frac{1}{2}(2d-k-\ell+1)(2d+k-3\ell-5) \geq 0, & \ell \leq k-4, \\
	  	k \geq \frac{1}{3}(d+8), & \ell= k - 3, \\
	  	k \geq \frac{1}{4}(d+7), & \ell= k - 2.			
	  \end{cases}
	 \end{equation} 
\end{theorem}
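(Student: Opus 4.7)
The plan is to apply the framework established in Subsection~\ref{Subsec:approach}: for each parameter regime in \eqref{eq:k_less_d_threshold}, I will exhibit an explicit binary matrix $Q$ whose averaged coefficients $\bar{\nu}_j,\bar{\mu}_j$ satisfy the three conditions \eqref{eq:k_less_d_Condition1}--\eqref{eq:k_less_d_Condition3}. Once such a $Q$ is produced, the chain of inequalities culminating in \eqref{eq:k_less_d_new_temple3}, together with Proposition~\ref{proposition:k_less_d:approach:final}, yields $B_s \leq \tfrac{1}{m}f(Q) \leq \tfrac{\Gamma_{k,d,\ell}}{d}\alpha$, and \eqref{eq:alphabound} then gives $(k,d,\ell) \in \cP$.

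The boundary case $\ell = k-1$ should be handled separately, because the vectors $\mathbf{q}$ have length one and the framework degenerates. Here I would take $\cT = W_{[k]}$ (equivalently $\mathbf{q}=(0)$), so from \eqref{eq:k_less_d_W_bound} with $t_k = k-1$,
\[
B_s \leq H(W_k \mid S^{[k-1]}) \leq \alpha - H(S_n^{[k-1]}),
\]
and then combine $H(S_n^{[k]}) \leq \alpha$ with Han's inequality (applied to $S_n^{[k]}$) to push the second term up to $\tfrac{k-1}{k}\alpha$ appropriately re-scaled, so that the resulting bound matches $\tfrac{d-k+1}{d}\alpha = \tfrac{\Gamma_{k,d,k-1}}{d}\alpha$; if a single row does not suffice, a two-row $Q$ mixing $(0)$ and $(1)$ with the right weight will, since the weight is a single free scalar and the target is a single linear constraint.

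For $\ell \leq k-2$ the main construction is a matrix $Q$ whose rows encode a carefully chosen ensemble of choices of $(X_{\ell+1},\ldots,X_k)$. A natural candidate is to average over all binary vectors with a fixed Hamming weight $s$, or to take a cyclic/shifted staircase in which each row places its $1$-block in a different position. Either family makes the averaged coefficients $(\bar{\nu}_j,\bar{\mu}_j)$ explicit rational functions of $(n,k,\ell,s)$. One then uses $s$ as a tunable integer parameter, fixed by the normalization \eqref{eq:k_less_d_Condition1}, after which the positivity constraints \eqref{eq:k_less_d_Condition2} and \eqref{eq:k_less_d_Condition3} reduce to concrete algebraic inequalities in $k,d,\ell$. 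I expect the three branches of \eqref{eq:k_less_d_threshold} to correspond to the three regimes in which different constraints become binding: for $\ell \leq k-4$ the binding constraint is $\bar{\delta}_{\ell+1} \geq 0$, which produces $d(d-\ell-1) - \tfrac{1}{2}(2d-k-\ell+1)(2d+k-3\ell-5)\geq 0$; for $\ell = k-3$ and $\ell = k-2$, several $\bar{\delta}_j$ degenerate or vanish, leaving the simpler inequalities $k \geq \tfrac{1}{3}(d+8)$ and $k \geq \tfrac{1}{4}(d+7)$ respectively.

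The main obstacle is the identification of the right matrix $Q$. The space of candidate matrices is combinatorially large, and the coefficients $\nu_{x,j},\mu_{x,j}$ depend intricately on the positions of $1$'s in each row through the parameter $t_{x,y}=\ell+\sum_{i=\ell+1}^{y-1}q_{x,i}$, so a poor choice of $Q$ yields a vacuous bound. In effect, one is solving a small parametric linear-programming feasibility problem whose boundary is exactly \eqref{eq:k_less_d_threshold}. Once the correct family of rows is guessed, the verification splits into routine but lengthy algebra: compute $\bar{\nu}_j, \bar{\mu}_j$ in closed form, verify the equality \eqref{eq:k_less_d_Condition1}, and then verify that the minimum over $j$ of $\bar{\mu}_j$ and $\bar{\delta}_j$ is nonnegative precisely under the stated thresholds. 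The delicate point is the identification of which $\bar{\delta}_j$ is critical in each regime, since this dictates which branch of \eqref{eq:k_less_d_threshold} applies.
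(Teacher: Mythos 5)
Your high-level plan is the right one and matches the paper's: exhibit a binary matrix $Q$, read off $(\bar\nu_j,\bar\mu_j)$ via Lemma~\ref{lem:k_less_d_tech}, and check conditions~\eqref{eq:k_less_d_Condition1}--\eqref{eq:k_less_d_Condition3}, after which \eqref{eq:k_less_d_new_temple3} and Proposition~\ref{proposition:k_less_d:approach:final} close the bound. Your treatment of $\ell=k-1$ by a weighted mix of the rows $(0)$ and $(1)$ is also essentially what the paper does (the paper picks $z_{\ell+1}=z_k = \Gamma_{k,d,\ell}/d$). You also correctly guess that $\bar\delta_{\ell+1}\ge 0$ is the driving constraint for $\ell\le k-4$.

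However, what you defer as ``the main obstacle'' is in fact the theorem: you never produce $Q$, and the two families you float would not produce it. Averaging over all vectors of a fixed Hamming weight $s$ (and likewise any cyclically symmetric ensemble) forces equal column heights $z_j\equiv z$; the normalization~\eqref{eq:k_less_d_Condition1} then fixes $z=(2d-k-\ell+1)/(2d)$, all the $c_j$ vanish, and one gets $\bar\delta_j=(d+1-j)-(2d+k+1-3j)z$. That profile does not deliver the stated threshold. Concretely, take $(k,d,\ell)=(30,50,24)$: here $\ell\le k-4$ and $d(d-\ell-1)-\tfrac12(2d-k-\ell+1)(2d+k-3\ell-5)=1250-1245.5=4.5\ge 0$, so the triple is covered by \eqref{eq:k_less_d_threshold}; yet with uniform $z=47/100$ one finds $\bar\delta_{\ell+1}=26-56\cdot 0.47=-0.32<0$, so \eqref{eq:k_less_d_Condition3} fails and the construction collapses. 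The paper avoids this by using an echelon-form $Q$ whose column heights $z_j$ in \eqref{eq:k_less_d_zj} are deliberately non-uniform: $z_{\ell+1}$ is (roughly) twice the middle value, the interior columns share a common value $(2d-k-\ell+1)/(2d)$, and $z_k$ is smaller or zero, constrained so that $z_{\ell+1}+z_k=(2d-k-\ell+1)/d$. This asymmetry is exactly what moves the coefficient of $H(S^{\ell+1}\mid S^{[\ell]})$ from $2d+k-3\ell-2$ down to $d+2k-3\ell-3$ in $\bar\delta_{\ell+1}$, and makes the threshold in \eqref{eq:k_less_d_threshold} attainable. Supplying that $z$-profile, and then carrying out the case-by-case verification that every $\bar\delta_j\ge0$ precisely under \eqref{eq:k_less_d_threshold} (the content of Proposition~\ref{Proposition:delta}, split across $\ell=k-2$, $\ell=k-3$, and $\ell\le k-4$ with the sub-observation that $g(\ell)\ge 0$ forces $\ell\ge d-k+1$), is the substance of the argument that is missing from your proposal.
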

Before proving Theorem \ref{thm:k_less_d}, we first discuss some consequences of the theorem.

\begin{enumerate}
\item 
Let 
\[\cP_{s}:= \left\{(k,d,\ell): \ell=k-1 ~ \text{or} ~ \eqref{eq:k_less_d_threshold} ~ \text{is satisfied}\right\},\]
and for fixed $k$ and $d$ define 
\begin{equation}
\label{eq:k_less_d:def_hat_ell}
	\hat{\ell} :=  \min\left\{\ell \geq 1: (k,d,\ell) \in \cP_{s} \right\}.
\end{equation}
Note that $\hat{\ell}$ is well defined since $(k,d,\ell=k-1) \in \cP_{s}$ for any given $k$ and $d$.
Then, we claim that for fixed $k$ and $d$, $(k,d,\ell) \in \cP_{s}$ for $\ell \geq \hat{\ell}$. Clearly, to prove the claim, it is sufficient to show that if $(k,d,\ell) \in \cP_{s}$, then $(k,d,\ell+1) \in \cP_{s}$ for $\ell \leq k-2$.
Since the case $\ell=k-2$ is trivial, we consider $\ell \leq k-3$. By inspecting \eqref{eq:k_less_d_threshold}, we can easily see that if $(k,d, \ell = k-3) \in \cP_{s}$, then $(k,d, \ell = k-2) \in \cP_{s}$. Also, if $(k,d,\ell = k-4) \in \cP_{s}$, then the condition in the first line of \eqref{eq:k_less_d_threshold} is satisfied, which can be rewritten as
\[(2k-d-6)(d-k+3)+\frac{1}{2} \geq 0.\]
Since $k$ and $d$ are integers, we have $(2k-d-6)(d-k+3) \geq 0$, and hence
$k \geq \frac{1}{2}(d+6) \geq \frac{1}{3}(d+8)$, which implies that $(k,d,\ell = k-3) \in \cP_{s}$. Thus, it remains to show that if $(k,d,\ell) \in \cP_{s}$ for $\ell < k-4$, then $(k,d,\ell+1) \in \cP_{s}$. 

\hspace{1em} Towards this end, let
\begin{equation}
\label{eq:main_resulst:g}
	g(\ell) = d(d-\ell-1) - \frac{1}{2}(2d-k-\ell+1) (2d+k-3\ell-5), ~ 1 \leq \ell \leq k-4.
\end{equation}
Clearly $(k,d,\ell) \in \cP_{s}$ for $\ell \leq k-4$ if and only if $g(\ell) \geq 0$. Then we need to show that if $g(\ell) \geq 0$ for some $\ell < k-4$, then $g(\ell+1) \geq 0$.
For the quadratic equation $g(\ell) = 0$, the discriminant is $3(d-k)^2 + 12(d-4)+(k-8)^2$, which is nonnegative provided that $d \geq 4$. This condition is guaranteed because we have
\[d \geq k \geq \ell+4 \geq 5,\]
where the second inequality follows from the range of $\ell$ in \eqref{eq:main_resulst:g}. Thus the two roots of $g(\ell)=0$ are real and they are given by
\[\ell_1 = \frac{1}{3}\left(3d-k-1 - \sqrt{3(d-k)^2+12(d-4)+(k-8)^2} \right),\]
and
\[\ell_2 = \frac{1}{3}\left(3d-k-1 + \sqrt{3(d-k)^2+12(d-4)+(k-8)^2} \right).\]
Since the leading coefficient of $g(\ell)$ is negative, we see that $g(\ell) \geq 0$ if and only if $\ell_1 \leq \ell \leq \ell_2$.
Consider
\[\ell_2 \geq \frac{1}{3}\left(3d-k-1 + |k-8| \right) \geq \frac{1}{3}\left(3d-k-1 + k-8 \right) = d-3 、\geq k-3.\]
Then, if $g(\ell) \geq 0$ for some $\ell < k-4$, we have
\[\ell+1 < k-3 \leq \ell_2,\]
which implies that $g(\ell+1) \geq 0$, as is to be proved.
Thus we have shown that for fixed $k$ and $d$, $(k,d,\ell) \in \cP_{s}$ for $\ell \geq \hat{\ell}$. Since it is clear that $\cP_{s} \subseteq \cP$, we conclude that for fixed $k$ and $d$, $(k,d,\ell) \in \cP$ for $\ell \geq \hat{\ell}$.

\item
We claim that Theorem~\ref{thm:k_less_d} improves the existing result in Shao \textit{et~al.} \cite{Shao}, where they showed that $(k,d,\ell) \in \cP$ if 
\begin{equation}
\label{eq:Shao}
	\ell \geq \ell^{\star}:= \min\left\{\ell' \geq 1: \Gamma_{k,d,\ell'} \leq d+\sqrt{d\ell'}\right\}.
\end{equation}
Let
\begin{equation*}
	\cP_{r}:= \left\{(k,d,\ell): \ell \geq \ell^{\star} \right\}.
\end{equation*}
Recall that 
\[\Gamma_{k,d,\ell} =\sum_{i=\ell}^{k-1}(d-i) =\frac{1}{2}(k-\ell)(2d-k-\ell+1).\]
Evidently, $\Gamma_{k,d,\ell}$ is decreasing with $\ell$ while $d+\sqrt{d\ell}$ is increasing with $\ell$, and so we have
\begin{equation}
\label{eq:main_result:Pr}
	\cP_{r} = \left\{(k,d,\ell): \ell \geq \ell^{\star} \right\} = \left\{(k,d,\ell): \Gamma_{k,d,\ell} \leq d+\sqrt{d\ell} \right\}.
\end{equation}

We will justify our claim by first showing that $\cP_{r} \subseteq \cP_{s}$, or equivalently $\hat{\ell} \leq \ell^{\star}$.
For fixed $k$ and $d$, assume that $(k,d,\ell) \in \cP_{r}$, and we will prove that $(k,d,\ell) \in \cP_{s}$. It is trivial for the case $\ell = k-1$ because $(k,d,\ell) \in \cP_s$ always holds by definition. If $(k,d,\ell=k-2) \in \cP_{r}$, we can obtain from \eqref{eq:main_result:Pr} that 
\[k \geq \frac{1}{8}\left(5d - \sqrt{d(9d-8)} +12  \right) > \frac{1}{8}\left(5d - 3d +12  \right) = \frac{1}{4}(d+6).\]
Since $k$ and $d$ are integers, we must have $k \geq \frac{1}{4}(d+7)$, and hence $(k,d,\ell=k-2) \in \cP_{s}$. Similarly, if $(k,d,\ell=k-3) \in \cP_{r}$, we can obtain that
\[k \geq \frac{1}{18}\left(13d - \sqrt{d(25d-36)} +36  \right) > \frac{1}{18}\left(13d - 5d +36  \right) = \frac{1}{9}(4d +18).\]
Since $1 \leq \ell = k-3 \leq d -3$, we know that $d \geq 4$. If $d =4$, we have $k > \frac{34}{9}$. Since $k$ must be an integer, we have $k \geq 4 = \frac{1}{3}(d+8)$. If $d \geq 5$, since $k$ and $d$ are integers, we must have $k \geq \frac{1}{9}(4d+19) = \frac{1}{3}(d+8) + \frac{1}{9}(d-5) \geq \frac{1}{3}(d+8)$. Then we know that $(k,d,\ell=k-3) \in \cP_{s}$.

\hspace{1em} It remains to show that if $(k,d,\ell) \in \cP_{r}$ for $\ell \leq k-4$, then $(k,d,\ell) \in \cP_{s}$. 
Let
\[h(\ell) = d+\sqrt{d\ell} - \Gamma_{k,d,\ell}, ~ 1 \leq \ell \leq k-4.\]
Then $(k,d,\ell) \in \cP_{r}$ for some $\ell \leq k-4$ if and only if $h(\ell) \geq 0$ for some $\ell \leq k-4$. 
We claim that if $h(\ell) \geq 0$ for some $\ell \leq k-4$, then $\ell \geq \ell_0$, where $\ell_0 = \left\lceil \frac{1}{2}(d-1)\right\rceil$. This can be substantiated by contradiction. Assume the contrary that  $\ell \leq \ell_0 - 1$. Then we have
\begin{align}
	h\left(\ell \right)
	& = d+\sqrt{d \ell} - \Gamma_{k,d,\ell} \nonumber \\
	& = d+\sqrt{d \ell} - \frac{1}{2}(k-\ell)(2d-k-\ell+1) \nonumber \\
	& \utag{a}{\leq} d+\sqrt{d\ell} - 2(2d-2\ell-3) \nonumber\\
	& \utag{b}{\leq} d+\sqrt{d(\ell_0 - 1)} - 2(2d-2(\ell_0 - 1)-3) \nonumber\\
	& = d + \sqrt{ d \left(\left\lceil \frac{1}{2}(d-1)\right\rceil -1 \right)} - 2\left(2d-2\left\lceil \frac{1}{2}(d-1)\right\rceil -1\right)  \nonumber\\
	& \utag{c}{\leq} d + \sqrt{ d \left( \frac{1}{2}d -1 \right)} - 2\left(2d-d -1\right) \nonumber\\
	& = d + \sqrt{  \frac{1}{2}d(d-2)} - 2\left(d-1\right) \nonumber\\
	& < d + \sqrt{  \frac{1}{2}(d-1)^2} - 2\left(d-1\right) \nonumber\\
	& = \frac{\sqrt{2}-2}{2} (d-1) +1 \nonumber\\
	& \utag{d}{<} 0, \label{eq:k_less_d:discuss_h}
\end{align}
where \uref{a} follows from $k \geq \ell+4$ and $\frac{1}{2}(k-\ell)(2d-k-\ell+1)$ is increasing with $k$ when $k \leq d$; \uref{b} follows from the assumption $\ell \leq \ell_0 - 1$; \uref{c} follows from $\left\lceil \frac{1}{2}(d-1)\right\rceil \leq  \frac{1}{2}d$, and \uref{d} follows from $1 \leq \ell \leq k-4 \leq d-4$. Clearly, \eqref{eq:k_less_d:discuss_h} contradicts with the assumption that $h\left(\ell \right) \geq 0$, and hence we know that
if $h(\ell) \geq 0$ for some $\ell \leq k-4$, then $\ell \geq \ell_0$. 
Next, we will show that $g(\ell) \geq 0$ for $\ell_0 \leq \ell \leq k-4$.
Consider
\begin{align*}
	g(\ell) 
	& = d(d-\ell-1)-\frac{1}{2}(2d-k-\ell+1)(2d+k-3\ell-5) \\
	& \utag{e}{\geq} d(d-\ell-1)-\frac{1}{2}(2d-2\ell-3)(2d-2\ell-1)  \\
	& = d(d-\ell-1)-2(d-\ell-1)^2+\frac{1}{2} \\
	& = (d-\ell-1)(2\ell-d+2)+\frac{1}{2} \\
	& \utag{f}{\geq} (d-\ell-1)(2\ell_0-d+2)+\frac{1}{2} \\
	& \utag{g}{\geq} (d-\ell-1)+\frac{1}{2} \\
	& \geq 0, 
\end{align*}
where \uref{e} follows because $\frac{1}{2}(2d-k-\ell+1)(2d+k-3\ell-5)$ is decreasing with $k$ when $k \geq \ell+4$, \uref{f} follows from $\ell \geq \ell_0$, and \uref{g} follows from $2\ell_0-d+2 \geq 1$. 
Hence, we have shown that $g(\ell) \geq 0$ if  $h(\ell) \geq 0$ for some $\ell \leq k-4$. Therefore we can conclude that $\cP_{r} \subseteq \cP_{s}$, or $\hat{\ell} \leq \ell^{\star}$.

\hspace{1em} Finally, to see that there is a gap between $\ell^{\star}$ and $\hat{\ell}$, let us consider the example $d = 32$ and $k = 31$. For this case, we can easily check that the first case in \eqref{eq:k_less_d_threshold} is satisfied for $\ell = 12$, but is not satisfied for $\ell = 11$. Thus we obtain $\hat{\ell} = 12$. Also, by substituting $k=31$ and $d=32$ in \eqref{eq:Shao}, we have
\[\ell^{\star} = \min\left\{\ell \geq 1: \frac{1}{2}(31-\ell)(34-\ell) \leq 32+\sqrt{32\ell}\right\}.\]
Since the condition $\frac{1}{2}(31-\ell)(34-\ell) \leq 32+\sqrt{32\ell}$ is satisfied for $\ell =22$ but not for $\ell =21$, we obtain $\ell^{\star} = 22$. Therefore, there is a gap between $\hat{\ell}$ and $\ell^{\star}$, and this gap can be large.

\end{enumerate}


\subsection{Proof of Theorem \ref{thm:k_less_d}}
\label{subsec:thm1}
From the previous discussion, we know that for any $(k,d,\ell)$, if there exists a matrix $Q$ such that $f(Q)$ satisfies \eqref{eq:k_less_d_Condition1}, \eqref{eq:k_less_d_Condition2} and \eqref{eq:k_less_d_Condition3}, then $(k,d,\ell) \in \cP$. In this subsection, we will show the existence of a qualified matrix $Q$ for each $(k,d,\ell) \in \cP_s$. In particular, we consider $Q$ satisfying the following conditions:
\begin{enumerate}
\item If $q_{x,y}=0$, then $q_{x',y}=0$ for all $x' \leq x$;
\item If $q_{x,y}=0$, then $q_{x,y'}=0$ for all $y' \leq y$.
\end{enumerate}
These conditions say that the zeros and ones in the matrix $Q$ exhibit an echelon form, as depicted in Fig.~\ref{pic2}. 

\begin{figure}[htbp]
\centering
\begin{tikzpicture}[scale=0.5,yscale=1]
\def \parashort {0.5};
\def \slack {0.3};
\def \a {2};
\def \b {4};
\def \c {1};
\def \d {6};
\def \e {2};
\def \f {4};
\def \g {1};
\def \h {1};
\def \i {3};
\def \j {8};
\def \k {2};
\def \l {2};
\def \m {2};
\def \paralong  {13.5};
\coordinate (A) at       (\slack  ,  \slack+\a);
\coordinate (B) at ($(A) + (\b ,  0 )$);
\coordinate (C) at ($(B) + (0  ,  \c)$);
\coordinate (D) at ($(C) + (\d ,  0 )$);
\coordinate (E) at ($(D) + (0  ,  \e)$);
\coordinate (F) at ($(E) + (\f ,  0)$);
\coordinate (G) at ($(F) + (0  ,  \g)$);
\coordinate (H) at ($(G) + (\h ,  0)$);
\coordinate (I) at ($(H) + (0  ,  \i)$);
\coordinate (J) at ($(I) + (\j ,  0)$);
\coordinate (K) at ($(J) + (0  ,  \k)$);
\coordinate (L) at ($(K) + (\l ,  0)$);
\coordinate (M) at ($(L) + (0 ,  \m)$);
\gettikzxy{(L)}{\rightparax}{\ay}
\gettikzxy{(M)}{\ax}{\righttop}
\draw[line width=0.5mm] (\parashort , \paralong) -- (0, \paralong) -- (0,0) -- (\parashort, 0);
\draw[line width=0.5mm, xshift=\slack cm] ($(\rightparax, \paralong)-(\parashort,0)$) -- (\rightparax, \paralong) -- (\rightparax,0) -- ($(\rightparax, 0)-(\parashort,0)$);

\draw[dashed, line width=0.5mm] (A) -- (B) -- (C) -- (D) -- (E) -- (F) -- (G) -- (H) -- (I) -- (J) -- (K) -- (L);
\path[fill=black!20!white]  (\slack, \righttop) -- (A) -- (B) -- (C) -- (D) -- (E) -- (F) -- (G) -- (H) -- (I) -- (J) -- (K) -- (L) -- (\rightparax, \righttop) -- cycle;

\node[scale=3] (0) at (6,10) {$0$};
\node[scale=3] (1) at (21,4)  {$1$};
\end{tikzpicture}
\caption{An illustration of $Q$ in the proof of Theorem.~\ref{thm:k_less_d}.}
\label{pic2}
\end{figure}
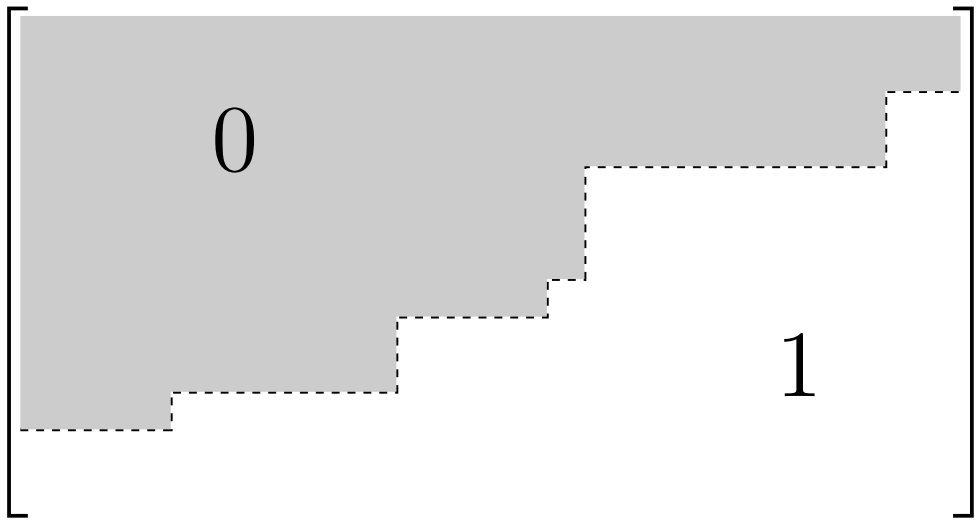

Any matrix illustrated in Fig.~\ref{pic2} can be uniquely represented by a set of rational numbers $\left\{z_j:j=\ell+1,\ldots,k\right\}$ such that $0 \leq z_j \leq 1$ and $z_i \leq z_j$ if $i \geq j$, where $m z_j$ corresponds to the number of zeros in the $j$-th column.

Now, for any $(k,d,\ell)$, let
\begin{equation}
\label{eq:k_less_d_zj}
	z_j=
	\begin{cases}
		\min\left\{\frac{\Gamma_{k,d,\ell}}{d}, \frac{2d-k-\ell+1}{d},1\right\},    & j=\ell+1,  \\
		\frac{2d-k-\ell+1}{2d},                   & j=\ell+2,\ldots,k-1, \\
		\max\left\{0, \frac{d-k-\ell+1}{d}\right\},                      & j=k ~ \text{and}~ \ell+1<k. 
	\end{cases}
\end{equation}
Note that when $\ell=k-1$, we have
\begin{equation}
\label{eq:Sec_less:ell_k1}
	z_{\ell+1} = z_k = \min\left\{\frac{\Gamma_{k,d,\ell}}{d}, \frac{2d-k-\ell+1}{d},1\right\} = \frac{\Gamma_{k,d,\ell}}{d}.
\end{equation}
It is easy to see that $0 \leq z_j \leq 1$ for all $j$, so we only need to verify that $z_i \leq z_j$ if $i \geq j$. Obviously, we only need to consider $\ell \leq k-2$. Then we have
\[\Gamma_{k,d,\ell} = \frac{1}{2}(k-\ell)(2d-k-\ell+1) \geq 2d-k-\ell+1.\]
Next, let us discuss the two cases $2d-k-\ell+1 \leq d$ and $2d-k-\ell+1 > d$ as follows. 
\begin{enumerate}
	\item If $2d-k-\ell+1 \leq d$, \eqref{eq:k_less_d_zj} can be written as
\begin{equation}
\label{eq:S3:leq_d}
	z_j=
	\begin{cases}
		\frac{2d-k-\ell+1}{d},    & j=\ell+1,  \\
		\frac{2d-k-\ell+1}{2d},                   & j=\ell+2,\ldots,k-1, \\
		0,                      & j=k.
	\end{cases}
\end{equation}
Since 
\[\frac{2d-k-\ell+1}{d} \geq \frac{2d-k-\ell+1}{2d} \geq 0,\]
we see that
$z_i \leq z_j$ if $i \geq j$.
\item If $2d-k-\ell+1 > d$, \eqref{eq:k_less_d_zj} can be written as
		\begin{equation}
		\label{eq:S3:geq_d}
			z_j=
			\begin{cases}
				1,    & j=\ell+1,  \\
				\frac{2d-k-\ell+1}{2d},                   & j=\ell+2,\ldots,k-1, \\
				\frac{d-k-\ell+1}{d},                      & j=k. 
			\end{cases}
		\end{equation}
Since
\[1 \geq \frac{2d-k-\ell+1}{2d} \geq \frac{d-k-\ell+1}{d},\]
we see that
$z_i \leq z_j$ if $i \geq j$.
\end{enumerate}
Therefore, the matrix specified by $z_j$ defined in \eqref{eq:k_less_d_zj} corresponds to the form depicted in Fig.~\ref{pic2}.

In the remaining of this subsection, we will verify that for any $(k,d,\ell) \in \cP_s$, $f(Q)$ satisfies the conditions \eqref{eq:k_less_d_Condition1}, \eqref{eq:k_less_d_Condition2} and \eqref{eq:k_less_d_Condition3}, where $Q$ is determined by \eqref{eq:k_less_d_zj}.
\begin{figure}[htbp]
\centering
\begin{tikzpicture}[scale=0.5,yscale=1]
\def \parashort {0.5};
\def \slack {0.3};
\def \a {2};
\def \b {4};
\def \c {1};
\def \d {6};
\def \e {2};
\def \f {4};
\def \g {1};
\def \h {1};
\def \i {3};
\def \j {8};
\def \k {2};
\def \l {2};
\def \m {2};
\def \paralong  {13.5};
\coordinate (A) at       (\slack  ,  \slack+\a);
\coordinate (B) at ($(A) + (\b ,  0 )$);
\coordinate (C) at ($(B) + (0  ,  \c)$);
\coordinate (D) at ($(C) + (\d ,  0 )$);
\coordinate (E) at ($(D) + (0  ,  \e)$);
\coordinate (F) at ($(E) + (\f ,  0)$);
\coordinate (G) at ($(F) + (0  ,  \g)$);
\coordinate (H) at ($(G) + (\h ,  0)$);
\coordinate (I) at ($(H) + (0  ,  \i)$);
\coordinate (J) at ($(I) + (\j ,  0)$);
\coordinate (K) at ($(J) + (0  ,  \k)$);
\coordinate (L) at ($(K) + (\l ,  0)$);
\coordinate (M) at ($(L) + (0 ,  \m)$);
\gettikzxy{(L)}{\rightparax}{\ay}
\gettikzxy{(M)}{\ax}{\righttop}
\draw[line width=0.5mm] (\parashort , \paralong) -- (0, \paralong) -- (0,0) -- (\parashort, 0);
\draw[line width=0.5mm, xshift=\slack cm] ($(\rightparax, \paralong)-(\parashort,0)$) -- (\rightparax, \paralong) -- (\rightparax,0) -- ($(\rightparax, 0)-(\parashort,0)$);

\path[fill=black!20!white]  (\slack, \righttop) -- (A) -- (B) -- (C) -- (D) -- (E) -- (F) -- (G) -- (H) -- (I) -- (J) -- (K) -- (L) -- (\rightparax, \righttop) -- cycle;

\path[pattern=dots, pattern color=blue] (\slack, \slack) -- (A) -- (\rightparax, \slack+\a) -- (\rightparax, \slack) -- cycle;

\gettikzxy{(B)}{\Bx}{\By}
\gettikzxy{(D)}{\Dx}{\Dy}
\gettikzxy{(E)}{\Ex}{\Ey}
\gettikzxy{(F)}{\Fx}{\Fy}
\gettikzxy{(H)}{\Hx}{\Hy}
\gettikzxy{(I)}{\Ix}{\Iy}

\draw[line width=0.7mm,color=black] ($(I) + (2 , 0)$) -- ($(I) + (2 , \By-\Iy)$);
\draw[line width=0.7mm,color=black] ($(I) + (1 , 0)$) -- ($(I) + (1 , \By-\Iy)$);
\draw[line width=0.7mm,color=black] ($(I) + (1 , \By-\Iy)$) -- ($(I) + (2 , \By-\Iy)$);
\draw[line width=0.7mm,color=black] ($(I) + (1 , 0)$) -- ($(I) + (2 , 0)$);
\path[pattern=north west lines, pattern color=red] ($(I) + (1 , 0)$) -- ($(I) + (2 , 0)$) -- ($(I) + (2 , \By-\Iy)$) -- ($(I) + (1 , \By-\Iy)$) -- cycle;

\draw[line width=0.5mm,color=black] ($(H) + (1 , 0)$) -- ($(H) + (2 , 0)$);
\draw[line width=0.5mm,color=black] ($(F) + (2 , 0)$) -- ($(F) + (3 , 0)$);
\draw[line width=0.5mm,color=black] ($(\Ix,\Dy) + (1 , 0)$) -- ($(\Ix,\Dy) + (2 , 0)$);

\path[pattern=crosshatch, pattern color=red]  ($(\Ix,\Dy) + (1 , 0)$) -- ($(F) + (2 , 0)$) -- ($(F) + (3 , 0)$) -- ($(\Ix,\Dy) + (2 , 0)$) -- cycle;

\draw[dashdotted, line width=0.1mm] (E) -- ($(F) + (2 , 0)$);
\draw[dashdotted, line width=0.1mm] ($(E) + (0 , \Dy-\Ey)$) -- ($(\Ix,\Dy) + (1 , 0)$);

\draw[loosely dashed, line width=0.3mm] ($(\Ix,\slack) + (1 , 0)$) -- ($(\Ix,\slack+\a) + (1 , 0)$);
\draw[loosely dashed, line width=0.3mm] ($(I) + (1 , 0)$) -- ($(\Ix,\righttop) + (1 , 0)$);
\draw[loosely dashed, line width=0.3mm] ($(\Ix,\slack) + (2 , 0)$) -- ($(\Ix,0) + (2 ,\slack+\a)$);
\draw[loosely dashed, line width=0.3mm] ($(I) + (2 , 0)$) -- ($(\Ix,0) + (2 ,\righttop)$);

\draw[loosely dashed, line width=0.3mm] ($(\Ex,\slack)+(-1,0)$) -- ($(E) + (-1 , \righttop-\Ey)$);
\draw[loosely dashed, line width=0.3mm] (\Ex,\slack) -- ($(E) + (0 , \righttop-\Ey)$);
\draw[loosely dashed, line width=0.3mm] ($(\Ex,\slack)+(1,0)$) -- ($(E) + (1 , \righttop-\Ey)$);

\node[scale=0.8] (i) at ($(\Ex,\righttop)+(-0.8,0.5)$) {$i$};
\node[scale=0.8] (i') at ($(\Ex,\righttop)+(0.8,0.5)$) {$i+1$};
\node[scale=0.8] (j) at ($(\Ix,\righttop)+(1.5,0.5)$) {$j$};

\node[scale=3] (AA) at (6,10) {$A$};
\node[scale=3] (BB) at (6,1.25) {$B$};
\node[scale=3] (CC) at (22,6) {$C$};
\end{tikzpicture}
\caption{Illustration of three regions $A$, $B$ and $C$ of the matrix $Q$.}
\label{pic3}
\end{figure}
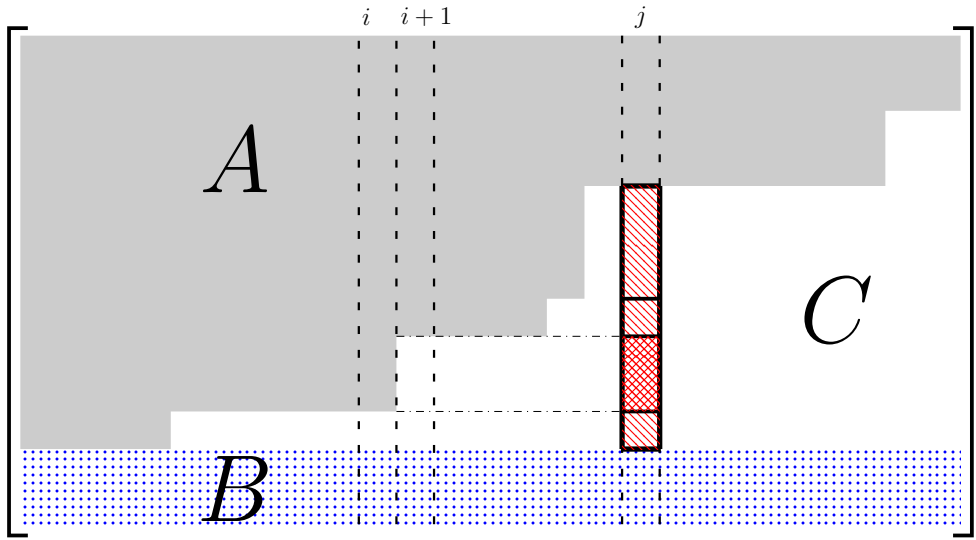
First, we need to write $f(Q)$ explicitly. To do this, we divide the matrix $Q$ into three regions, namely $A$, $B$ and $C$ as illustrated in Fig.~\ref{pic3}.

\begin{enumerate}
\item For the shaded gray region $A=\left\{q_{x,y}: x \leq m z_{y},\ell+1 \leq y \leq k\right\}$, we can easily see that $q_{x,y}=0$ and $t_{x,y}=\ell$. Then by checking the conditions in \eqref{eq:k_less_d_W_bound}, we see that only the elements in the first column, \textit{i.e.} $y=\ell+1$, belong to the second case, while all others belong to the first case. Hence, the total contribution of the region $A$ to $f(Q)$ is given by
	\begin{equation}
	\label{eq:k_less_d_zero}
		 m z_{\ell+1} \left( \alpha - H(S_n^{[\ell]}) \right) + \sum_{j =\ell+2}^k m z_j \left(\alpha-H(S_n^{[j-2]})-H(S^{j-1}|S^{[j-2]})+\frac{d+1-j}{d+1-\ell}H(S^{\ell}|S^{[\ell-1]} ) \right).
	\end{equation}
\item For the dotted area $B=\left\{ q_{x,y}: x > m z_{\ell+1}, \ell+1 \leq y \leq k \right\}$, we can easily see that $q_{x,y}=1$ and $t_{x,y}=y-1$. Hence, we can obtain from \eqref{eq:k_less_d_S_bound} that the total contribution of the region $B$ to $f(Q)$ is given by
	\begin{equation}
	\label{eq:k_less_d_all_one}
		 m (1 - z_{\ell + 1}) \sum_{j = \ell + 1}^k H \left(S^j | S^{[j - 1]} \right).
	\end{equation}

\item For the remaining region $C = \left\{q_{x,y}: m z_{y}< x \leq m z_{\ell+1},\ell+1 \leq y \leq k\right\}$, we consider its contribution to $f(Q)$ column by column. For the column $j$, let $C_j: = \left\{q_{x,j}: m z_{j}< x \leq m z_{\ell+1} \right\}$, which is illustrated as the vertical stripe in Fig.~\ref{pic3}. We further divide $C_j$ into $j-\ell-1$ segments.
Let $C_{j}^i:= \left\{q_{x,j}: m z_{i+1}< x \leq m z_{i} \right\}$ for $i = \ell+1,\ldots,j-1$, where 
\[\bigcup_{i=\ell+1,\ldots,j-1} C_j^i = C_{j}.\]
Note that for a fixed $j$, $C_j^i$ may be empty for some $i$.
Focus on a non-empty $C_j^i$, which is illustrated as the crosshatched segment in Fig.~\ref{pic3}. Then we have $q_{x,y} = 1$ and $t_{x,y} = \ell+j-i-1$. By invoking \eqref{eq:k_less_d_S_bound}, we obtain that the contribution of $C_j^i$ to $f(Q)$ is 
	\begin{equation*}
	    m ( z_i- z_{i + 1})\frac{d+1-j}{d+1-(\ell+j-i)} H(S^{\ell+j-i}|S^{[\ell+j-i-1]}).
	\end{equation*}
It follows that the contribution of $C_j$ to $f(Q)$ is given by 
	\begin{equation*}
	    \sum_{i=\ell+1}^{j-1}m ( z_i- z_{i + 1})\frac{d+1-j}{d+1-(\ell+j-i)} H(S^{\ell+j-i}|S^{[\ell+j-i-1]}).
	\end{equation*} 

Finally, the total contribution of the region $C$ to $f(Q)$ is given by	
	\begin{equation}
	\label{eq:k_less_d_one_zero_temple1}
  		\sum_{j=\ell+1}^k \sum_{i=\ell+1}^{j-1}m (z_i- z_{i + 1})\frac{d+1-j}{d+1-(\ell+j-i)} H(S^{\ell+j-i}|S^{[\ell+j-i-1]}).
	\end{equation}
\end{enumerate}
\vspace{1em}
For the ease of notation in the remaining parts, let us first simplify \eqref{eq:k_less_d_one_zero_temple1}. Consider
\begin{align}
	&  \sum_{j=\ell+1}^k \sum_{i=\ell+1}^{j-1} (z_i-z_{i + 1})\frac{d+1-j}{d+1-(\ell+j-i)} H(S^{\ell+j-i}|S^{[\ell+j-i-1]}) \nonumber\\
	& =  \sum_{j =\ell+1}^k \sum_{p=\ell+1}^{j-1}  (z_{\ell+j-p} - z_{\ell+j-p+1}) \frac{d+1-j}{d+1-p} H(S^p|S^{[p-1]})  \hspace{2em}(p=\ell+j-i)\nonumber\\
	& = \sum_{p =\ell+1}^{k-1} \sum_{j=p+1}^{k}  (z_{\ell+j-p} - z_{\ell+j-p+1}) \frac{d+1-j}{d+1-p} H(S^p|S^{[p-1]}) \nonumber\\
	& \utag{a}{=}  \sum_{j =\ell+1}^{k-1} \left( \sum_{i=j+1}^{k} (z_{\ell+i-j} - z_{\ell+i-j+1})(d+1-i)\right) \frac{1}{d+1-j} H(S^j|S^{[j-1]}), \nonumber
\end{align}
where \uref{a} follows from replacing the indices $p$ and $j$ by $j$ and $i$, respectively.

Let 
\[c_j  = \sum_{i=j+1}^k (z_{\ell+i-j} - z_{\ell+i-j+1}) (d+1-i),~ j=\ell+1,\ldots,k.\]
Then we have
\begin{align*}
	 & \sum_{j =\ell+1}^{k-1} \left( \sum_{i=j+1}^{k} (z_{\ell+i-j} - z_{\ell+i-j+1})(d+1-i)\right) \frac{1}{d+1-j} H(S^j|S^{[j-1]}) \\
	 & =  \sum_{j =\ell+1}^{k-1} \frac{c_j}{d+1-j} H(S^j|S^{[j-1]}) \\
	 & \utag{b}{=} \sum_{j =\ell+1}^{k} \frac{c_j}{d+1-j} H(S^j|S^{[j-1]}),
\end{align*}
where \uref{b} follows from $c_k=0$ by definition. Hence, \eqref{eq:k_less_d_one_zero_temple1} can be written as
\begin{equation}
\label{eq:k_less_d_one_zero_temple2}
	m \sum_{j =\ell+1}^{k} \frac{c_j}{d+1-j} H(S^j|S^{[j-1]}).
\end{equation}

Now, focus on $f(Q)$, which can be obtained by adding \eqref{eq:k_less_d_zero}, \eqref{eq:k_less_d_all_one}, and \eqref{eq:k_less_d_one_zero_temple2} as follows:
\begin{align*}
	f(Q) & =  m z_{\ell+1} \left( \alpha - H(S_n^{[\ell]}) \right) + \sum_{j =\ell+2}^k m z_j \left(\alpha-H(S_n^{[j-2]})-H(S^{j-1}|S^{[j-2]})+\frac{d+1-j}{d+1-\ell}H(S^{\ell}|S^{[\ell-1]} ) \right) \\
	& ~~ + m (1 - z_{\ell + 1}) \sum_{j = \ell + 1}^k H \left(S^j | S^{[j - 1]} \right) + m \sum_{j =\ell+1}^{k} \frac{c_j}{d+1-j} H \left(S^j | S^{[j - 1]} \right).
\end{align*}
By dividing $m$ on both sides, we have
\begin{align}
	\frac{1}{m}f(Q) & =   z_{\ell+1} \left( \alpha - H(S_n^{[\ell]}) \right) + \sum_{j =\ell+2}^k  z_j \left(\alpha-H(S_n^{[j-2]})-H(S^{j-1}|S^{[j-2]})+\frac{d+1-j}{d+1-\ell}H(S^{\ell}|S^{[\ell-1]} ) \right) \nonumber \\
	& ~~ +  (1 - z_{\ell + 1}) \sum_{j = \ell + 1}^k H \left(S^j | S^{[j - 1]} \right) +  \sum_{j =\ell+1}^{k} \frac{c_j}{d+1-j} H \left(S^j | S^{[j - 1]} \right)  \nonumber\\
	&   = \left(\sum_{j =\ell+1}^k z_j \right) \alpha - z_{\ell+1}  H(S_n^{[\ell]}) - \sum_{j =\ell+2}^k z_j H\left(S_n^{[j-2]}\right) 
	  -\sum_{j =\ell+2}^k z_j H\left(S^{j-1}|S^{[j-2]}\right)    \nonumber \\
	& ~~ + \frac{1}{d+1-\ell} \left(\sum_{j =\ell+2}^k z_j (d+1-j)\right) H\left(S^{\ell}|S^{[\ell-1]}\right)
	+ \sum_{j =\ell+1}^k \left(1 - z_{\ell + 1}  +\frac{c_j}{d+1-j}\right) H \left(S^j | S^{[j - 1]} \right) \nonumber \\
	& = \left(\sum_{j =\ell+1}^k z_j \right) \alpha  - z_{\ell+1}  H(S_n^{[\ell]})  - 
	\sum_{j =\ell}^{k-2} z_{j+2} H\left(S_n^{[j]}\right) 
	 - \sum_{j =\ell+1}^{k-1} z_{j+1} H\left(S^{j}|S^{[j-1]}\right)    \nonumber \\
	& ~~ + \frac{1}{d+1-\ell} \left(\sum_{j =\ell+2}^k z_j (d+1-j)\right) H\left(S^{\ell}|S^{[\ell-1]}\right)
	+ \sum_{j =\ell+1}^k \left(1 - z_{\ell + 1}  +\frac{c_j}{d+1-j}\right) H \left(S^j | S^{[j - 1]} \right). \label{eq:k_less_d_1127_temple1}    
\end{align}

For notational simplicity, let us separately discuss the case $\ell=k-1$. For $\ell=k-1$, \eqref{eq:k_less_d_1127_temple1} can be written as 
\begin{align*}
	\frac{1}{m} f(Q) & = z_{\ell+1} \alpha  - z_{\ell+1}  H(S_n^{[\ell]})   + \left(1 - z_{\ell + 1}  \right) H \left(S^{\ell+1} | S^{[\ell]} \right) \\
	&  \utag{c}{=} \frac{\Gamma_{k,d,k-1}}{d} \alpha  - \frac{\Gamma_{k,d,k-1}}{d}  H(S_n^{[\ell]})   + \left(1 - \frac{\Gamma_{k,d,k-1}}{d}  \right) H \left(S^{\ell+1} | S^{[\ell]} \right),
\end{align*}
where \uref{c} follows from \eqref{eq:Sec_less:ell_k1}.
By comparing the coefficients of $\alpha$, $H\left(S^{[j]}_n \right)$ and $H\left (S^{j}| S^{[j-1]} \right)$ with those in \eqref{eq:k_less_d_Qeff}, we have $\bar{\nu}_{\ell}= \frac{\Gamma_{k,d,k-1}}{d}$, $\bar{\mu}_{\ell} =0$ and $\bar{\mu}_{\ell+1} = 1 - \frac{\Gamma_{k,d,k-1}}{d}$, and we can easily check that these coefficients satisfy \eqref{eq:k_less_d_Condition1}, \eqref{eq:k_less_d_Condition2} and \eqref{eq:k_less_d_Condition3}, which implies that if $\ell=k-1$, then $(k,d,\ell) \in \cP$. This result has already been obtained in \cite{Ye} and \cite{Shao}, but the proof here is much shorter (if we confine our discussion to the case $\ell=k-1$).

For $\ell \leq k-2$, by collecting the terms in \eqref{eq:k_less_d_1127_temple1}, we obtain
\begin{equation}
\label{eq:k_less_d_Final_Form}
	\frac{1}{m} f(Q) = \left( \sum_{j=\ell}^{k-1} \bar{\nu}_j \right)\alpha -  \sum_{j=\ell}^{k-1} \bar{\nu}_j H\left(S_n^{[j]}\right) + \sum_{j=\ell}^{k} \bar{\mu}_j H\left(S^j|S^{[j-1]}\right),
\end{equation}
where 
\begin{equation}
\label{eq:mu}
\bar{\mu}_j = 
	\begin{cases}
	\frac{1}{d+1-\ell} \left(\sum_{j =\ell+2}^k z_j (d+1-j)\right),   & j=\ell, \\
	1 - z_{\ell + 1} -z_{j+1} +\frac{c_j}{d+1-j}  , & j=\ell+1,\ldots,k-1, \\
	1 - z_{\ell+1},          & j=k, 
\end{cases}
\end{equation}
and
\begin{equation}
\label{eq:v}
\bar{\nu}_j = 
	\begin{cases}
	z_{\ell+1}+z_{\ell+2},  & j=\ell, \\
	z_{j+2},   & j=\ell+1,\ldots,k-2, \\
	0,          & j=k-1.
\end{cases}
\end{equation}

Recall that we need to check the three conditions \eqref{eq:k_less_d_Condition1}, \eqref{eq:k_less_d_Condition2} and \eqref{eq:k_less_d_Condition3}. 
First, let us check that \eqref{eq:k_less_d_Condition1} is satisfied as follows. 
From \eqref{eq:S3:leq_d} and \eqref{eq:S3:geq_d}, we can see that in both cases, we have
\begin{equation}
\label{eq:S3:z_plus}
	z_{\ell+1} + z_k = \frac{2d-k-\ell+1}{d}.
\end{equation}
Hence we obtain
\[\sum_{j=\ell}^{k-1} \bar{\nu}_j  = \sum_{j=\ell+1}^{k} z_j =\frac{2d-k-\ell+1}{d} + \sum_{j=\ell+2}^{k-1} z_j \utag{d}{=} \frac{2d-k-\ell+1}{d} + (k-\ell-2) \frac{2d-k-\ell+1}{2d}  = \frac{\Gamma_{k,d,\ell}}{d},\]
where \uref{d} follows from \eqref{eq:k_less_d_zj}.

Now, let us verify the conditions \eqref{eq:k_less_d_Condition2} and \eqref{eq:k_less_d_Condition3}. From \eqref{eq:mu}, we see that $\bar{\mu}_{\ell} \geq 0$ and $\bar{\mu}_{k} \geq 0$. Since $\bar{\delta}_k = (d+1-k)\bar{\mu}_k$, we have $\bar{\delta}_{k} \geq 0$. Hence, it remains to show that $\bar{\mu}_{j} \geq 0$ and $\bar{\delta}_{j} \geq 0$ for $j = \ell+1,\ldots,k-1$. We know from \eqref{eq:v} that $\bar{\nu}_j \geq 0$ for all $j$, so we have
\[\bar{\delta}_j = (d+1-j)\bar{\mu}_j - \sum_{i= j}^{k-1} \bar{\nu}_i \leq (d+1-j)\bar{\mu}_j,\]
which implies that if $\bar{\delta}_j \geq 0$, then $\bar{\mu}_j \geq 0$. Therefore, it suffices to prove the following proposition.
\begin{proposition}
\label{Proposition:delta}
	For any $(k,d,\ell) \in \cP_s$, where $\ell \leq k-2$, $\bar{\delta}_j \geq 0$ for $j=\ell+1,\ldots,k-1$.
\end{proposition}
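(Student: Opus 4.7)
The plan is to write $\bar{\delta}_j$ explicitly as a function of the three parameters $(k,d,\ell)$ by substituting the concrete values of $z_j$ from \eqref{eq:k_less_d_zj} into the expressions \eqref{eq:mu} and \eqref{eq:v}, and then argue directly that each condition in \eqref{eq:k_less_d_threshold} is precisely what is needed to guarantee nonnegativity.

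First I would simplify $c_j=\sum_{i=j+1}^{k}(z_{\ell+i-j}-z_{\ell+i-j+1})(d+1-i)$. Because $z_{\ell+2}=z_{\ell+3}=\cdots=z_{k-1}$ by \eqref{eq:k_less_d_zj}, every difference $z_p-z_{p+1}$ in the sum vanishes except possibly at the two ``jumps'' $p=\ell+1$ (i.e., the index $i$ with $\ell+i-j=\ell+1$, namely $i=j+1$) and $p=k-1$ (i.e., $i=k-1+j-\ell+1$, if this falls in the summation range). So $c_j$ collapses to at most two explicit terms involving $z_{\ell+1}-z_{\ell+2}$ and $z_{k-1}-z_{k}$. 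Substituting these into $\bar\mu_j=1-z_{\ell+1}-z_{j+1}+\frac{c_j}{d+1-j}$ and using $\sum_{i=j+2}^{k} z_i=(k-j-2)\,z_{\ell+2}+z_k$ produces a closed-form expression for $\bar\delta_j$ depending only on $(k,d,\ell,j)$.

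Next I would split by the ranges in \eqref{eq:k_less_d_zj}, i.e., according to whether $2d-k-\ell+1\le d$ or $>d$, and handle the easy boundary cases first. When $\ell=k-2$ only $j=k-1$ must be checked, and the required inequality reduces to $k\ge\tfrac14(d+7)$, exactly the third branch of \eqref{eq:k_less_d_threshold}. When $\ell=k-3$ the two indices $j=k-2,k-1$ yield, after simplification, inequalities that are implied by $k\ge\tfrac13(d+8)$, the second branch. These two small cases set the template for the general case.

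For the main case $\ell\le k-4$ I would fix $j\in\{\ell+1,\dots,k-1\}$ and reduce $\bar\delta_j\ge 0$ to a quadratic-type inequality in $d,k,\ell,j$. I expect $\bar\delta_j$ to be a concave (or at least unimodal) function of $j$ in this range, so it suffices to verify the inequality at the endpoints $j=\ell+1$ and $j=k-1$; the remaining values will then follow by monotonicity/concavity. The endpoint $j=k-1$ will reproduce a weaker condition already implied by the first branch of \eqref{eq:k_less_d_threshold}, while $j=\ell+1$ will give exactly $d(d-\ell-1)-\tfrac12(2d-k-\ell+1)(2d+k-3\ell-5)\ge 0$, matching the first branch. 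The main obstacle is the bookkeeping: correctly tracking which of the three candidate values is taken by $z_{\ell+1}$ and by $z_k$ in each parameter regime, and verifying that the unimodality claim (or a direct term-by-term argument) lets the endpoints control the whole range $\ell+1\le j\le k-1$.
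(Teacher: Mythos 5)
Your overall strategy---substitute the explicit $z_j$ from \eqref{eq:k_less_d_zj} into $\bar{\mu}_j,\bar{\nu}_j$, simplify $c_j$ using the fact that $z_{\ell+2}=\cdots=z_{k-1}$, and split into the cases $\ell=k-2$, $\ell=k-3$, $\ell\le k-4$---is exactly what the paper does, and the reductions you predict for $\ell=k-2,\,k-3$ are correct. There are, however, two concrete gaps in the $\ell\le k-4$ case.

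First, the concavity/unimodality claim is wrong, and the proposed reduction to checking only $j=\ell+1$ and $j=k-1$ does not control the interior. After substitution, for $j\in\{\ell+2,\ldots,k-2\}$ the paper's expression is exactly affine in $j$,
\[
\bar{\delta}_j=(d+1-j)-\frac{(2d+k-3j+1)(2d-k-\ell+1)}{2d},
\]
while $\bar{\delta}_{\ell+1}$ and $\bar{\delta}_{k-1}$ come from different formulas and, when you work them out, sit \emph{above} the affine extrapolation (the differences are proportional to $d-k+1\ge 1$). Consequently the sequence $j\mapsto\bar{\delta}_j$ bulges \emph{up} at both ends relative to its interior: the incremental slope from $j=\ell+1$ to $j=\ell+2$ is strictly smaller than the interior slope, and from $j=k-2$ to $j=k-1$ strictly larger, which is the opposite of concavity. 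So one cannot conclude that the minimum is attained at $j=\ell+1$ or $j=k-1$; you still must verify nonnegativity over the interior range. The paper does this by using linearity: the minimum of the interior affine piece is at $j=\ell+2$ or at $j=k-2$ depending on the sign of the slope $\frac{3(2d-k-\ell+1)-2d}{2d}$, and it computes $\bar{\delta}_{\ell+2}$ (which collapses to $g(\ell)/d$) and $\bar{\delta}_{k-2}$ separately.

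Second, the case split into the two $z_j$-branches is not actually needed for $\ell\le k-4$. The paper first observes that $g(\ell)\ge 0$ together with $\ell\le k-4$ forces $d-2k+4\le 0$, which gives $\ell_1\ge d-k+1$ and hence $\ell\ge d-k+1$. This places the parameters in the branch $z_k=0$, $z_{\ell+1}=(2d-k-\ell+1)/d$ throughout, removing one layer of casework and making the subsequent algebra feasible. Without this observation your computation would have to carry both branches, and in the branch $\ell<d-k+1$ the inequality $g(\ell)\ge 0$ does not directly produce the factor needed to close the argument.
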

\begin{proof}
	See Appendix~\ref{Appendix:proposition:delta}.
\end{proof}


\section{Conclusion and discussion}
\label{Sec:conclusion}
In this paper, we study the problem of secure distributed storage systems where the eavesdropper has the capability to observe the data involved in the repair process. 
Our goal is to characterize parameters $(n,k,d,\ell)$ whose optimal storage-bandwidth tradeoff curve has one corner point and can be determined. Toward this end, we obtained a lower bound $\hat{\ell}$ on the number of wiretap nodes which is tight for $k=d=n-1$. Whether this bound is tight for other values of $n$, $k$ and $d$ is a problem for future research. Our results subsume all the previous related results \cite{Ye_isit,Tandon16,Ye,Shao}.

\appendices
\section{Proof of the optimality of $(\hat{\alpha},\hat{\beta})$}
\label{app:optimality}
We will prove that $(\hat{\alpha},\hat{\beta})$ is on the optimal tradeoff curve by establishing the following outer bound
\begin{equation}
\label{eq:Jan09_temple1}
	\begin{cases}
		\bar{\alpha} + \left(\Gamma_{k,d,\ell} - d\right) \bar{\beta} \geq 1, & \Gamma_{k,d,\ell} > d, \\
		\bar{\alpha} \geq \hat{\alpha}, & \Gamma_{k,d,\ell} \leq d.
	\end{cases}
\end{equation}
It has been shown that $\bar{\beta} \geq \hat{\beta}$ (cf.\eqref{eq:betabound}). As illustrated in Fig.~\ref{pic:opt_SRK}, we can see that the intersection of $\bar{\beta} \geq \hat{\beta}$ and \eqref{eq:Jan09_temple1} is given by $(\hat{\alpha},\hat{\beta})$, no matter whether $\Gamma_{k,d,\ell} > d$ or $\Gamma_{k,d,\ell} \leq d$. As $(\hat{\alpha},\hat{\beta})$ is achievable, we can conclude that $(\hat{\alpha},\hat{\beta})$ must be on the tradeoff curve if \eqref{eq:Jan09_temple1} holds. 
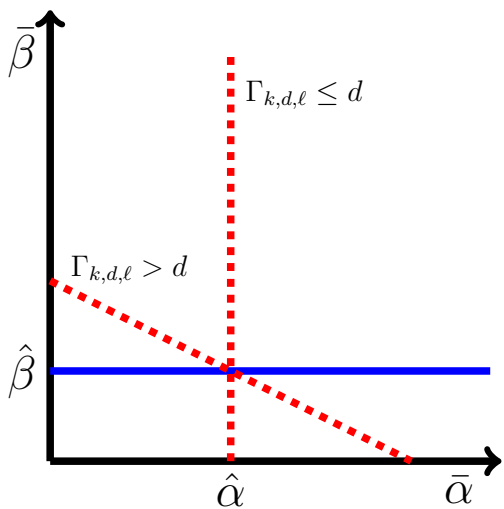
\begin{figure}[htbp]
\centering
\begin{tikzpicture}[scale=0.75,transform shape]
\coordinate (O) at (0,0);
\draw [->, line width=1mm] (O) -- (8,0);
\draw [->, line width=1mm] (O) -- (0,8);

\draw [line width=1mm, blue] (0,1.6) -- (7.8,1.6);
\draw [dashed, line width=1mm, red] (3.2,0) -- (3.2,7.2);
\draw [dashed, line width=1mm, red] (0,3.2) -- (6.4,0);

\node[scale=2] (1) at (7.25,-0.5)  {$\bar{\alpha}$};
\node[scale=2] (1) at (-0.5,7.25)  {$\bar{\beta}$};

\node[scale=2] (1) at (-0.5,1.6)  {$\hat{\beta}$};
\node[scale=2] (1) at (3.2,-0.5)  {$\hat{\alpha}$};

\node[scale=1.25] (1) at (1.4,3.4)  {$\Gamma_{k,d,\ell} > d$};
\node[scale=1.25] (1) at (4.5,6.5)  {$\Gamma_{k,d,\ell} \leq d$};

\end{tikzpicture}
\caption{Two outer bounds in \eqref{eq:Jan09_temple1}.}
\label{pic:opt_SRK}
\end{figure}

We now proceed to \eqref{eq:Jan09_temple1}. By letting $\cT=\left\{S^i:  i = 1, \ldots, k \right\}$ and $\cL=[\ell]$ in \eqref{eq:SC_Outer}, we have
\begin{align}
	B_s & \leq H\left(S^{[\ell+1:k]}|S^{[\ell]}\right) \nonumber \\
		& = \sum_{i=\ell+1}^{k} H\left(S^{i}|S^{[i-1]}\right) \nonumber \\
		& \utag{a}{\leq} \sum_{i=\ell+1}^{k} (d+1-i) H\left(S_n^{i}|S_n^{[i-1]}\right) \nonumber \\
		& \utag{b}{\leq} \sum_{i=\ell+1}^{k} \frac{d+1-i}{\ell} H\left(S_n^{[\ell]}\right) \nonumber \\
		& = \frac{\Gamma_{k,d,\ell}}{\ell} H\left(S_n^{[\ell]}\right), \label{eq:app:optimality_t1}
\end{align}
where \uref{a} follows from \eqref{eq:k_eq_d_bound_S} , and \uref{b} follows from Han's inequality.

Similarly, by letting $\cT=\left\{S^i:  i = 1, \ldots, k-1 \right\} \cup \left\{W_k\right\}$ and $\cL=[\ell]$ in \eqref{eq:SC_Outer}, we have 
\begin{align}
	B_s & \leq H\left(S^{[\ell+1:k-1]},W_k|S^{[\ell]}\right)  \nonumber \\
		& = \sum_{i=\ell+1}^{k-1} H\left(S^{i}|S^{[i-1]}\right) + H\left(W_{k}|S^{[k-1]}\right) \nonumber \\
		& \utag{c}{\leq} \sum_{i=\ell+1}^{k-1} (d+1-i) H\left(S_n^{i}|S_n^{[i-1]}\right) + H\left(W_{k}|S^{[k-1]}\right) \nonumber \\
		& \leq \sum_{i=\ell+1}^{k-1} (d+1-i) H\left(S_n^{i}|S_n^{[i-1]}\right) +  H\left(W_n|S_n^{[k-1]}\right)  \nonumber  \\
		& = \sum_{i=\ell+1}^{k-1} (d+1-i) H\left(S_n^{i}|S_n^{[i-1]}\right) + \alpha - H\left(S_n^{[k-1]}\right)  \nonumber  \\
		& = \alpha + \sum_{i=\ell+1}^{k-1} (d+1-i) H\left(S_n^{i}|S_n^{[i-1]}\right) - \left(H\left(S_n^{[\ell]}\right) + \sum_{i=\ell+1}^{k-1} H\left(S_n^{i}|S_n^{[i-1]}\right)   \right) \nonumber \\
		& = \alpha - H\left(S_n^{[\ell]}\right)  + \sum_{i=\ell+1}^{k-1} (d-i) H\left(S_n^{i}|S_n^{[i-1]}\right) \nonumber \\
		& \utag{d}{\leq} \alpha  - H\left(S_n^{[\ell]}\right) + \sum_{i=\ell+1}^{k-1}  \frac{d-i}{\ell} H\left(S_n^{[\ell]}\right)  \nonumber \\
		& =  \alpha + \frac{1}{\ell} \left(- \ell + \sum_{i=\ell+1}^{k-1} (d-i)  \right)  H\left(S_n^{[\ell]}\right)  \nonumber \\
		& = \alpha +  \frac{\Gamma_{k,d,\ell}-d}{\ell}    H\left(S_n^{[\ell]}\right), \label{eq:app:optimality_t2} 
\end{align}
where \uref{c} follows from \eqref{eq:k_eq_d_bound_S} and \uref{d} follows from Han's inequality.

When $\Gamma_{k,d,\ell} > d$, we know from \eqref{eq:app:optimality_t2} that 
\[B_s \leq \alpha +  \frac{\Gamma_{k,d,\ell}-d}{\ell}    H\left(S_n^{[\ell]}\right) \leq \alpha +  (\Gamma_{k,d,\ell}-d) \beta,\]
which is equivalent to 
\begin{equation*}
	\bar{\alpha} + (\Gamma_{k,d,\ell}-d) \bar{\beta} \geq 1, 
\end{equation*}
\textit{i.e.} the first case of \eqref{eq:Jan09_temple1}.

When $\Gamma_{k,d,\ell} \leq d$, by multiplying \eqref{eq:app:optimality_t1} and \eqref{eq:app:optimality_t2} by  $d-\Gamma_{k,d,\ell}$ and $\Gamma_{k,d,\ell}$ respectively, we have
\[d B_s \leq (d-\Gamma_{k,d,\ell}) \frac{\Gamma_{k,d,\ell}}{\ell} H\left(S_n^{[\ell]}\right) + \Gamma_{k,d,\ell} \left( \alpha +  \frac{\Gamma_{k,d,\ell}-d}{\ell}    H\left(S_n^{[\ell]}\right)\right) = \Gamma_{k,d,\ell} \alpha,\]
which is equivalent to $\bar{\alpha} \geq \hat{\alpha}$, \textit{i.e.} the second case of \eqref{eq:Jan09_temple1}.
Therefore, we have proved \eqref{eq:Jan09_temple1}.

\begin{remark}
Shao \textit{et~al.} \cite{Shao} have proved that $\bar{\alpha} \geq \hat{\alpha}$ if $\Gamma_{k,d,\ell} \leq d$, \textit{i.e.} the second case of \eqref{eq:Jan09_temple1}. The proof therein is very lengthy. The proof of \eqref{eq:Jan09_temple1} here is much simple, and the first case is a new result.
\end{remark}

\section{Proof of Lemma \ref{lem:k_eq_d}}
\label{App:lem_k_eq_d}
First, for $t=0,\ldots,k-2$, we consider
\begin{align}
H(W_{[t+1:k]}|S^{[t]})  & = \sum_{i=t+1}^k H\left(W_i|W_{[t+1:i-1]},S^{[t]} \right) \nonumber \\
						& = \sum_{i=t+1}^k H\left(W_i|S^{[t]}\right) - \sum_{i=t+2}^k I\left(W_i;W_{[t+1:i-1]}|S^{[t]}\right) \nonumber \\
						& \leq \sum_{i=t+1}^k H\left(W_i|S^{[t]}\right)- \sum_{i=t+2}^k I\left(S_i^{[t+1:i-1]};S_{[t+1:i-1]}^i|S^{[t]}\right), \label{eq:app_temple_1}
\end{align}
where \eqref{eq:app_temple_1} is justified because $S_i^{[t+1:i-1]}$ and $S_{[t+1:i-1]}^i$ are functions of $W_i$ and $W_{[t+1:i-1]}$ respectively.

The second term on the right-hand side of \eqref{eq:app_temple_1} can be further bounded as follows:
\begin{align}
 	& \sum_{i=t+2}^k I\left(S_i^{[t+1:i-1]};S_{[t+1:i-1]}^i|S^{[t]}\right)   \nonumber\\
 	& = \sum_{i=t+2}^k  H\left(S_{[t+1:i-1]}^i|S^{[t]}\right) + \sum_{i=t+2}^k H \left(S_i^{[t+1:i-1]}|S^{[t]} \right) - \sum_{i=t+2}^k  H \left(S_i^{[t+1:i-1]},S_{[t+1:i-1]}^i|S^{[t]}\right)  \nonumber\\
 	& \utag{a}{\geq}  \sum_{i=t+2}^k  \frac{i-1-t}{d-t} H\left(S_{\cN \backslash \{i\} \backslash [t]}^i|S^{[t]}\right)   + \sum_{i=t+2}^k H\left(S_i^{[t+1:i-1]}|S^{[t]}\right) - \sum_{i=t+2}^k  H \left(S_i^{[t+1:i-1]},S_{[t+1:i-1]}^i|S^{[t]}\right) \nonumber\\
 	& \utag{b}{\geq} \sum_{i=t+2}^k  \frac{i-1-t}{d-t} H\left(S_{\cN \backslash \{i\} \backslash [t]}^i|S^{[t]}\right) + \sum_{i=t+2}^k H\left(S_i^{[t+1:i-1]}|S^{[t]}\right) -  \sum_{i=t+2}^k  I \left(S^{[t+1:i-1]} ; S^i|S^{[t]}\right) \nonumber \\
 	& = \sum_{i=t+2}^k  \frac{i-1-t}{d-t} H\left(S_{\cN \backslash \{i\} \backslash [t]}^i|S^{[t]}\right) + \sum_{i=t+2}^k H\left(S_i^{[t+1:i-1]}|S^{[t]}\right) - \left( \sum_{i=t+2}^k  H \left(S^i|S^{[t]}\right) - \sum_{i=t+2}^k  H \left(S^i|S^{[i-1]}\right)  \right) \nonumber \\
 	& = \sum_{i=t+2}^k  \frac{i-1-t}{d-t} H\left(S_{\cN \backslash \{i\} \backslash [t]}^i|S^{[t]}\right) + \sum_{i=t+2}^k H\left(S_i^{[t+1:i-1]}|S^{[t]}\right) - \left( \sum_{i=t+1}^k  H \left(S^i|S^{[t]}\right) - \sum_{i=t+1}^k  H \left(S^i|S^{[i-1]}\right)  \right) \nonumber \\
 	& = \sum_{i=t+2}^k  \frac{i-1-t}{d-t} H\left(S_{\cN \backslash \{i\} \backslash [t]}^i|S^{[t]}\right) + \sum_{i=t+2}^k H\left(S_i^{[t+1:i-1]}|S^{[t]}\right) -  \left( \sum_{i=t+1}^k H\left(S^{i}|S^{[t]}\right) - H\left(S^{[t+1:k]}|S^{[t]}\right) \right),  \label{eq:Appendix-B:temple1}
\end{align} 
where \uref{a} follows from the well-known Han's inequality, and \uref{b} is justified because $\{S_i^{[t+1:i-1]},S_{[t+1:i-1]}^i\}$ is a function of $S^{[t+1:i-1]}$ and also a function of $S^i$ .

By symmetry, we know that
\[\sum_{i=t+2}^k  \frac{i-1-t}{d-t} H\left(S_{\cN \backslash \{i\} \backslash [t]}^i|S^{[t]}\right) = \left( \sum_{i=t+2}^k  \frac{i-1-t}{d-t} \right) H\left(S^{t+1}|S^{[t]}\right),\]
and
\[ \sum_{i=t+1}^k H\left(S^{i}|S^{[t]}\right) = (k-t) H\left(S^{t+1}|S^{[t]}\right).\]
Hence, \eqref{eq:Appendix-B:temple1} can be written as 
\begin{align}
	 	& \sum_{i=t+2}^k I\left(S_i^{[t+1:i-1]};S_{[t+1:i-1]}^i|S^{[t]}\right)   \nonumber\\
 	& \geq  \left( \sum_{i=t+2}^k  \frac{i-1-t}{d-t} \right) H\left(S^{t+1}|S^{[t]}\right) + \sum_{i=t+2}^k H\left(S_i^{[t+1:i-1]}|S^{[t]}\right) - (k-t) H\left(S^{t+1}|S^{[t]}\right) + H\left(S^{[t+1:k]}|S^{[t]}\right). \label{eq:Appendix-B:temple2} 
\end{align}

By substituting \eqref{eq:Appendix-B:temple2} in \eqref{eq:app_temple_1}, we have
\begin{align}
	H\left(W_{[t+1:k]}|S^{[t]}\right)  & \leq \sum_{i=t+1}^k H\left(W_i|S^{[t]}\right)  - \left( \sum_{i=t+2}^k  \frac{i-1-t}{d-t} \right) H\left(S^{t+1}|S^{[t]}\right) -\sum_{i=t+2}^k H\left(S_i^{[t+1:i-1]}|S^{[t]}\right) \nonumber \\
	& ~~ + (k-t) H\left(S^{t+1}|S^{[t]}\right) -H\left(S^{[t+1:k]}|S^{[t]}\right) \nonumber \\
	& = \sum_{i=t+1}^k H\left(W_i|S^{[t]}\right)  + \left((k-t)- \sum_{i=t+2}^k  \frac{i-1-t}{d-t} \right) H\left(S^{t+1}|S^{[t]}\right) -\sum_{i=t+2}^k H\left(S_i^{[t+1:i-1]}|S^{[t]}\right) \nonumber \\
	& ~~  -H\left(S^{[t+1:k]}|S^{[t]}\right) \nonumber \\
	& \utag{c}{=} \sum_{i=t+1}^k H\left(W_i|S^{[t]}\right)  + \frac{d-t+1}{2} H\left(S^{t+1}|S^{[t]}\right) -\sum_{i=t+2}^k H\left(S_i^{[t+1:i-1]}|S^{[t]}\right) - H\left(S^{[t+1:k]}|S^{[t]}\right), \label{eq:app_temple_revised}
\end{align}
where \uref{c} follows from $k=d$.

From the union bound, we know that
\[\sum_{i=t+2}^k H\left(S_i^{[t+1:i-1]}|S^{[t]}\right) \geq H\left(S_{[t+1:k]}|S^{[t]}\right),\]
and then we can further bound \eqref{eq:app_temple_revised} as follows:
\begin{align}
	H(W_{[t+1:k]}|S^{[t]}) 
	& \leq \sum_{i=t+1}^k H\left(W_i|S^{[t]}\right) + \frac{d-t+1}{2}  H\left(S^{t+1}|S^{[t]}\right) -H\left(S_{[t+1:k]}|S^{[t]}\right) - H\left(S^{[t+1:k]}|S^{[t]}\right)  \nonumber \\
	& = \sum_{i=t+1}^k H\left(W_i|S^{[t]}\right) + \frac{d-t+1}{2}  H\left(S^{t+1}|S^{[t]}\right) - H\left(S^{[t+1:k]}|S^{[t]}\right) \nonumber \\
	& ~~ -  H\left(S_{[t+1:n]}|S^{[t]}\right) +  H\left(S_{[t+1:n]}|S^{[t]}, S_{[t+1:k]}\right).    \label{eq:app_temple_3} 
\end{align}

By re-arranging \eqref{eq:app_temple_3}, we have
\begin{align*}
	& H(W_{[t+1:k]}|S^{[t]}) +  H\left(S^{[t+1:k]}|S^{[t]}\right) +  H\left(S_{[t+1:n]}|S^{[t]}\right) \\
	& \leq \sum_{i=t+1}^k H\left(W_i|S^{[t]}\right) + \frac{d-t+1}{2}  H\left(S^{t+1}|S^{[t]}\right) 
	 +  H\left(S_{[t+1:n]}|S^{[t]}, S_{[t+1:k]}\right) \\
	& \utag{d}{=} \sum_{i=t+1}^k H\left(W_i|S^{[t]}\right)+ \frac{d-t+1}{2}  H\left(S^{t+1}|S^{[t]}\right) +  H\left(S_n^{[t+1:k]}|S^{[t]}, S_{[t+1:k]}\right) \\
	& \leq \sum_{i=t+1}^k H\left(W_i|S^{[t]}\right)+ \frac{d-t+1}{2}  H\left(S^{t+1}|S^{[t]}\right) +  H\left(W_n|S_n^{[t]}\right) \\
	& \leq \sum_{i=t+1}^k H\left(W_i|S_i^{[t]}\right)+ \frac{d-t+1}{2}  H\left(S^{t+1}|S^{[t]}\right) +  H\left(W_n|S_n^{[t]}\right) \\
	& \utag{e}{=} (k-t+1) H\left(W_{n}|S_{n}^{[t]}\right)+ \frac{d-t+1}{2}  H\left(S^{t+1}|S^{[t]}\right) \\
	& \utag{f}{=} (d-t+1) \alpha - (d-t+1) H\left(S_{n}^{[t]}\right)+ \frac{d-t+1}{2}  H\left(S^{t+1}|S^{[t]}\right), 
\end{align*}
where  \uref{d} and \uref{f} follow from $k = d = n-1$, and \uref{e} follows from the symmetry.

From Proposition \ref{pro:k_eq_d_1}, we know that 
\[H(W_{[t+1:k]},S^{[t]}) = H\left(S^{[t+1:k]},S^{[t]}\right) = H\left(S_{[t+1:n]},S^{[t]}\right) = H(W_{[k]}),\]
so
\[H(W_{[t+1:k]}|S^{[t]}) = H\left(S^{[t+1:k]}|S^{[t]}\right) = H\left(S_{[t+1:n]}|S^{[t]}\right).\]
Hence we have
\begin{align*}
	3 H(W_{[t+1:k]}|S^{[t]}) 
		& \leq  (d-t+1) \alpha - (d-t+1) H\left(S_{n}^{[t]}\right)+ \frac{d-t+1}{2}  H\left(S^{t+1}|S^{[t]}\right),
\end{align*}
or
\begin{equation}
\label{eq:k_eq_d_Jan2_temple1}
	H(W_{[t+1:k]}|S^{[t]}) \leq \frac{d-t+1 }{3} \alpha - \frac{d-t+1}{3} H\left(S_{n}^{[t]}\right) + \frac{d-t+1}{6} H\left(S^{t+1}|S^{[t]}\right),
\end{equation}
for $t=0,\ldots,k-2$.

Now, consider $H\left(W_{[\ell+1:k]}|S^{[\ell]} \right)$.
For any $t < \ell$, we have 
\begin{align}
H\left(W_{[\ell+1:k]}|S^{[\ell]} \right)
& = H \left(W_{[\ell+1:k]},S^{[\ell]}\right) - H \left(S^{[t]}\right) - H \left(S^{[\ell]}|S^{[t]}\right) \nonumber \\
& \utag{g}{=} H \left(W_{[t+1:k]},S^{[t]}\right) - H \left(S^{[t]}\right) - H \left(S^{[\ell]}|S^{[t]}\right) \nonumber \\
& = H \left(W_{[t+1:k]}|S^{[t]}\right) - H \left(S^{[\ell]}|S^{[t]}\right) \nonumber \\
& = H \left(W_{[t+1:k]}|S^{[t]}\right) - \sum_{i=t+1}^{\ell} H\left(S^i|S^{[i-1]}\right), \label{eq:k_eq_d_wk}
\end{align}
where \uref{g} follows from Proposition \ref{pro:k_eq_d_1}. 

Since $t < \ell$ and $\ell \leq k-1$, we have $t \leq k-2$. Then by invoking the upper bound on $H \left(W_{[t+1:k]}|S^{[t]}\right)$ in \eqref{eq:k_eq_d_Jan2_temple1},
 we have
\begin{equation*}
H \left(W_{[\ell+1:k]}|S^{[\ell]}\right) \leq \frac{d+1 - t }{3} \alpha - \frac{d+1 - t }{3} H\left(S_{n}^{[t]}\right) + \frac{d+1 - t}{6} H\left(S^{t+1}|S^{[t]}\right)  - \sum_{i=t+1}^{\ell} H\left(S^i|S^{[i-1]}\right),
\end{equation*}
which completes the proof.

\section{Proof of propositions in Subsection \ref{subsec:k_eq_d_equal}}
\subsection{Proof of Proposition \ref{Prop:convex}}
\label{App:convex}
Recall that 
\begin{equation*}
	\mu_t =
	\begin{cases}
	 	 \frac{1}{2}  \binom{n-\hat{\ell}}{2}   \frac{n-2\hat{\ell}-1+t}{\binom{n-t}{4}},  & 1\leq t \leq \hat{\ell}-3, \\
	 	 \frac{6(n-\hat{\ell}-3)}{(n-\hat{\ell}+1)(n-\hat{\ell}+2)}, &  t = \hat{\ell}-2, ~ \hat{\ell} \geq 3, \\
	 	 \frac{6}{n-\hat{\ell}+1}, &   t=\hat{\ell}-1, ~ \hat{\ell} \geq 2, 
	\end{cases}
\end{equation*}
and 
\begin{equation*}
	\mu_0 = 1 - \sum_{j=1}^{\hat{\ell}-1} \mu_{j}.
\end{equation*}

Since $\hat{\ell} = \left \lceil \frac{1}{4}(n-2) \right \rceil$, if $\hat{\ell} \geq 3$, we have $n \geq 11$, and hence
\[n-\hat{\ell}-3 = n- \left \lceil \frac{1}{4}(n-2) \right \rceil-3 > n-  \frac{1}{4}(n-2) -4 = \frac{3}{4} \left( n- \frac{14}{3} \right) >0.\]
Also, when $1\leq t \leq \hat{\ell}-3$, we have
\[n-2\hat{\ell}-1+t \geq n-2 \hat{\ell} = n- 2 \left \lceil \frac{1}{4}(n-2) \right \rceil > \frac{1}{2}n-1 >0. \]
Hence, $\mu_t \geq 0$ for $t=1,\ldots,\hat{\ell}-1$ and so it remains to show that $\mu_0 \geq 0$.

For $\hat{\ell} = 1$, this is trivial as $\mu_0 = 1$. For $\hat{\ell} \geq 2$, we claim that
\begin{equation}
\label{eq:app_mu_t_temple1}
	\mu_0 = \frac{\left(n-\hat{\ell}\right)\left(n-\hat{\ell}-1\right) \left( n+1-4 \hat{\ell} \right) }{(n-1)(n-2)(n-3)}.	
\end{equation}
To see this, we first separately discuss the cases $\hat{\ell} = 2$ and $\hat{\ell} = 3$, where $\mu_t, t=0,\ldots,\hat{\ell}-1$ are as given as follows:
\begin{itemize}
	\item $\hat{\ell} = 2$
	\begin{equation}
	\label{eq:app_mu_ell_2}
	\mu_t =
		\begin{cases}
		  \frac{n-7}{n-1}, &  t=0, \\
		  \frac{6}{n-1}, &  t=1.
		\end{cases}
	\end{equation}
	\item $\hat{\ell} = 3$
	\begin{equation}
	\label{eq:app_mu_ell_3}
	\mu_t = 
		\begin{cases}
			\frac{(n-11)(n-4)}{(n-2)(n-1)}, & t=0, \\
			\frac{6(n-6)}{(n-2)(n-1)}, & t=1, \\
			\frac{6}{n-2}, & t=2. 
		\end{cases}	
	\end{equation}
\end{itemize}
Then we can easily verify that \eqref{eq:app_mu_t_temple1} holds.

For $\hat{\ell} \geq 4$ and any $j=1,\ldots,\hat{\ell}-3$, we have
\begin{align}
		\sum_{i=1}^{j} \mu_i   & = \sum_{i=1}^{j} \frac{1}{2}  \binom{n-\hat{\ell}}{2} \frac{n-2\hat{\ell}-1+i}{\binom{n-i}{4}} \nonumber \\
		&  \utagmodify{\ast}{=}  \frac{(n-\hat{\ell})(n-\hat{\ell}-1)(n-4\hat{\ell}+1 + 3j)}{(n-j-1)(n-j-2)(n-j-3)} +\frac{(n-\hat{\ell})(n-\hat{\ell}-1)(4\hat{\ell}-n-1)}{(n-1)(n-2)(n-3)}, \label{eq:k_eq_d_sum_mu_j}
\end{align}
where the above inequality and some other algebraic equalities in the sequel which are marked by an asterisk can be verified by symbolic computation application such as SageMath \cite{sage}. The steps are very lengthy and they are omitted here.

By substituting $j=\hat{\ell}-3$, we have
\[\sum_{i=1}^{\hat{\ell}-3} \mu_i  = \frac{(n-\hat{\ell})(n-\hat{\ell}-1)(n-\hat{\ell}-8)}{(n-\hat{\ell})(n-\hat{\ell}+1)(n-\hat{\ell}+2)} +\frac{(n-\hat{\ell})(n-\hat{\ell}-1)(4\hat{\ell}-n-1)}{(n-1)(n-2)(n-3)}.\]
Since 
\[ \mu_{\hat{\ell}-2}  = \frac{6(n-\hat{\ell}-3)}{(n-\hat{\ell}+1)(n-\hat{\ell}+2)},\]
and 
\[ \mu_{\hat{\ell}-1} = \frac{6}{n-\hat{\ell}+1},\]
we have 
\begin{align*}
	\sum_{i=1}^{\hat{\ell}-1} \mu_i & = 1  +\frac{(n-\hat{\ell})(n-\hat{\ell}-1)(4\hat{\ell}-n-1)}{(n-1)(n-2)(n-3)} \\
							  & = 1 - \frac{(n-\hat{\ell})(n-\hat{\ell}-1) \left( n+1-4 \hat{\ell} \right) }{(n-1)(n-2)(n-3)},
\end{align*}
and so
\[ \mu_0 = \frac{(n-\hat{\ell})(n-\hat{\ell}-1) \left( n+1-4 \hat{\ell} \right) }{(n-1)(n-2)(n-3)}.\]

Therefore, we obtain 
\begin{equation}
\label{eq:appendix_mu0}
\mu_0 =
	\begin{cases}
		 \frac{(n-\hat{\ell})(n-\hat{\ell}-1) \left( n+1-4 \hat{\ell} \right) }{(n-1)(n-2)(n-3)} & \hat{\ell} \geq 2, \\
		 1  & \hat{\ell} = 1.
	\end{cases}
\end{equation}
Since
\[n + 1- 4 \hat{\ell} = n + 1- 4 \left\lceil \frac{1}{4} (n-2) \right\rceil > n + 1- (n-2) -4 = -1, \]
and $n + 1- 4 \hat{\ell}$ is an integer, we have
\[n + 1- 4 \hat{\ell}  \geq 0,\]
and thus $\mu_0 \geq 0$ for $\hat{\ell} \geq 2$.

\subsection{Proof of Proposition \ref{Prop:geq0}}
\label{App:c_geq_0}
We need to prove that $c_{t} \geq 0$ for $t=0,\ldots,\hat{\ell}-1$ and $c_{\hat{\ell}-1} = 0$ when $\hat{\ell} \geq 2$.  
Recall that 
\[c_t = \frac{n-t}{6} \mu_{t} - \sum_{j=0}^{t} \mu_j,~ t=0,\ldots,\hat{\ell}-1.\]
First, we show that $c_{\hat{\ell}-1} = 0$ for $\hat{\ell} \geq 2$ as follows:
\[c_{\hat{\ell}-1} = \frac{n-\hat{\ell}+1}{6} \mu_{\hat{\ell}-1} - \sum_{j=0}^{\hat{\ell}-1} \mu_j = 1 -1 =0.\]
For $t=0$, it is easy to see that 
\[c_{0} = \frac{n}{6} \mu_{0} - \mu_0 = \frac{n-6}{6} \mu_0 \geq 0,\]
as $\hat{\ell} \geq 2$ implies that $n \geq 7$, and we know from Proposition \ref{Prop:convex} that $\mu_0 \geq 0$. Clearly, the proposition is proved for $\hat{\ell} = 2$, and it remains to verify that $c_t \geq 0$ for $t= 1,\dots,\hat{\ell}-2$ for $\hat{\ell} \geq 3$.

If $\hat{\ell} = 3$, we obtain from \eqref{eq:app_mu_ell_3} that
\begin{equation}
\label{eq:app_c_ell_3_t2}
c_1 = \frac{n-1}{6} \mu_{1} - \sum_{j=0}^{1} \mu_j =  \frac{n-6}{n-2}  - \left(1- \frac{6}{n-2} \right) = \frac{2}{n-2} \geq 0,
\end{equation}
which completes the proof for $\hat{\ell} = 3$.

For $\hat{\ell} \geq 4$ and any $t=1,\ldots,\hat{\ell}-3$, we have
\[\]
\begin{align}
	c_t  & = \frac{n-t}{6} \mu_{t} - \sum_{j=0}^{t} \mu_j  \nonumber \\
		 & \utagmodify{\ast}{=} \frac{2 (n-\hat{\ell})(n-\hat{\ell}-1)(\hat{\ell}- 1 - t)}{(n-t-1)(n-t-2)(n-t-3)}. \label{eq:k_eq_d_ct}
\end{align}
It is easy to see that $c_t \geq 0$ for any $t=1,\ldots,\hat{\ell}-3$ from \eqref{eq:k_eq_d_ct}.

If $t=\hat{\ell}-2$, we have 
\[c_{\hat{\ell}-2}  = \frac{n-\hat{\ell}+2}{6} \mu_{\hat{\ell}-2} - \sum_{j=0}^{\hat{\ell}-2} \mu_j = \frac{n-\hat{\ell}+2}{6} \mu_{\hat{\ell}-2} - ( 1 - \mu_{\hat{\ell}-1} )  = \frac{2}{n-\hat{\ell}+1} > 0.\]
Hence, we obtain that $c_t \geq 0, t=1,\ldots,\hat{\ell}-2$ for $\hat{\ell} \geq 4$.

In summary, for $\hat{\ell} \geq 2$, we have
\begin{equation}
\label{eq:app_c_t}
	c_t =  
	\begin{cases}
		\frac{n-6}{6} \mu_0, & t=0, \\ 
		\frac{2 (n-\hat{\ell})(n-\hat{\ell}-1)(\hat{\ell}- 1 - t)}{(n-t-1)(n-t-2)(n-t-3)},  & 1 \leq t \leq \hat{\ell}-3, \\
		\frac{2}{n-\hat{\ell}+1},  &  t = \hat{\ell}-2, ~ \hat{\ell} \geq 3, \\
		0,   &  t = \hat{\ell}-1,
	\end{cases}
\end{equation}
which substantiates that $c_{\hat{\ell}-1} = 0$ and $c_t \geq 0$ for all possible $t$ and $\hat{\ell} \geq 2$.

\subsection{Proof of Proposition \ref{Prop:namda}}
\label{APP:namba}
We need to verify that $\lambda_t = c_{t-1}  (d+1-t) - c_t  (d-t) - b_t =0$ for $t=2,\ldots, \hat{\ell}-1$ and $\hat{\ell} \geq 3$. Recall that \eqref{eq:app_c_t} and
\[b_t = \frac{n-t}{3} \mu_t.\]

If $t=\hat{\ell}-1$, we have
\[\lambda_t = c_{\hat{\ell}-2}  (d+2-\hat{\ell}) - c_{\hat{\ell}-1}  (d-\hat{\ell}+1) - b_{\hat{\ell}-1} = 2 -  \frac{(n-\hat{\ell}+1)}{3} \mu_{\hat{\ell}-1} = 0.\]
If $t=\hat{\ell}-2$ (implies that $\hat{\ell} \geq 4$), we have
\[\lambda_t = c_{\hat{\ell}-3}  (d+3-\hat{\ell}) - c_{\hat{\ell}-2}  (d+2-\hat{\ell}) - b_{\hat{\ell}-2} = \frac{4 (n-\hat{\ell}-1)}{n-\hat{\ell}+1} - 2 - \frac{n-\hat{\ell}+2}{3} \mu_{\hat{\ell}-2} = 0.\]
If $2 \leq t \leq \hat{\ell}-3$ (implies that $\hat{\ell} \geq 5$), we have
\begin{align*}
		\lambda_t & = c_{t-1}  (d+1-t) - c_t  (d-t) - b_t \\
		& = \frac{2 (n-\hat{\ell})(n-\hat{\ell}-1)(\hat{\ell} - t)}{(n-t-1)(n-t-2)} - \frac{2 (n-\hat{\ell})(n-\hat{\ell}-1)(\hat{\ell}- 1 - t)}{(n-t-2)(n-t-3)} - \frac{n-t}{6}   \binom{n-\hat{\ell}}{2}   \frac{n-2\hat{\ell}-1+t}{\binom{n-t}{4}} \\
		 		& \utagmodify{\ast}{=} 0.
\end{align*}
Therefore, we conclude that $\lambda_t = c_{t-1}  (d+1-t) - c_t  (d-t) - b_t =0$ for $t=2,\ldots, \hat{\ell}-1$.

\subsection{Proof of Proposition \ref{prop:k_eq_d_final}}
\label{App:k_eq_d_final}
We first prove that $T_2 \geq 0$ as follows:
\begin{align*}
	T_2  & =  b_1 -c_0 d +c_1(d-1) \\
		 & = \frac{n-1}{3} \mu_1 - \frac{(n-6)(n-1)}{6} \mu_0 + \left( \frac{n-1}{6}\mu_1 - \mu_0- \mu_1  \right)(n-2) \\
		 & = \left(\frac{1}{6}(n-7)(n-2) +  \frac{1}{3}(n-1) \right) \mu_1  - \left(  \frac{1}{6}(n-1)(n-6)+(n-2) \right) \mu_{0} \\
		 & \utagmodify{\ast}{=} \frac{(4\hat{\ell}+2-n)(n-\hat{\ell})(n-\hat{\ell}-1)}{6(n-2)} \\
		 & \utag{a}{\geq} 0,
\end{align*}
where \uref{a} is justified because $\hat{\ell}= \left\lceil \frac{1}{4}(n-2)  \right\rceil \geq \frac{1}{4}(n-2)$.

Now, we focus on $T_1$. For $\hat{\ell} = 2$, we have
\[T_1    =  \sum_{t=0}^{\hat{\ell}-1} \frac{n-t}{3} \mu_t 
		   =  \frac{n}{3} \mu_0 + \frac{n-1}{3} \mu_1 
		   \utag{a}{=}  \frac{n(n-7)}{3(n-1)} + 2 
		   = \frac{(n-\hat{\ell})(n-\hat{\ell}-1)}{(n-3)(n-2)}  \left( \frac{n ( n+1-4 \hat{\ell} ) }{3(n-1)} + 2(\hat{\ell}-1) \right),
\]
where \uref{a} follows from \eqref{eq:app_mu_ell_2}. 
For $\hat{\ell} = 3$, we have 
\begin{align*}
	T_1   & =  \sum_{t=0}^{\hat{\ell}-1} \frac{n-t}{3} \mu_t \\
		  & =  \frac{n}{3} \mu_0 + \frac{n-1}{3} \mu_1 + \frac{n-2}{3} \mu_2\\
		  & \utag{a}{=}  \frac{n(n-4)(n-11)}{3(n-1)(n-2)}  + \frac{2(n-6)}{n-2}  + 2 \\
		  & = \frac{(n-\hat{\ell})(n-\hat{\ell}-1)}{(n-3)(n-2)}  \left( \frac{n ( n+1-4 \hat{\ell} ) }{3(n-1)} + 2(\hat{\ell}-1) \right),
\end{align*}
where \uref{a} follows from \eqref{eq:app_mu_ell_3}.
For $\hat{\ell} \geq 4$, we have
\begin{align*}
	T_1   & =  \sum_{t=0}^{\hat{\ell}-1} \frac{n-t}{3} \mu_t   \\
		  & = \frac{n}{3} \mu_0 + \frac{n-\hat{\ell}+2}{3} \mu_{\hat{\ell}-2} + \frac{n-\hat{\ell}+1}{3} \mu_{\hat{\ell}-1}  +  \sum_{t=1}^{\hat{\ell}-3} \frac{n-t}{3} \mu_t \\
						 & = \frac{n(n-\hat{\ell})(n-\hat{\ell}-1) ( n+1-4 \hat{\ell} ) }{3(n-1)(n-2)(n-3)} + \frac{4(n-\hat{\ell}-1)}{n-\hat{\ell}+1} +  \sum_{t=1}^{\hat{\ell}-3}   \frac{2(n-2\hat{\ell}-1+t)(n-\hat{\ell})(n-\hat{\ell}+1)}{(n-t-1)(n-t-2)(n-t-3)} 	\\
						 & \utagmodify{\ast}{=} \frac{(n-\hat{\ell})(n-\hat{\ell}-1)}{(n-3)(n-2)}  \left( \frac{n ( n+1-4 \hat{\ell} ) }{3(n-1)} + 2(\hat{\ell}-1) \right).	 
\end{align*}

Therefore, for $\hat{\ell} \geq 2$, we obtain
\[T_1 = \frac{(n-\hat{\ell})(n-\hat{\ell}-1)}{(n-3)(n-2)}  \left( \frac{n ( n+1-4 \hat{\ell} ) }{3(n-1)} + 2(\hat{\ell}-1) \right),\]
and we can verify that
\begin{align*}
	T_1  - \frac{T_2}{d}  
	& = \frac{(n-\hat{\ell})(n-\hat{\ell}-1)}{(n-3)(n-2)}  \left( \frac{n ( n+1-4 \hat{\ell} ) }{3(n-1)} + 2(\hat{\ell}-1) \right) -  \frac{(4\hat{\ell}+2-n)(n-\hat{\ell})(n-\hat{\ell}-1)}{6(n-2)(n-1)} \\
		& \utagmodify{\ast}{=} \frac{(n-\hat{\ell})(n-\hat{\ell}-1)}{2(n-1)} \\
		& = \frac{\Gamma_{k,d,\hat{\ell}}}{d}.
\end{align*}

\section{Proof of Lemma \ref{lem:k_eq_d_revised}}
\label{App:k_eq_d_revised}
From \eqref{eq:app_temple_revised}, we know that for $t=0,\ldots,k-2$,
\begin{align*}
	H\left(W_{[t+1:k]}|S^{[t]}\right)  & \leq \sum_{i=t+1}^k H\left(W_i|S^{[t]}\right)  + \frac{d-t+1}{2} H\left(S^{t+1}|S^{[t]}\right) -\sum_{i=t+2}^k H\left(S_i^{[t+1:i-1]}|S^{[t]}\right)  -H\left(S^{[t+1:k]}|S^{[t]}\right).
\end{align*}
Then we have
\begin{align*}
	H\left(W_{[t+1:k]}|S^{[t]}\right)  
	& \leq \sum_{i=t+1}^k H\left(W_i|S^{[t]}\right)  + \frac{d-t+1}{2} H\left(S^{t+1}|S^{[t]}\right) -\sum_{i=t+2}^k H\left(S_i^{[t+1:i-1]}|S^{[t]}\right)  -H\left(S^{[t+1:k]}|S^{[t]}\right)  \\
	& = \sum_{i=t+1}^k \left(H\left(W_i|S^{[t]}\right) - H\left(S_i^{[t+1:i-1]}|S^{[t]}\right)\right)  + \frac{d-t+1}{2} H\left(S^{t+1}|S^{[t]}\right)    - H\left(S^{[t+1:k]}|S^{[t]}\right)  \\
	& = \sum_{i=t+1}^k H\left(W_i|S^{[t]}, S_i^{[t+1:i-1]} \right)  + \frac{d-t+1}{2} H\left(S^{t+1}|S^{[t]}\right)   -H\left(S^{[t+1:k]}|S^{[t]}\right) \\
	& \leq \sum_{i=t+1}^k H\left(W_i| S_i^{[i-1]} \right)  + \frac{d-t+1}{2} H\left(S^{t+1}|S^{[t]}\right)   -H\left(S^{[t+1:k]}|S^{[t]}\right) \\
	& = (k-t) \alpha - \sum_{i=t+1}^k H\left(S_i^{[i-1]} \right)  + \frac{d-t+1}{2} H\left(S^{t+1}|S^{[t]}\right)   -H\left(S^{[t+1:k]}|S^{[t]}\right) \\
	& \utag{a}{=} (k-t) \alpha  - \sum_{i=t}^{k-1} H\left( S_n^{[i]} \right) + \frac{d-t+1}{2}  H\left(S^{t+1}|S^{[t]}\right) -H\left(S^{[t+1:k]}|S^{[t]}\right),
\end{align*}
where \uref{a} follows from the symmetry.
Therefore, we have
\begin{align*}
	2 H(W_{[t+1:k]}|S^{[t]}) & \utag{b}{=} H(W_{[t+1:k]}|S^{[t]}) + H\left(S^{[t+1:k]}|S^{[t]}\right) \\
							 & \leq (k-t) \alpha  - \sum_{i=t}^{k-1} H\left( S_n^{[i]} \right) + \frac{d-t+1}{2}  H\left(S^{t+1}|S^{[t]}\right),
\end{align*}
where \uref{b} follows from Proposition~\ref{pro:k_eq_d_1}.

Upon dividing by $2$, we obtain 
\begin{equation}
\label{eq:Appendix-D:temple1}
	H(W_{[t+1:k]}|S^{[t]}) \leq \frac{1}{2}(k-t) \alpha - \frac{1}{2} \sum_{i=t}^{k-1} H(S_n^{[i]}) + \frac{1}{4}(d-t+1) H(S^{t+1}|S^{[t]}),
\end{equation}
for $t=0,\ldots,k-2$.

Since $\ell \leq k-1$, we know that $\ell -1 \leq k-2$. By letting $t=\ell-1$ in \eqref{eq:Appendix-D:temple1}, we have
\begin{equation}
\label{eq:Appendix-D:temple3}
	H(W_{[\ell:k]}|S^{[\ell-1]}) \leq \frac{1}{2}(k-\ell+1) \alpha - \frac{1}{2} \sum_{i=\ell-1}^{k-1} H(S_n^{[i]}) + \frac{1}{4}(d-\ell+2) H(S^{\ell}|S^{[\ell-1]}).
\end{equation}
Finally, consider
\begin{align}
	H(W_{[\ell+1:k]}|S^{[\ell]}) 
	& = H(W_{[\ell+1:k]},S^{[\ell]}) - H(S^{[\ell]}) \nonumber \\
	& \utag{c}{=} H(W_{[\ell:k]},S^{[\ell-1]}) - H(S^{[\ell]}) \nonumber \\
	& = H(W_{[\ell:k]}|S^{[\ell-1]}) - H(S^{[\ell]}|S^{[\ell-1]}), \label{eq:Appendix_D:temple11}
\end{align}
where \uref{c} follows from Proposition~\ref{pro:k_eq_d_1}. By substituting \eqref{eq:Appendix-D:temple3} into \eqref{eq:Appendix_D:temple11}, we obtain that
\begin{align*}
	H(W_{[\ell+1:k]}|S^{[\ell]}) 
	& \leq \frac{1}{2}(k-\ell+1) \alpha - \frac{1}{2} \sum_{i=\ell-1}^{k-1} H(S_n^{[i]}) + \frac{1}{4}(d-\ell+2) H(S^{\ell}|S^{[\ell-1]}) - H(S^{\ell}|S^{[\ell-1]})\\
	& = \frac{1}{2}(k-\ell+1) \alpha - \frac{1}{2} \sum_{i=\ell-1}^{k-1} H(S_n^{[i]}) + \frac{1}{4}(k-\ell-2) H(S^{\ell}|S^{[\ell-1]}),
\end{align*}
which completes the proof.

\section{Proof of Lemma \ref{lem:k_less_d_tech}}
\label{App:k_less_d_tech}
We first prove \eqref{eq:k_less_d_S_bound}. For any $y=\ell+1,\ldots,k$ and $\ell \leq t_y \leq y-1$, consider
\begin{align}
	H\left (S^y| S^{[t_y]}, W_{[t_y+1:y-1]}\right)  
	& = H\left (S_{[y-1]\cup[y+1:d+1]}^y| S^{[t_y]}, W_{[t_y+1:y-1]}  \right)  \nonumber \\
	& = H\left (S_{[y+1:d+1]}^y| S^{[t_y]}, W_{[t_y+1:y-1]}  \right)  \nonumber \\
							& \utag{a}{\leq} H\left(S_{[y+1:d+1]}^y| S^{[t_y]}, S_{[t_y+1:y-1]}^y \right)  \nonumber \\
							& \utag{b}{\leq} \frac{d+1-y}{d-t_y}   H\left(S_{[y+1:d+1]}^y, S_{[t_y+1:y-1]}^y|S^{[t_y]} \right)  \nonumber \\
							& = \frac{d+1-y}{d-t_y}   H\left(S^y|S^{[t_y]} \right)  \nonumber \\
							& \utag{c}{=} \frac{d+1-y}{d-t_y} H\left(S^{t_y+1}|S^{[t_y]} \right),  \label{eq:k_less_d_new_temple1}
\end{align}
where \uref{a} follows because $S_{[t_y+1:y-1]}^y$ is a function of $W_{[t_y+1:y-1]}$, \uref{b} follows  from Han's inequality, and \uref{c} is justified by invoking the symmetry.

Now, we focus on \eqref{eq:k_less_d_W_bound}. If $t_y=y-1$, we have
\[H\left (W_y| S^{[t_y]}, W_{[t_y+1:y-1]}\right) = H\left (W_y| S^{[y-1]}\right) \leq H\left (W_y| S_y^{[y-1]}\right) = \alpha - H\left (S_y^{[y-1]}\right) = \alpha - H\left (S_n^{[y-1]}\right),\]
where the last step follows from the symmetry. 
For $\ell \leq t_y \leq y-2$, consider
\begin{align}
	H\left (W_y| S^{[t_y]}, W_{[t_y+1:y-1]}\right) 
	& \utag{a}{=} H\left (W_y| S^{t_y}, S^{[t_y-1]}, W_{[t_y+1:y-1]}\right) \nonumber \\
	& \utag{b}{=} H\left (W_y| S^{t_y}, S^{[t_y-1]}, W_{t_y}, W_{[t_y+1:y-1]}\right) \nonumber \\
	& = H\left (W_y| S^{[t_y-1]},W_{t_y}, W_{[t_y+1:y-1]}\right) - I\left(W_y;S^{t_y}|S^{[t_y-1]},W_{t_y}, W_{[t_y+1:y-1]}\right) \nonumber\\
	& \utag{c}{=} H\left (S^y| S^{[t_y-1]},W_{t_y}, W_{[t_y+1:y-1]}\right)- H\left (S^y|W_y, S^{[t_y-1]},W_{t_y}, W_{[t_y+1:y-1]}\right) \nonumber\\
	& ~~ - I\left(W_y;S^{t_y}|S^{[t_y-1]},W_{t_y}, W_{[t_y+1:y-1]}\right) \nonumber\\
	& \utag{d}{=} H\left (S^y| S^{[t_y-1]},W_{t_y}, W_{[t_y+1:y-1]}\right)- H\left (S^{t_y}|W_{t_y}, S^{[t_y-1]},W_{y}, W_{[t_y+1:y-1]}\right) \nonumber\\
	& ~~ - I\left(W_y;S^{t_y}|S^{[t_y-1]},W_{t_y}, W_{[t_y+1:y-1]}\right) \nonumber\\
	& = H\left (S^y| S^{[t_y-1]},W_{t_y}, W_{[t_y+1:y-1]}\right)- H\left (S^{t_y}| S^{[t_y-1]},W_{t_y}, W_{[t_y+1:y-1]}\right)\nonumber \\
	& \utag{e}{=} H\left (S^y| S^{[t_y-1]},W_{t_y}, W_{[t_y+1:y-1]}\right)- H\left (S^{y-1}| S^{[t_y-1]},W_{y-1}, W_{[t_y:y-2]}\right) \nonumber \\
	& = H\left (S^y| S^{[t_y-1]}, W_{[t_y:y-1]}\right)- H\left (S^{y-1}| S^{[t_y-1]}, W_{[t_y:y-1]}\right), \label{eq:Appendix-E:temple1}
\end{align}
where \uref{a} follows because $t_y \geq \ell \geq 1$, so that $S^{t_y}$ is well defined,  \uref{b} follows because $W_{t_y}$ is a function of $S^{t_y}$, \uref{c} follows because $W_{y}$ is a function of $S^{y}$, \uref{d} is obtained from the symmetry by interchanging the indices $y$ and $t_y$ in the second term (while keeping the indices $[t_y -1]$ and $[t_y+1:y-1]$ fixed),
and \uref{e} is obtained from the symmetry by replacing the index $t_y$ by $y-1$ and the indices $[t_y+1:y-1]$ by $[t_y:y-2]$ (while keeping the indices $[t_y -1]$ fixed). 

On the right-hand side of \eqref{eq:Appendix-E:temple1}, we can upper bound the first term as
\begin{align}
	H\left (S^y| S^{[t_y-1]}, W_{[t_y:y-1]}\right)  
							& \leq \frac{d+1-y}{d+1-t_y} H\left(S^{t_y}|S^{[t_y-1]} \right); \label{eq:Appendix-E:temple2} 
\end{align}
this can be obtained by following the proof of \eqref{eq:k_less_d_new_temple1} step-by-step with $t_y$ replaced by $t_y-1$. For the second term on the right-hand side of \eqref{eq:Appendix-E:temple1}, we have
\begin{align}
	& H\left (S^{y-1}| S^{[t_y-1]}, W_{[t_y:y-2]}, W_{y-1}\right) \nonumber \\
	& \utag{a}{=} H\left (S^{y-1}| S^{[t_y-1]}, W_{[t_y:y-2]}, W_{y-1},  S_{y-1}^{[t_y:y-2]}\right) \nonumber \\
	& = H\left (S^{y-1}| S^{[t_y-1]}, W_{[t_y:y-2]}, S_{y-1}^{[t_y:y-2]}\right) - I \left(S^{y-1};W_{y-1}| S^{[t_y-1]}, W_{[t_y:y-2]}, S_{y-1}^{[t_y:y-2]}\right) \nonumber \\
	& \utag{b}{\geq} H\left (S^{y-1}| S^{[y-2]}\right) - I \left(S^{y-1};W_{y-1}| S^{[t_y-1]}, W_{[t_y:y-2]}, S_{y-1}^{[t_y:y-2]}\right) \nonumber \\
	& \utag{c}{=}  H\left (S^{y-1}| S^{[y-2]}\right) - H \left(W_{y-1}| S^{[t_y-1]}, W_{[t_y:y-2]}, S_{y-1}^{[t_y:y-2]}\right) \nonumber \\
	& \utag{d}{\geq}  H\left (S^{y-1}| S^{[y-2]}\right) - H \left(W_{y-1}|S_{y-1}^{[y-2]}\right) \nonumber \\
	& = H\left (S^{y-1}| S^{[y-2]}\right) - \alpha + H \left(S_{y-1}^{[y-2]}\right) \nonumber \\
	& \utag{e}{=}  H\left (S^{y-1}| S^{[y-2]}\right) - \alpha + H \left(S_{n}^{[y-2]}\right), \label{eq:Appendix-E:temple3}
\end{align}
where \uref{a} follows because $S_{y-1}^{[t_y:y-2]}$ is a function of $W_{y-1}$, \uref{b} follows because $ \left\{S^{[t_y-1]}, W_{[t_y:y-2]}, S_{y-1}^{[t_y:y-2]} \right\}$ is a function of $S^{[y-2]}$, \uref{c} follows because $W_{y-1}$ is a function of $S^{y-1}$, \uref{d} follows because $S_{y-1}^{[y-2]}$ is a subset of $\left\{ S^{[t_y-1]}, W_{[t_y:y-2]}, S_{y-1}^{[t_y:y-2]} \right\}$, and \uref{e} follows from the symmetry.

By substituting \eqref{eq:Appendix-E:temple2} and \eqref{eq:Appendix-E:temple3} in \eqref{eq:Appendix-E:temple1}, we finally obtain that
\begin{align*}
	H\left (W_y| S^{[t_y]}, W_{[t_y+1:y-1]}\right) 
	& = H\left (S^y| S^{[t_y-1]}, W_{[t_y:y-1]}\right)- H\left (S^{y-1}| S^{[t_y-1]}, W_{[t_y:y-1]}\right) \\
	& \leq \frac{d+1-y}{d+1-t_y} H\left (S^{t_y}| S^{[t_y-1]}\right) - \left(  H\left (S^{y-1}| S^{[y-2]}\right) - \alpha + H \left(S_{n}^{[y-2]}\right) \right) \\
	& = \alpha - H \left(S_{n}^{[y-2]}\right) + \frac{d+1-y}{d+1-t_y} H\left (S^{t_y}| S^{[t_y-1]}\right)- H\left (S^{y-1}| S^{[y-2]}\right).
\end{align*}

\section{Proof of Proposition~\ref{proposition:k_less_d:approach:final}}
\label{Appendix:k_less_d:approach:final}
First, let us write $\bar{\mu}_j$ and $\bar{\nu}_j$ explicitly. For $j=\ell,\ldots,k$, let
\[\Lambda_1(j) = \left\{(x,y): q_{x,y}=1, t_{x,y}=j-1  \right\},\]
\[\Lambda_2(j) = \left\{(x,y): q_{x,y}=0, t_{x,y} = j, y \neq j+1 \right\},\]
and
\[\Lambda_3(j) = \left\{(x,y): q_{x,y}=0, t_{x,y} \neq j, y=j+1 \right\}.\]
Here, $\Lambda_1(j)$ is the set of all $(x,y)$ that contributes to the coefficient of $H\left(S^j|S^{[j-1]} \right)$ via the upper bound in \eqref{eq:k_less_d_S_bound}, $\Lambda_2(j)$ is the set of all $(x,y)$ that contributes to the coefficient of $H\left(S^j|S^{[j-1]} \right)$ via the third term in the upper bound in the first line of \eqref{eq:k_less_d_W_bound}, and $\Lambda_3(j)$ is the set of all $(x,y)$ that contributes to the coefficient of $H\left(S^j|S^{[j-1]} \right)$ via the fourth term in the upper bound in the first line of \eqref{eq:k_less_d_W_bound}.
Since $\Lambda_1(j)$, $\Lambda_2(j)$ and $\Lambda_3(j)$ are disjoint, for the row $x$, $\mu_{x,j}$ is defined by
\begin{equation*}
	\mu_{x,j} =  \sum_{y} \left(\bm{1}_{\Lambda_1(j)}((x,y)) \frac{d+1-y}{d+1-j}  +  \bm{1}_{\Lambda_2(j)}((x,y)) \frac{d+1-y}{d+1-j} -  \bm{1}_{\Lambda_3(j)}((x,y))\right),
\end{equation*}
where
\[\bm{1}_{\cA}(a) =
\begin{cases}
	1, & \text{if} ~ a \in \cA, \\
	0, & \text{if} ~ a \notin \cA. \\ 
\end{cases}\]
Then we have
\begin{equation}
	\bar{\mu}_j = \frac{1}{m} \sum_{x} \mu_{x,j} =  \frac{1}{m} \sum_{x,y} \left(\bm{1}_{\Lambda_1(j)}((x,y)) \frac{d+1-y}{d+1-j}  +  \bm{1}_{\Lambda_2(j)}((x,y)) \frac{d+1-y}{d+1-j} -  \bm{1}_{\Lambda_3(j)}((x,y))\right).
\end{equation}

Similarly, for $j=\ell,\ldots,k-1$, let 
\[\Delta_1(j) = \left\{(x,y): q_{x,y}=0, t_{x,y} = j, y=j+1\right\},\]
and
\[\Delta_2(j) = \left\{(x,y): q_{x,y}=0, t_{x,y} \neq j+1, y= j+2\right\}.\]
Here, $\Delta_1(j)$ is the set of all $(x,y)$ that contributes to the coefficient of $H(S_n^{[j]})$ via the second term in the upper bound in the second line of \eqref{eq:k_less_d_W_bound}, and $\Lambda_2(j)$ is the set of all $(x,y)$ that contributes to the coefficient of $H(S_n^{[j]})$ via the second term in the upper bound in the first line of \eqref{eq:k_less_d_W_bound}.
Since $\Delta_1(j)$ and $\Delta_2(j)$ are disjoint, for the row $x$, $\nu_{x,j}$ is defined by
\begin{equation*}
	\nu_{x,j} =  \sum_{y} \left(\bm{1}_{\Delta_1(j)}((x,y))   +  \bm{1}_{\Delta_2(j)}((x,y)) \right),
\end{equation*}
and so we have
\begin{equation}
	\bar{\nu}_j = \frac{1}{m} \sum_{x} \nu_{x,j} =  \frac{1}{m} \sum_{x,y} \left(\bm{1}_{\Delta_1(j)}((x,y))   +  \bm{1}_{\Delta_2(j)}((x,y)) \right).
\end{equation}

Now, consider
\begin{align}
	& \frac{\Gamma_{k,d,\ell} }{d}- \frac{d+1-\ell}{\ell} \bar{\mu}_\ell- \frac{1}{\ell}\left( \sum_{j=\ell+1}^{k} \delta_j \right) \nonumber \\
	& = \frac{\Gamma_{k,d,\ell}}{d}- \frac{d+1-\ell}{\ell} \bar{\mu}_\ell- \frac{1}{\ell} \sum_{j=\ell+1}^{k}\left( (d+1-j)\bar{\mu}_j - \sum_{i= j}^{k-1} \bar{\nu}_i \right) \nonumber \\
	& = \frac{\Gamma_{k,d,\ell}}{d}-  \frac{1}{\ell} \left( \sum_{j=\ell}^{k-1} \bar{\nu}_j \right) - \frac{1}{\ell} \sum_{j=\ell}^{k}\left( (d+1-j)\bar{\mu}_j - \sum_{i= j}^{k-1} \bar{\nu}_i \right) \nonumber \\
	& = \frac{\Gamma_{k,d,\ell}}{d}-  \frac{1}{\ell} \left( \sum_{j=\ell}^{k-1} \bar{\nu}_j \right) - \frac{1}{\ell} \sum_{j=\ell}^{k} (d+1-j)\bar{\mu}_j + \frac{1}{\ell} \sum_{j=\ell}^{k}\sum_{i= j}^{k-1} \bar{\nu}_i. \label{eq:k_less_d_Jan30_temple_1}
\end{align}
First, focus on $\sum_{j=\ell}^{k} (d+1-j) \bar{\mu}_j$. Then we have
\begin{align}
	& \sum_{j=\ell}^{k} (d+1-j) \bar{\mu}_j \nonumber \\
	&  = \sum_{j=\ell}^{k} (d+1-j) \frac{1}{m} \sum_{x,y} \left(\bm{1}_{\Lambda_1(j)}((x,y)) \frac{d+1-y}{d+1-j}  +  \bm{1}_{\Lambda_2(j)}((x,y)) \frac{d+1-y}{d+1-j} -  \bm{1}_{\Lambda_3(j)}((x,y))\right) \nonumber \\
	& = \frac{1}{m} \sum_{j=\ell}^{k}  \sum_{x,y} \left(\bm{1}_{\Lambda_1(j)}((x,y))(d+1-y)  +  \bm{1}_{\Lambda_2(j)}((x,y)) (d+1-y) -  \bm{1}_{\Lambda_3(j)}((x,y)) (d+1-j)\right) \nonumber \\
	& \utag{a}{=} \frac{1}{m} \sum_{j=\ell}^{k}  \sum_{x,y} \left(\bm{1}_{\Lambda_1(j)}((x,y))(d+1-y)  +  \bm{1}_{\Lambda_2(j)}((x,y)) (d+1-y) -  \bm{1}_{\Lambda_3(j)}((x,y)) (d+2-y)\right) \nonumber \\
	& =  \frac{1}{m}   \sum_{x,y}   \left((d+1-y) \sum_{j=\ell}^{k} \bm{1}_{\Lambda_1(j)}((x,y))  +  (d+1-y)\sum_{j=\ell}^{k} \bm{1}_{\Lambda_2(j)}((x,y)) - (d+2-y)\sum_{j=\ell}^{k} \bm{1}_{\Lambda_3(j)}((x,y)) \right), \label{eq:AppendixF:temple1}
\end{align}
where \uref{a} follows because for fixed $x$ and $y$, $\bm{1}_{\Lambda_3(j)}((x,y)) =1$ only if $j=y-1$.

Since $\Lambda_1(j) \cap \Lambda_1(j') = \emptyset$, $\Lambda_2(j) \cap \Lambda_2(j') = \emptyset$ and $\Lambda_3(j) \cap \Lambda_3(j') = \emptyset$ for $j \neq j'$, we have
\[\sum_{j=\ell}^{k} \bm{1}_{\Lambda_1(j)}((x,y)) =  \bm{1}_{\cup_j \Lambda_{1}(j)}((x,y)),\]
\[\sum_{j=\ell}^{k} \bm{1}_{\Lambda_2(j)}((x,y)) =  \bm{1}_{\cup_j \Lambda_{2}(j)}((x,y)),\]
and
\[\sum_{j=\ell}^{k} \bm{1}_{\Lambda_3(j)}((x,y)) =  \bm{1}_{\cup_j \Lambda_{3}(j)}((x,y)).\]
By examining the set $\cup_j \Lambda_{1}(j)$, we have
\[\bigcup_j \Lambda_{1}(j) =\left\{(x,y): q_{x,y}=1, \ell-1 \leq t_{x,y} \leq k-1  \right\}.\]
Since $\ell-1 \leq t_{x,y} \leq k-1$ always holds, we have
\[\bigcup_j \Lambda_{1}(j) =\left\{(x,y): q_{x,y}=1  \right\},\]
and hence
\[ \bm{1}_{\cup_j \Lambda_{1}(j)}((x,y)) = 
\begin{cases}
	1, & \text{if}~ q_{x,y} =1, \\
	0, & \text{if}~ q_{x,y} =0,\\
\end{cases}
\]
which is equivalent to  
\begin{equation}
\label{eq:AppendixF:lambda_1}
	\bm{1}_{\cup_j \Lambda_{1}(j)}((x,y)) = q_{x,y}.
\end{equation}

Similarly, for the sets $\cup_j \Lambda_{2}(j)$ and $\cup_j \Lambda_{3}(j)$, we have
\[\bigcup_j \Lambda_{2}(j) = \left\{(x,y): q_{x,y}=0, y \neq t_{x,y} + 1 \right\},\]
and
\[\bigcup_j \Lambda_{3}(j) = \left\{(x,y): q_{x,y}=0, t_{x,y} \neq y - 1 \right\}.\]
Note that $\cup_j \Lambda_{2}(j) = \cup_j \Lambda_{3}(j)$. By letting 
\begin{equation}
\label{eq:AppendixF:delta}
	\Delta = \bigcup_j \Lambda_{2}(j) = \bigcup_j \Lambda_{3}(j)
\end{equation}
we have 
\begin{equation}
\label{eq:AppendixF:lambda_2}
	\bm{1}_{\cup_j \Lambda_{2}(j)}((x,y)) = \bm{1}_{\cup_j \Lambda_{3}(j)}((x,y)) = \bm{1}_{\Delta}((x,y)).
\end{equation}
Hence, \eqref{eq:AppendixF:temple1} can be written as
\begin{align}
	& \sum_{j=\ell}^{k} (d+1-j) \bar{\mu}_j  \nonumber \\
	& = \frac{1}{m}   \sum_{x,y}   \left((d+1-y) \bm{1}_{\cup_j \Lambda_{1}(j)}((x,y)) +  (d+1-y)\bm{1}_{\cup_j \Lambda_{2}(j)}((x,y)) - (d+2-y)\bm{1}_{\cup_j \Lambda_{3}(j)}((x,y)) \right) \nonumber \\
	& = \frac{1}{m}  \sum_{x,y}   \left((d+1-y) q_{x,y}  -  \bm{1}_{\Delta}((x,y)) \right). \label{eq:k_less_d_Jan30_temple_2}
\end{align}

Now, focus on $\sum_{j=\ell}^{k} \sum_{i= j}^{k-1} \bar{\nu}_i$ in \eqref{eq:k_less_d_Jan30_temple_1}, and we have
\begin{align}
\sum_{j=\ell}^{k} \sum_{i= j}^{k-1} \bar{\nu}_i 
	& =  \sum_{j=\ell}^{k-1} (j-\ell+1) \bar{\nu}_j \nonumber \\
	& =  \sum_{j=\ell}^{k-1} (j-\ell+1)  \frac{1}{m} \sum_{x,y} \left(\bm{1}_{\Delta_1(j)}((x,y))   +  \bm{1}_{\Delta_2(j)}((x,y)) \right)	\nonumber  \\
	& =  \frac{1}{m}    \sum_{x,y} \left( \sum_{j=\ell}^{k-1} (j-\ell+1)\bm{1}_{\Delta_1(j)}((x,y))   +  \sum_{j=\ell}^{k-1} (j-\ell+1)\bm{1}_{\Delta_2(j)}((x,y)) \right)	\nonumber \\
	& \utag{b}{=}  \frac{1}{m}    \sum_{x,y} \left( \sum_{j=\ell}^{k-1} (y-\ell)\bm{1}_{\Delta_1(j)}((x,y))   +  \sum_{j=\ell}^{k-1} (y-\ell-1)\bm{1}_{\Delta_2(j)}((x,y)) \right)	\nonumber \\
	& = \frac{1}{m}    \sum_{x,y} \left( (y-\ell)\sum_{j=\ell}^{k-1} \bm{1}_{\Delta_1(j)}((x,y))   + (y-\ell-1) \sum_{j=\ell}^{k-1} \bm{1}_{\Delta_2(j)}((x,y)) \right), \nonumber
\end{align}
where \uref{b} follows because for fixed $x$ and $y$, $\bm{1}_{\Delta_1(j)}((x,y)) =1$ only if $y=j+1$ and $\bm{1}_{\Delta_2(j)}((x,y)) =1$ only if $y=j+2$.

Since $\Delta_1(j) \cap \Delta_1(j') = \emptyset$ and $\Delta_2(j) \cap \Delta_2(j') = \emptyset$ for $j \neq j'$, we have
\[\sum_{j=\ell}^{k-1} \bm{1}_{\Delta_1(j)}((x,y)) =  \bm{1}_{\cup_j \Delta_{1}(j)}((x,y)),\]
and
\[\sum_{j=\ell}^{k-1} \bm{1}_{\Delta_2(j)}((x,y)) =  \bm{1}_{\cup_j \Delta_{2}(j)}((x,y)).\]
By examining the sets $\cup_j \Delta_{1}(j)$ and $\cup_j \Delta_{2}(j)$, we have
\[\bigcup_j \Delta_{1}(j) =\left\{(x,y): q_{x,y}=0, t_{x,y} = y-1 \right\},\]
and 
\[\bigcup_j \Delta_{2}(j) =\left\{(x,y): q_{x,y}=0, t_{x,y} \neq y-1,   \ell+2 \leq y \leq k+1 \right\}.\]
Since $\ell+1 \leq y \leq k$, $\cup_j \Delta_{2}(j)$ can be written as $\left\{(x,y): q_{x,y}=0, t_{x,y} \neq y-1, y \neq \ell+1 \right\}$. 
Note that if $y=\ell+1$, then $t_{x,y} = \ell = y-1$, so we know that $t_{x,y} \neq y-1$ implies that $y \neq \ell+1$. Hence, $\cup_j \Delta_{2}(j)$ can be written as
\[\bigcup_j \Delta_{2}(j) =\left\{(x,y): q_{x,y}=0, t_{x,y} \neq y-1 \right\}.\]
We can easily see that $\cup_j \Delta_{2}(j) = \Delta$, where $\Delta$ is defined in \eqref{eq:AppendixF:delta}.

By letting 
\[\Delta' = \bigcup_j \Delta_{1}(j),\]
we have
\[\sum_{j=\ell}^{k-1} \bm{1}_{\Delta_1(j)}((x,y)) =  \bm{1}_{\cup_j \Delta_{1}(j)}((x,y)) =\bm{1}_{ \Delta'}((x,y)).\]
Also,
\[\sum_{j=\ell}^{k-1} \bm{1}_{\Delta_2(j)}((x,y)) =  \bm{1}_{\cup_j \Delta_{2}(j)}((x,y)) =\bm{1}_{ \Delta}((x,y)).\]
Hence, we obtain that
\begin{align}
\sum_{j=\ell}^{k} \sum_{i= j}^{k-1} \bar{\nu}_i 
	& = \frac{1}{m}    \sum_{x,y} \left( (y-\ell)\sum_{j=\ell}^{k-1} \bm{1}_{\Delta_1(j)}((x,y))   + (y-\ell-1) \sum_{j=\ell}^{k-1} \bm{1}_{\Delta_2(j)}((x,y)) \right) \nonumber \\
	& = \frac{1}{m} \sum_{x,y} \left((y-\ell) \bm{1}_{\Delta'}((x,y)) +  (y-\ell-1) \bm{1}_{\Delta}((x,y)) \right). \label{eq:k_less_d_Jan30_temple_3}
\end{align}

By substituting \eqref{eq:k_less_d_Jan30_temple_2} and \eqref{eq:k_less_d_Jan30_temple_3} in \eqref{eq:k_less_d_Jan30_temple_1}, we obtain
\begin{align*}
	& \frac{\Gamma_{k,d,\ell} }{d}- \frac{d+1-\ell}{\ell} \bar{\mu}_\ell- \frac{1}{\ell}\left( \sum_{j=\ell+1}^{k} \delta_j \right) \\
	& = \frac{\Gamma_{k,d,\ell}}{d}-  \frac{1}{\ell} \left( \sum_{j=\ell}^{k-1} \bar{\nu}_j \right) - \frac{1}{\ell} \sum_{j=\ell}^{k} (d+1-j)\bar{\mu}_j + \frac{1}{\ell} \sum_{j=\ell}^{k}\sum_{i= j}^{k-1} \bar{\nu}_i \\
	& = \utag{c}{=} \frac{\Gamma_{k,d,\ell}}{d}-  \frac{1}{\ell} \left( \sum_{j=\ell}^{k-1} \bar{\nu}_j \right) -  \frac{1}{m\ell} \sum_{x,y}\left(  (d+1-y)q_{x,y}   -   \bm{1}_{\Delta}((x,y)) \right) \\
	& ~~ + \frac{1}{m\ell} \sum_{x,y} \left((y-\ell) \bm{1}_{\Delta'}((x,y)) +  (y-\ell-1) \bm{1}_{\Delta}((x,y)) \right) \\
	& = \frac{\Gamma_{k,d,\ell}}{d}-  \frac{1}{\ell} \left( \sum_{j=\ell}^{k-1} \bar{\nu}_j \right) -  \frac{1}{m\ell} \sum_{x,y}\left(  (d+1-y)q_{x,y} - (y-\ell) \left( \bm{1}_{\Delta}((x,y)) +  \bm{1}_{\Delta'}((x,y)) \right) \right) \\
	& \utag{d}{=} \frac{\Gamma_{k,d,\ell}}{d}-  \frac{1}{\ell} \left( \sum_{j=\ell}^{k-1} \bar{\nu}_j \right) -  \frac{1}{m\ell} \sum_{x,y}\left(  (d+1-y)q_{x,y} - (y-\ell) (1-q_{x,y}) \right) \\
	& = \frac{\Gamma_{k,d,\ell}}{d}-  \frac{1}{\ell} \left( \sum_{j=\ell}^{k-1} \bar{\nu}_j \right) -  \frac{d+1-\ell}{m\ell} \sum_{x,y}q_{x,y} + \frac{1}{\ell} \sum_{y=\ell+1}^{k} \left(y-\ell  \right).
\end{align*}
where \uref{c} follows from \eqref{eq:k_less_d_Jan30_temple_2} and \eqref{eq:k_less_d_Jan30_temple_3}, and \uref{d} is justified because $\Delta$ and $\Delta'$ are disjoint and $\Delta \cup \Delta' = \left\{(x,y): q_{x,y} =0 \right\}$.

Since we know from  \eqref{eq:k_less_d_W_bound} and \eqref{eq:k_less_d_fQ}
that $m  \sum_{j=\ell}^{k-1} \bar{\nu}_j$ corresponds to the total number of zeros in the matrix $Q$, we have
\[m  \sum_{j=\ell}^{k-1} \bar{\nu}_j = \sum_{x,y} (1-q_{x,y}),\]
and so
\[\sum_{x,y} q_{x,y} = m(k-\ell) - m  \sum_{j=\ell}^{k-1} \bar{\nu}_j \utag{e}{=} m(k-\ell) - m  \frac{\Gamma_{k,d,\ell} }{d},\]
where \uref{e} follows from \eqref{eq:k_less_d_Condition1}.
Finally, we obtain that
\begin{align*}
	& \frac{\Gamma_{k,d,\ell} }{d}- \frac{d+1-\ell}{\ell} \bar{\mu}_\ell- \frac{1}{\ell}\left( \sum_{j=\ell+1}^{k} \delta_j \right) \\
	& = \frac{\Gamma_{k,d,\ell}}{d}-  \frac{1}{\ell} \left( \sum_{j=\ell}^{k-1} \bar{\nu}_j \right) -  \frac{d+1-\ell}{m\ell} \sum_{x,y}q_{x,y} + \frac{1}{\ell} \sum_{y=\ell+1}^{k} \left(y-\ell  \right) \\
	& = \frac{\Gamma_{k,d,\ell}}{d}-  \frac{1}{\ell} \frac{\Gamma_{k,d,\ell} }{d} -  \frac{d+1-\ell}{m\ell} \left(m(k-\ell) - m  \frac{\Gamma_{k,d,\ell} }{d} \right) + \frac{1}{\ell} \sum_{y=\ell+1}^{k} \left(y-\ell  \right) \\
	& = \frac{\Gamma_{k,d,\ell}}{\ell} -  \frac{(d+1-\ell)(k-\ell)}{\ell}  + \frac{1}{\ell} \sum_{y=\ell+1}^{k} \left(y-\ell  \right) \\
	& = \frac{1}{\ell} \left( \sum_{i=\ell+1}^{k} (d+1-i)  - \sum_{i=\ell+1}^{k}(d+1-\ell) +  \sum_{i=\ell+1}^{k} \left(i-\ell  \right)    \right) \\
	& = 0,
\end{align*}
which completes the proof.

\section{Proof of Proposition~\ref{Proposition:delta}}
\label{Appendix:proposition:delta}
For $j=\ell+1,\ldots,k-1$, $\bar{\delta}_j$ can be written as
\begin{align}
	\bar{\delta}_j 
	& = (d+1-j)\bar{\mu}_j - \sum_{i= j}^{k-1} \bar{\nu}_i  \nonumber \\
	& = (d+1-j)\left( 1- z_{\ell+1} - z_{j+1} + \frac{c_j}{d+1-j}   \right) 
	- \sum_{i= j}^{k-2} z_{i+2} \nonumber \\
	& = (d+1-j)\left( 1- z_{\ell+1} - z_{j+1} \right)+ \sum_{i=j+1}^k (d+1-i) \left(z_{\ell+i-j} - z_{\ell+i-j+1}\right) - \sum_{i= j}^{k-2} z_{i+2}  \nonumber \\
	& = (d+1-j)\left( 1- z_{\ell+1} - z_{j+1} \right) - \sum_{i= j}^{k-2} z_{i+2} + (d-j) z_{\ell+1} - \sum_{i=j+2}^{k}  z_{\ell+i-j}    -  (d+1-k)  z_{\ell+k-j+1} \nonumber \\ 
	& = (d+1-j)\left(1 - z_{j+1} \right) -  (d+1-k)  z_{\ell+k-j+1} - \sum_{i= j+2}^{k} z_{i}  - \sum_{i=j+1}^{k}  z_{\ell+i-j} \nonumber \\ 
	& =  (d+1-j)\left(1 - z_{j+1} \right) -  (d+1-k)  z_{\ell+k-j+1} - \sum_{i= j+2}^{k} z_{i}  - \sum_{i=\ell+1}^{\ell+k-j}  z_{i}. \label{eq:k_less_d_delta}
\end{align}

Recall that we need to prove that for any $(k,d,\ell) \in P_s$, $\bar{\delta}_j \geq 0$ for $j=\ell+1,\ldots,k-1$. Since we consider $\ell \leq k-2$, $(k,d,\ell) \in P_s$ implies that 
\begin{equation*}
  \begin{cases}
  	d(d-\ell-1) - \frac{1}{2}(2d-k-\ell+1)(2d+k-3\ell-5) \geq 0, & \ell \leq k-4, \\
  	k \geq \frac{1}{3}(d+8), & \ell= k - 3, \\
  	k \geq \frac{1}{4}(d+7), & \ell= k - 2.			
  \end{cases}
\end{equation*} 
First, we discuss the cases $\ell= k - 2$ and $\ell= k - 3$. When $\ell= k - 2$, we only need to verify that $\bar{\delta}_{\ell+1} \geq 0$ provided that $k \geq \frac{1}{4}(d+7)$. From \eqref{eq:k_less_d_delta}, we have 
\begin{align*}
	\bar{\delta}_{\ell+1} 
	& = (d-\ell)\left(1 - z_k \right) -  (d+1-k)  z_{k}   -z_{\ell+1} \\
	& = (d-\ell)  -  (2d-k-\ell+1)  z_{k}   - z_{\ell+1} \\
	& = (d-\ell)  -  (2d-k-\ell)  z_{k}   - \left( z_{\ell+1} + z_{k} \right) \\
	& \utag{a}{=} (d-\ell)  -  (2d-k-\ell)  z_{k}   - \frac{2d-k-\ell+1}{d},
\end{align*}
where \uref{a} follows from \eqref{eq:S3:z_plus}. 

Since we know from \eqref{eq:k_less_d_zj} that
\begin{equation}
\label{eq:Appendix_G:z_k}
	z_{k} =
\begin{cases}
\frac{d-k-\ell+1}{d},  & \ell < d-k+1, \\
	0    ,             & \ell \geq d-k+1, 
\end{cases}
\end{equation}
we have
\begin{equation}
\label{eq:Appendix_G:minus2_temple1}
\bar{\delta}_{\ell+1} = 
	\begin{cases}
		(d-\ell)  -    \frac{2(d-k+1)(d-k-\ell+1)}{d}   - \frac{2d-k-\ell+1}{d} , &   \ell < d-k+1, \\
		(d-\ell)     - \frac{2d-k-\ell+1}{d}    ,                               &   \ell \geq d-k+1.
	\end{cases}
\end{equation}
By re-arranging \eqref{eq:Appendix_G:minus2_temple1} and substituting $\ell=k-2$, we obtain that
\begin{equation}
\bar{\delta}_{\ell+1} = 
	\begin{cases}
		\frac{d(d-k+2)-2(d-k+1)(d-2k+3)-(2d-2k+3)}{d}  , &   k < \frac{1}{2}(d+3), \\
		\frac{d(d-k+2)-(2d-2k+3)}{d}   ,                               &   k \geq \frac{1}{2}(d+3).
	\end{cases}
\end{equation}
We can see that 
\[\frac{d(d-k+2)-(2d-2k+3)}{d} =  \frac{(d-2)(d-k+2)+1}{d} \geq 0\]
because $d \geq k = \ell+2 \geq 3$.
Then we only need to consider $k < \frac{1}{2}(d+3)$, and it remains to show that $g_1(k) \geq 0$ provided that $k \geq \frac{1}{4}(d+7)$ and $k < \frac{1}{2}(d+3)$, 
where
\[g_1(k) = d(d-k+2)-2(d-k+1)(d-2k+3)-(2d-2k+3).\]

For the quadratic equation $g_1(k) =0$, the discriminant is $9d^2-8d \geq 0$, so the two roots are given by 
$k_1= \frac{1}{8}\left(5d-\sqrt{9d^2-8d}+12\right)$  and $ k_2=\frac{1}{8}\left(5d + \sqrt{9d^2-8d}+12\right)$.
Since the leading coefficient of $g_1(k)$ is negative, we see that $g_1(k) \geq 0$ if and only if $k_1 \leq k \leq k_2$. 
Hence, to prove that $g_1(k) \geq 0$ provided that $k \geq \frac{1}{4}(d+7)$ and $k < \frac{1}{2}(d+3)$, it suffices to have $k_1 \leq \frac{1}{4}(d+7)$ and $k_2 \geq \frac{1}{2}(d+3)$, which can be shown by considering
\[k_2=\frac{1}{8}\left(5d + \sqrt{9d^2-8d}+12\right) \geq \frac{1}{8}\left(5d +12\right) \geq \frac{1}{2}(d+3),\]
and
\begin{align*}
	k_1  
	& =\frac{1}{4}(d+7) + \frac{1}{8}\left(3d-2 -\sqrt{9d^2-8d}\right) \\
	& = \frac{1}{4}(d+7) + \frac{1}{8}\left( \sqrt{(3d-2)^2} -\sqrt{9d^2-8d}\right) \\
	& = \frac{1}{4}(d+7) + \frac{1}{8}\left( \sqrt{9d^2-8d - 4(d-1)} -\sqrt{9d^2-8d}\right) \\
	& \leq \frac{1}{4}(d+7).
\end{align*}
This completes the proof.

When $\ell=k-3$, we need to verify that $\bar{\delta}_{\ell+1} \geq 0$ and $\bar{\delta}_{\ell+2} \geq 0$ provided that $k \geq \frac{1}{3}(d+8)$. First, focus on $\bar{\delta}_{\ell+1}$. From \eqref{eq:k_less_d_delta}, we have
\begin{align*}
	\bar{\delta}_{\ell+1} 
	& = (d-\ell)\left(1 - z_{\ell+2} \right) -  (d+1-k)  z_{k} -  z_{k}  -  z_{\ell+1} - z_{\ell+2} \\
	& = (d-\ell) -  (d-\ell+1) z_{\ell+2} -  (d+1-k)  z_{k} -   \left(z_{\ell+1} + z_{k}\right) \\
	& \utag{b}{=}  (d-\ell) -  (d-\ell+1) z_{\ell+2} -  (d+1-k)  z_{k} -\frac{2d-k-\ell+1}{d} \\
	& \utag{c}{=}  (d-\ell) -   \frac{(d-\ell+1)(2d-k-\ell+1)}{2d} -  (d+1-k)  z_{k} -\frac{2d-k-\ell+1}{d},
\end{align*}
where \uref{b} follows from \eqref{eq:S3:z_plus} and \uref{c} follows from \eqref{eq:k_less_d_zj}. Then from \eqref{eq:Appendix_G:z_k}, we have
\begin{equation}
\label{eq:Appendix_G:minus3_temple1}
\bar{\delta}_{\ell+1} = 
	\begin{cases}
		(d-\ell) -   \frac{(d-\ell+1)(2d-k-\ell+1)}{2d} -    \frac{(d+1-k)(d-k-\ell+1)}{d} -\frac{2d-k-\ell+1}{d} , &   \ell < d-k+1, \\
		(d-\ell) -   \frac{(d-\ell+1)(2d-k-\ell+1)}{2d}  -\frac{2d-k-\ell+1}{d}    ,                               &   \ell \geq d-k+1.
	\end{cases}
\end{equation}
If $\ell < d-k+1$, we have
\begin{align*}
	\bar{\delta}_{\ell+1} & = (d-\ell) -   \frac{(d-\ell+1)(2d-k-\ell+1)}{2d} -    \frac{(d+1-k)(d-k-\ell+1)}{d} -\frac{2d-k-\ell+1}{d}  \\
	& \utag{d}{=} (d-k+3) - \frac{(d-k+4)(d-k+2)}{d} -    \frac{(d+1-k)(d-2k+4)}{d} -\frac{2(d-k+2)}{d} \\
	& = \frac{(3k-d-8)(d-k+2)}{d} \\
	& \utag{e}{\geq} 0,
\end{align*}
where \uref{d} follows from substituting $\ell= k-3$ and \uref{e} follows from our assumption that $k \geq \frac{1}{3}(d+8)$. If $\ell \geq d-k+1$, we have
\begin{align*}
	\bar{\delta}_{\ell+1} & = (d-\ell) -   \frac{(d-\ell+1)(2d-k-\ell+1)}{2d}  -\frac{2d-k-\ell+1}{d} \\
	& \utag{f}{\geq}  \frac{(d-\ell)(2d-k-\ell+1)}{d}-   \frac{(d-\ell+1)(2d-k-\ell+1)}{2d}  -\frac{2d-k-\ell+1}{d} \\
	& =  \frac{(d-\ell-3)  (2d-k-\ell+1)}{2d} \\
	& = \frac{(d-k)  (2d-k-\ell+1)}{2d} \\
	& \geq 0,
\end{align*}
where \uref{f} follows because $\ell \geq d-k+1$ implies $\frac{2d-k-\ell+1}{d} \leq 1$. Thus, we obtain that $\bar{\delta}_{\ell+1} \geq 0$ provided that $k \geq \frac{1}{3}(d+8)$.

For $\bar{\delta}_{\ell+2}$, we obtain from \eqref{eq:k_less_d_delta} that 
\begin{align*}
	\bar{\delta}_{\ell+2} 
	& = (d-\ell-1)\left(1 - z_{k} \right) -  (d+1-k)  z_{\ell+2}   - z_{\ell+1} \\
	& = (d-\ell-1) - (d-\ell-2)z_{k} -  (d+1-k)  z_{\ell+2}    - \left(z_{\ell+1} + z_k\right) \\
	& \utag{g}{=} (d-\ell-1) - (d-\ell-2)z_{k} -  (d+1-k)  z_{\ell+2}   - \frac{2d-k-\ell+1}{d} \\
	& \utag{h}{=} (d-\ell-1) - (d-\ell-2)z_{k}  -    \frac{(d+1-k)(2d-k-\ell+1)}{2d}   - \frac{2d-k-\ell+1}{d},
\end{align*}
where \uref{g} follows from \eqref{eq:S3:z_plus} and \uref{h} follows from \eqref{eq:k_less_d_zj}. Then from \eqref{eq:Appendix_G:z_k}, we have
\begin{equation}
\label{eq:Appendix_G:minus3_temple2}
\bar{\delta}_{\ell+2} = 
	\begin{cases}
		(d-\ell-1) -\frac{ (d-\ell-2)(d-k-\ell+1)}{d}  -    \frac{(d+1-k)(2d-k-\ell+1)}{2d}   - \frac{2d-k-\ell+1}{d}, &   \ell < d-k+1, \\
		(d-\ell-1)  -    \frac{(d+1-k)(2d-k-\ell+1)}{2d}   - \frac{2d-k-\ell+1}{d},                               &   \ell \geq d-k+1.
	\end{cases}
\end{equation}
If $\ell \geq d-k+1$, we have
\begin{align*}
	\bar{\delta}_{\ell+2} & = (d-\ell-1) - \frac{(d+1-k)(2d-k-\ell+1)}{2d}   - \frac{2d-k-\ell+1}{d} \\
	& \utag{i}{\geq} \frac{ (d-\ell-1)(2d-k-\ell+1)}{d} - \frac{(d+1-k)(2d-k-\ell+1)}{2d}   - \frac{2d-k-\ell+1}{d} \\
	& = \frac{(d+k-2\ell-5)  (2d-k-\ell+1)}{2d}  \\
	& = \frac{(d-k+1)  (2d-k-\ell+1)}{2d} \\
	& \geq 0,
\end{align*}
where \uref{i} follows because $\ell \geq d-k+1$ implies $\frac{2d-k-\ell+1}{d} \leq 1$. Hence it remains to show that $\bar{\delta}_{\ell+2} \geq 0$ provided that $ \ell < d-k+1$ and $k \geq \frac{1}{3}(d+8)$. To see this, by substituting $\ell = k-3$, the first case of \eqref{eq:Appendix_G:minus3_temple2} can be written as 
\[
\bar{\delta}_{\ell+2} = (d-k+2) -\frac{ (d-k+1)(d-2k+4)}{d}  -    \frac{(d+1-k)(d-k+2)}{d}   - \frac{2(d-k+2)}{d}, ~~ k < \frac{1}{2}(d+4).
\]
Let 
\[g_2(k) = (d-k+2) -\frac{ (d-k+1)(d-2k+4)}{d}  -    \frac{(d+1-k)(d-k+2)}{d}   - \frac{2(d-k+2)}{d}.\]
Then we need to prove that $g_2(k) \geq 0$ provided that $k \geq \frac{1}{3}(d+8)$ and $k < \frac{1}{2}(d+4)$.
By rearranging the terms in $g_2(k)$, we have 
\begin{align*}
	g_2(k)
	& =  \frac{(d-k+2)(3k-d-8)+(2d-3k+6)}{d}.
\end{align*}
For the quadratic equation $g_2(k) =0$, the discriminant is $4d^2-8d +1 \geq 0$, so the two roots of $g_2(k) =0$ are given by
\[k_1 = \frac{1}{6} \left(4d+11 - \sqrt{4d^2-8d+1} \right), \]
and
\[k_2 = \frac{1}{6} \left(4d+11 + \sqrt{4d^2-8d+1} \right). \]
Since the leading coefficient of $g_2(k)$ is negative, we see that $g_2(k) \geq 0$ if and only if $k_1 \leq k \leq k_2$. 
Hence, to prove that $g_2(k) \geq 0$ provided that $k \geq \frac{1}{3}(d+8)$ and $k < \frac{1}{2}(d+4)$, we only need to show that $k_1 \leq \frac{1}{3}(d+8)$ and $k_2 \geq \frac{1}{2}(d+4)$. Consider 
\[k_2 \geq \frac{1}{6} \left(4d+11 \right) = \frac{1}{2}(d+4) + \frac{1}{6}(d-4) \utag{j}{\geq} \frac{1}{2}(d+4),\]
and 
\begin{align*}
	k_1 & = \frac{1}{6} \left(4d+11 - \sqrt{4d^2-8d+1} \right) \\
		& = \frac{1}{3}(d+8)+ \frac{1}{6} \left(2d-11 - \sqrt{4d^2-8d+1} \right) \\
		& \leq  \frac{1}{3}(d+8)+ \frac{1}{6} \left(\sqrt{(2d-11)^2} - \sqrt{4d^2-8d+1} \right) \\
		& = \frac{1}{3}(d+8)+ \frac{1}{6} \left(\sqrt{4d^2-8d+1-12(3d-10)} - \sqrt{4d^2-8d+1} \right) \\
		& \utag{k}{\leq} \frac{1}{3}(d+8),
\end{align*}
where \uref{j} and \uref{k} are justified because we have $d \geq k \geq \ell+3 \geq 4$. The proof is completed.

Now, we consider the case $\ell \leq k-4$. We need to show that for any given $(k,d,\ell)$, where $\ell \leq k-4$, if $g(\ell) \geq 0$ (c.f.\eqref{eq:main_resulst:g}), then $\bar{\delta}_j \geq 0$ for $j=\ell+1,\ldots,k-1$.

First, we claim that if $g(\ell) \geq 0$, then $\ell \geq d-k+1$. To see this, recall that we know 
from the discussion in Section~\ref{Sec:MainResults} that $g(\ell) \geq 0$ if and only if $\ell_1 \leq \ell \leq \ell_2$, where
\[\ell_1 = \frac{1}{3}\left(3d-k-1 - \sqrt{3(d-k)^2+12(d-4)+(k-8)^2} \right),\]
and
\[\ell_2 = \frac{1}{3}\left(3d-k-1 + \sqrt{3(d-k)^2+12(d-4)+(k-8)^2} \right).\]
Clearly, to justify the claim, we only need to show that $\ell_1 \geq d-k+1$. Consider
\begin{align*}
	\ell_1 & = \frac{1}{3}\left(3d-k-1 - \sqrt{3(d-k)^2+12(d-4)+(k-8)^2} \right) \\
	& = (d-k+1) + \frac{1}{3} \left(2k-4 - \sqrt{3(d-k)^2+12(d-4)+(k-8)^2} \right) \\
	& = (d-k+1) + \frac{1}{3} \left(\sqrt{(2k-4)^2} - \sqrt{3(d-k)^2+12(d-4)+(k-8)^2} \right) \\
	& = (d-k+1) + \frac{1}{3}  \left(\sqrt{(2k-4)^2} - \sqrt{(2k-4)^2 + 3d(d-2k+4)} \right). 
\end{align*}
We see that $\ell_1 \geq d-k+1$ if and only if $d-2k+4 \leq 0$. Hence, it remains to show that if $g(\ell) \geq 0$ for some $\ell \leq k-4$, then $d-2k+4 \leq 0$. 
Since we know from Section~\ref{Sec:MainResults} that if $g(\ell) \geq 0$ for some $\ell \leq k-4$, then $g(\ell') \geq 0$ for any $\ell'$ such that $\ell \leq \ell' \leq k-4$. In particular, we have $g(k-4) \geq 0$. Hence, we have
\[g(k-4) = d(d-k+3) - \frac{1}{2}(2d-2k+5)(2d-2k+7) = (2k-d-6) (d-k+3)  + \frac{1}{2} \geq 0.\]
Since $k$ and $d$ are integers and $d \geq k$, we must have $2k-d-6 \geq 0$, which implies that $d-2k+4 \leq 0$. Thus, we have proved the claim that if $g(\ell) \geq 0$, then $\ell \geq d-k+1$.

Under the condition $\ell \geq d-k+1$, \eqref{eq:k_less_d_zj} can be written as
\begin{equation}
\label{eq:appendix:leq_d}
	z_j=
	\begin{cases}
		\frac{2d-k-\ell+1}{d},    & j=\ell+1,  \\
		\frac{2d-k-\ell+1}{2d},                   & j=\ell+2,\ldots,k-1, \\
		0,                      & j=k.
	\end{cases}
\end{equation}
Now, we write $\bar{\delta}_j$ explicitly for all values of $j$. Recall from \eqref{eq:k_less_d_delta} that
\begin{align*}
	\bar{\delta}_j 
	& =  (d+1-j)\left(1 - z_{j+1} \right) -  (d+1-k)  z_{\ell+k-j+1} - \sum_{i= j+2}^{k} z_{i}  - \sum_{i=\ell+1}^{\ell+k-j}  z_{i}. 
\end{align*}
If $j=\ell+1$, we have 
\begin{align*}
	\bar{\delta}_{\ell+1} 
	&  =  (d-\ell)\left(1 - z_{\ell+2} \right) -  (d+1-k)  z_{k} - \sum_{i= \ell+3}^{k} z_{i}  - \sum_{i=\ell+1}^{k-1}  z_{i}  \\
	&  =  (d-\ell)\left(1 - z_{\ell+2} \right) -  (d+1-k)  z_{k} - \sum_{i= \ell+3}^{k-1} z_{i}  - z_{\ell+1} -  \sum_{i=\ell+2}^{k-1}  z_{i}  \\
	& \utag{l}{=} (d-\ell)\left(1 - \frac{2d-k-\ell+1}{2d} \right) - 0 -  \frac{(k-\ell-3)(2d-k-\ell+1)}{2d} \\
	& ~~   - \frac{2d-k-\ell+1}{d} - \frac{(k-\ell-2)(2d-k-\ell+1)}{2d}\\
	& =  (d-\ell) - \frac{(d+2k-3\ell-3)(2d-k-\ell+1)}{2d} \\
	& =  (d-\ell-1)  - \frac{(2d+k-3\ell-5)(2d-k-\ell+1)}{2d} +
	\frac{(d-k-2)(2d-k-\ell+1)}{2d} +1 \\
	& = \frac{g(\ell)}{d} +
	\frac{(d-k-2)(2d-k-\ell+1)}{2d} +1 \\
	& \utag{m}{\geq} \frac{g(\ell)}{d} +
	\frac{(d-k-2)(2d-k-\ell+1)}{2d} + \frac{2d-k-\ell+1}{d} \\
	& = \frac{g(\ell)}{d} +
	\frac{(d-k)(2d-k-\ell+1)}{2d}  \\
	& \geq \frac{g(\ell)}{d} \\
	& \utag{n}{\geq} 0,
\end{align*}
where  \uref{l} follows from \eqref{eq:appendix:leq_d}, \uref{m} follows because $\ell \geq d-k+1$ implies $\frac{2d-k-\ell+1}{d} \leq 1$, and  \uref{n} follows from the assumption that $g(\ell) \geq 0$. 

If $j=k-1$, we have
\begin{align*}
	\bar{\delta}_{k-1} 
	& =  (d -k +2)\left(1 - z_{k} \right) -  (d+1-k)  z_{\ell+2}  -  z_{\ell+1}  \\
	& \utag{o}{=}  (d -k +2) - \frac{(d+1-k)(2d-k-\ell+1)}{2d}    - \frac{2d-k-\ell+1}{d} \\
	& \utag{p}{\geq}  \frac{(2d-k-\ell+1)(d-k +2)}{d} - \frac{(d+1-k)(2d-k-\ell+1)}{2d}    - \frac{2d-k-\ell+1}{d}  \\
	& = \frac{(2d-k-\ell+1)(d-k +1)}{2d} \\
	& \geq 0,
\end{align*}
where  \uref{o} follows from \eqref{eq:appendix:leq_d}, and \uref{p} follows because $\ell \geq d-k+1$ implies $\frac{2d-k-\ell+1}{d} \leq 1$.

For $j=\ell+2,\ldots,k-2$, we have
\begin{align*}
	\bar{\delta}_j 
	& =  (d+1-j)\left(1 - z_{j+1} \right) -  (d+1-k)  z_{\ell+k-j+1} - \sum_{i= j+2}^{k} z_{i}  - \sum_{i=\ell+1}^{\ell+k-j}  z_{i}  \\
	& = (d+1-j)\left(1 - z_{j+1} \right) -  (d+1-k)  z_{\ell+k-j+1} - \sum_{i= j+2}^{k-1} z_{i}  - z_{\ell+1} - \sum_{i=\ell+2}^{\ell+k-j}  z_{i} \\
	& = (d+1-j)\left(1 - \frac{2d-k-\ell+1}{2d} \right) -    \frac{(d+1-k)(2d-k-\ell+1)}{2d} \\
	& ~~ -  \frac{(k-j-2)(2d-k-\ell+1)}{2d}  - \frac{2d-k-\ell+1}{d}-  \frac{(k-j-1)(2d-k-\ell+1)}{2d} \\
	& = (d+1-j) - \frac{(2d+k-3j+1)(2d-k-\ell+1)}{2d}.
\end{align*}
Then we consider the following two cases. 
\begin{itemize}
	\item If $d \geq \frac{3}{2}(2d-k-\ell+1)$, we have
		\begin{align*}
			\bar{\delta}_j 
			& = (d+1-j) - \frac{(2d+k-3j+1)(2d-k-\ell+1)}{2d} \\
			& = \frac{3(2d-k-\ell+1) - 2d }{2d} j + (d+1) - \frac{(2d+k+1)(2d-k-\ell+1)}{2d}  \\
			& \geq \frac{3(2d-k-\ell+1) - 2d}{2d} (k-2) + (d+1) - \frac{(2d+k+1)(2d-k-\ell+1)}{2d} \\
			& = (d-k+3) - \frac{1}{2d}(2d-k-\ell+1)(2d-2k+7) \\
			& \utag{q}{\geq} \frac{3}{2d}(2d-k-\ell+1)(d-k+3) - \frac{1}{2d}(2d-k-\ell+1)(2d-2k+7) \\
			& = \frac{1}{2d}(2d-k-\ell+1)(d-k+2)  \\
			& \geq 0,
		\end{align*}
		where \uref{q} follows because $d \geq \frac{3}{2}(2d-k-\ell+1)$ implies that $\frac{3}{2d}(2d-k-\ell+1) \leq 1$.
	\item If $d < \frac{3}{2}(2d-k-\ell+1)$, we have 
			\begin{align*}
			\bar{\delta}_j 
			& = (d+1-j) - \frac{(2d+k-3j+1)(2d-k-\ell+1)}{2d} \\
			& = \frac{3(2d-k-\ell+1) - 2d }{2d} j + (d+1) - \frac{(2d+k+1)(2d-k-\ell+1)}{2d}  \\
			& \geq \frac{3(2d-k-\ell+1) - 2d}{2d} (\ell+2) + (d+1) - \frac{(2d+k+1)(2d-k-\ell+1)}{2d} \\
			& = (d-\ell-1) - \frac{(2d+k-3\ell-5)(2d-k-\ell+1)}{2d} \\
			& = \frac{g(\ell)}{d}  \\
			& \geq 0.
		\end{align*}
\end{itemize}
Combining two two cases, we have $\bar{\delta}_j \geq 0$ for $j=\ell+2,\ldots,k-2$. Therefore, we have shown that $\bar{\delta}_j \geq 0$ for $j=\ell+1,\ldots,k-1$.


\bibliographystyle{IEEEtran}
\bibliography{ref}
\end{document}